\newcommand{\cO}{\mathcal{O}}
\newcommand{\cX}{\mathcal{X}}
\newcommand{\cD}{\mathcal{D}}
\newcommand{\cA}{\mathcal{A}}
\newcommand{\sX}{\mathsf X}
\newcommand{\sY}{\mathsf Y}
\newcommand{\bbE}{\mathbb{E}}
\newcommand{\bbP}{\mathbb{P}}
\newcommand{\bbR}{\mathbb{R}}
\newcommand{\bbV}{\mathbb{V}}
\DeclareMathOperator*{\argmax}{arg\,max}
\newcounter{hypA}
\newenvironment{hypA}{\refstepcounter{hypA}\begin{itemize}
  \item[({\bf A\arabic{hypA}})]}{\end{itemize}}
\newcounter{hypB}
\newcounter{hypD}
\setlist[enumerate]{leftmargin=.5in}
\setlist[itemize]{leftmargin=.5in}
\crefname{hypothesis}{Hypothesis}{Hypotheses}
\title{Bayesian Deep Learning with
Multilevel Trace-class  
Neural Networks
\thanks{Submitted to the editors DATE.
}}
\author{Neil K. Chada\thanks{Department of Mathematics, City University of Hong Kong, 83 Tat Chee Avenue, Kowloon Tong, Kowloon, Hong Kong SAR
  (\email{neilchada123@gmail.com}).}
\and Ajay Jasra\thanks{School of Data Science, The Chinese University of Hong Kong,  Shenzhen, CN
  (\email{ajayjasra@cuhk.edu.cn}).}
  \and Kody J. H. Law\thanks{School of Mathematics, University of Manchester, Manchester, M13 9P, UK
  (\email{kody.law@manchester.ac.uk})}
\and Sumeetpal S. Singh\thanks{NIASRA, School of Mathematics and Applied Statistics, University of Wollongong, NSW 2522, Australia
  (\email{sumeetpals@uow.edu.au}).}}
\begin{document}

\maketitle

\begin{abstract}
\textcolor{black}{In this article we consider Bayesian inference associated to deep neural networks (DNNs) and in particular, trace-class neural network (TNN) priors  \cite{sell2020dimension} which can be preferable to traditional DNNs because they (a) are identifiable and (b) possess desirable convergence properties. 
TNN priors are defined on functions with infinitely many hidden units, and have strongly 
convergent Karhunen-Loeve-type approximations with finitely many hidden units.  
A practical hurdle is that the Bayesian solution is computationally demanding, requiring 
simulation methods, so approaches to drive down the complexity are needed. 
In this paper,  we leverage the strong convergence of TNN
in order to apply Multilevel Monte Carlo (MLMC) to these models. 
In particular, an MLMC method that was introduced in 
\cite{beskos2018multilevel} 
is used to approximate posterior expectations of Bayesian TNN models with optimal computational complexity, 
and this is mathematically proved. 
The results are verified with several 
numerical experiments on model problems arising in machine learning, including 
\textcolor{black}{some toy regression and classification models, 
MNIST image classification, and a challenging reinforcement learning problem.
Furthermore, we illustrate the practical utility of the method on MNIST as well as IMDb sentiment classification.}}
\end{abstract}
\begin{keywords}
 Deep Neural Networks, Multilevel Monte Carlo, Sequential Monte Carlo, Trace-Class, Reinforcement Learning
\end{keywords}

\begin{MSCcodes}
 62M20, 62M45, 62F15, 93E35
\end{MSCcodes}

\section{Introduction}

\textcolor{black}{
Deep neural networks (DNNs) \cite{dnn}
are a popular and powerful parametric model class which can be used for 
solving a variety of machine learning problems. 
These architectures are very much applicable to a wide array of disciplines and applications,
where a major advantage of DNNs is that one attains the universal approximation theorem, 
which in brief states that 
the NN can approximate 
a wide class of target functions \cite{hornik}.
For this work our interest in DNNs are w.r.t.~generating random processes. 
There has been extensive
work on connecting DNN with Gaussian and non-Gaussian processes \cite{bach,lee2018deep,matthews2018gaussian,neal}. 
Deep Gaussian and $\alpha$-stable process priors for 
fully Bayesian edge-preserving inversion 
have been employed 
in the context of inverse problems in 
\cite{DGS18,DLS21}, but these 
studies have typically been limited to low input dimensionality
due to the debilitating computational complexity.
Sell et al. introduce trace-class DNN (TNN) 
priors \cite{sell2020dimension}, 
which are non-stationary, non-Gaussian, and well-defined in the 
infinite-width limit, 
yet scale well with input dimension.
Fitting such model often requires Monte Carlo (MC) methods.}
\\
\textcolor{black}{
Monte Carlo methods are well-known class of methods aimed to solve stochastic computation problems. Such developments have been primarily in the fields of computational physics, statistics and numerical analysis. In particular one methodology which improves on vanilla MC is multilevel Monte Carlo (MLMC). MLMC aims to reduce the computational cost, and complexity, to attain a particular order of mean square error (MSE), i.e.~$\mathcal{O}(\varepsilon^2)$ for $\varepsilon>0$. First introduced in \cite{MBG08,MBG15,hein}  and primarily applied to diffusion processes in mathematical finance, it has since seen various extensions to other fields. 
Related to this work, it has been applied to MC methods within computational statistics which includes sequential Monte Carlo (SMC), Markov Chain Monte Carlo (MCMC) \cite{beskos2017multilevel,beskos2018multilevel,heng2021unbiased,JKL17,jasra2021unbiased} and other related methods, based on sampling from a distribution of interest. However in terms of the application of MLMC to machine learning, there has been limited work on this. 
Notable works along this direction include 
the improvement of complexity 
in gradient estimators within variational inference \cite{FS21,SC21}, 
and improved complexity for data-driven surrogate modelling 
of high-dimensional PDE and SDE models \cite{EJJ22,GHR21,LMM21}.
However as of yet this has not been exploited for the use of statistical inference problems in machine learning.}
\textcolor{black}{Therefore our motivation in this paper is to answer the following question: \\ \\
\centering
\textit{Can one improve the computational complexity for Bayesian deep learning?}}
\\ \\
\textcolor{black}{In this paper our focus is on fully Bayesian inference with DNNs 
for machine learning problems, through the aid and use of 
Monte Carlo  simulation methods.
We will provide a new methodology, 
multilevel Bayesian deep neural networks,
which combines TNN 
prior models with an advanced multilevel sequential Monte Carlo (MLSMC) algorithm \cite{beskos2018multilevel}. 
This methodology combines the advantages of the TNN 
model in terms of scalability and flexibility, with 
the 
efficiency of MLSMC, to achieve the canonical asymptotic 
computational complexity of $1/$MSE, hence answering the question above in the affirmative. 
For problems of this form, where MC is used to simulate from an 
approximate distribution, 
such complexity can only be attained in a MLMC framework,
and it is impossible to do better.}

\subsection{Contributions}

Our contributions of this work are summarized through the following points.
\begin{itemize}
\item We show how to use MLMC in the context of Bayesian inference for deep neural networks.
The specific method we use is MLSMC samplers. Our motivation is to reduce the computational complexity, where we show it is possible 
to attain the canonical rate of convergence $\textrm{Cost} = \textrm{MSE}^{-1}$.
\item We 
prove two key results within our work. 
First, 
we are able to derive 
{strong rates of convergence for TNN priors} in a multilevel setting. 
Secondly, we establish a bound on the MSE, 
which can be decomposed into a variance and bias term. 
{By leveraging these two results, we establish that our method converges 
to the true underlying posterior with the canonical rate}. 
\item Various numerical experiments are presented to verify the theoretical findings discussed above. This is related to both demonstrating that the 
strong convergence rate is attained, 
and also that the MLSMC sampler with TNN priors 
can reduce the cost to attain a particular order of MSE. 
\textcolor{black}{The experiments verifying the theory
are conducted on machine learning examples including simple toy regression and classification problems,  MNIST classification and reinforcement learning. 
For MNIST classification, 
we also provide a sensitivity analysis on the parameter choices.
In addition to these examples, we look at a more practical example of sentiment classification for IMDb dataset. 
}
\end{itemize}
\subsection{Outline}
The outline of this article is structured as follows. In Section \ref{sec:model} we present preliminary material related to both the model problem and multilevel Monte Carlo. 
This leads onto Section \ref{sec:algo_math},  where we present our numerical algorithm coupled with our main mathematical result, which is a bound on the MSE establishing
convergence of our method. We then discuss and introduce our multilevel trace-class priors in Section \ref{sec:tnn} where we demonstrate they attain the canonical rate of convergence,
with numerical verification. In Section \ref{sec:numerics} our numerical results are presented on a range of machine learning tasks, verifying the improved gains on computational
cost using the MLSMC sampler, combined with TNN priors. Finally we conclude, and remark on future directions, in Section \ref{sec:conc}. 
The proof of our main theorem is deferred to the Supplementary Material.

\section{Model Formulation}\label{sec:model}

In this section we provide a preliminary background on the setting, and formulation of the model problem.
This will include an initial discussion on our setup, which will discuss neural networks and how they related to Bayesian modeling,
as well as introducing the key concepts of multilevel Monte Carlo.

Suppose we have data $\mathcal{D} = \left((x_1,y_1),\dots,(x_N,y_N)\right)$, $N\in\mathbb{N}$,
where, for $i\in\{1,\dots,N\}$, $x_i \in \sX$ and $y_i\in \sY$. The objective is
to infer a predictive model $f: \sX \rightarrow \sY$ based on the data.
One way to do this is with a parametric model
of the form $f: \sX\times\Theta \rightarrow \sY$, with $\Theta\subseteq\mathbb{R}^{d_{\theta}}$.
It is assumed that $x_{1:N}$ are deterministic.

Suppose further that $\mathsf{Y}=\mathbb{R}^m$, then we will assume that for $i\in\{1,\dots,N\}$
\begin{equation}\label{eq:modelreg}
y_i = f(x_i,\theta)+ \epsilon_i \, , \qquad
\epsilon_i \stackrel{\textrm{ind}}{\sim} \mathcal{N}_m(0,\Sigma_i),
\end{equation}
where $\textrm{ind}$ denotes independence across the indices $i\in\{1,\dots,N\}$ and $\mathcal{N}_m(\mu,\Sigma)$ denotes the $m-$dimensional
Gaussian distribution with mean $\mu$ and covariance matrix $\Sigma$. \textcolor{black}{From this we have $\textcolor{black}{y_i|\theta \sim \mathcal{N}_m( f(x_i,\theta), \Sigma_i)}$}.

 Now consider the case that $\sY = \{1,\dots,m\}$, for some $m\in\mathbb{N}$.
{\color{black}Let $f:\mathsf{X}\times\Theta\rightarrow\mathbb{R}^m$, with $f(x,\theta)=(f_1(x,\theta),\dots,f_m(x,\theta))$, and we define the so-called {\em softmax} function as
\begin{equation}\label{eq:class}
h_k(x_i,\theta) :=
\frac{\exp\{f_k(x_i,\theta)\}}{\sum_{j=1}^K \exp\{f_j(x_i,\theta)\}}\, , \qquad k\in\mathsf{Y},
\end{equation}
We will assume that $y_i \sim h(x_i,\theta)$, independently for $i\in\{1,\dots,N\}$,
\\ where $h(x,\theta)=(h_1(x,\theta),\dots,h_m(x,\theta))$
denotes a categorical distribution on $m$ outcomes, given input $x$.
}

A popular parametric model is deep neural networks (DNN) \cite{dnn}. 
In this scenario, affine functions are composed with simple element-wise {\em activation functions}
{\color{black} 
$\nu:\bbR \rightarrow \bbR.$
For  $z \in \bbR^n$, including the limit $n\rightarrow \infty$, we define 
${\sigma_n:\bbR^n \rightarrow \bbR}$ as follows
$\sigma_n(z) := (\nu(z_1),\dots,\nu(z_n))^{\top}$.
We also refer to $\sigma_n$ as the activation function.
An example is $\nu(z)=\max\{0,z\}$ for $z\in \bbR$, the so-called
{\em ReLU activation}.}
Let $\sX = \bbR^n$ and $\sY=\bbR^m$.
The DNN itself can be defined in the following way.
Let $D\in\mathbb{N}$, $(n_0,\dots,n_D)\in\mathbb{N}^{D+1}$ be given,
with the constraint that
$n_0=n$ for the input layer and $n_D=m$ for the output layer. Now for weights $A_d\in \mathbb{R}^{n_{d} \times n_{d-1}}$ and biases $b_d\in \bbR^{n_d}$,
$d\in\{1,\dots,D\}$ also given, we use the notation $\theta := \left((A_1,b_1),\dots,(A_D,b_D)\right)$ and so $\theta\in\Theta=\bigotimes_{d=1}^D\{ \mathbb{R}^{n_d\times n_{d-1}}\times\mathbb{R}^{n_d}\}$. Now set
\begin{eqnarray}\nonumber
g_0(x,\theta) & := & A_1x + b_1,\\ \nonumber
g_d(x,\theta) & := & A_d\sigma_{n_{d-1}}(g_{d-1}(x)) + b_d\, , \qquad d\in\{1,\dots D-1\}, \\
\label{eq:DNN}
f(x,\theta) &:=&  A_D\sigma_{n_{D-1}}(g_{D-1}(x)) + b_D,
\end{eqnarray}
where $f(x,\theta)$ is the output of the final layer of the DNN.
For any given $f(x,\theta)$ and parameter space $\Theta$ one can place a prior $\overline{\pi}$ on $\Theta$. Given the structure in \eqref{eq:modelreg} one then has a posterior
\begin{equation}\label{eq:bnn_posterior}
\pi(\theta | y_{1:N}) \propto p(y_{1:N}|\theta) \overline{\pi}(\theta) \, ,
\end{equation}
assuming it is well-defined.
The likelihood function, in the case \eqref{eq:modelreg} is exactly
\begin{equation}\label{eq:reglikelihood}
p(y_{1:N}|\theta) = \prod_{i=1}^N \phi_m(y_i;f(x_i,\theta),\Sigma_i),
\end{equation}
where $\phi_m(y;\mu,\Sigma)$ is the $m-$dimensional Gaussian density of mean $\mu$, covariance $\Sigma$ evaluated at $y$ given.
In the case of classification, the likelihood function associated to \eqref{eq:class}
is exactly
{\color{black}\begin{equation}\label{eq:classlikelihood}
p(y_{1:N}|\theta) = \prod_{i=1}^N \prod_{k=1}^m h_{k}(x_i,\theta)^{\mathbb{I}_{[y_i=k]}} \, .
\end{equation}
}

\subsection{Multilevel Bayesian Neural Networks}\label{sec:mlbnn}
We shall begin with a short review of MLMC. Let us assume that we are given a probability density $\Psi$,
on a state-space $\mathsf{U}$ and it is of interest to compute expectations of $\Psi-$integrable functions,
$\varphi:\mathsf{U}\rightarrow\mathbb{R}$; $\Psi(\varphi):=\int_{\mathsf{U}}\varphi(u)\Psi(u)du$ with $du$ a dominating
$\sigma-$finite measure (often Lebesgue). Now, we assume that working with $\Psi$ is not computationally feasible (e.g.~has an infinite computational cost)
but there exist a scalar parameter $l\in\mathbb{N}$ which parameterises an approximation $\Psi_l$ of $\Psi$, where $\Psi_l$ a density on
a state-space $\mathsf{U}_l\subseteq\mathsf{U}$ such that:
\begin{enumerate}
\item{For any $\varphi:\mathsf{U}\rightarrow\mathbb{R}$ that is both $\Psi_l$ and $\Psi-$integrable we have $\lim_{l\rightarrow\infty}\Psi_l(\varphi)=\Psi_l(\varphi)$, where $\Psi_l(\varphi)=\int_{\mathsf{U}_l}\varphi(u)\Psi_l(u)du$.}
\item{The cost of computing with $\Psi_l$ is increasing in $l$.}
\end{enumerate}
Let $L\in\{2,3,\dots\}$ be given, we have the telescoping sum identity
\begin{equation}
\label{eq:tele}
\Psi_L(\varphi) = \Psi_1(\varphi) + \sum^L_{l=2}[\Psi_l-\Psi_{l-1}](\varphi),
\end{equation}
with $[\Psi_l-\Psi_{l-1}](\varphi)=\Psi_l(\varphi)-\Psi_{l-1}(\varphi)$.
The idea behind MLMC is to try and approximate the R.H.S.~of \eqref{eq:tele} in such a way as to reduce the cost of a particular order of mean square error (MSE) versus
approximating the L.H.S.~of \eqref{eq:tele}. The way in which this is achieved to construct couplings of $(\Psi_l,\Psi_{l-1})$, $l\in\{2,\dots,L\}$, that is, joint densities $\check{\Psi}_l$, on $\mathsf{U}_l\times\mathsf{U}_{l-1}$ such that $\int_{\mathsf{U}_l}\check{\Psi}_l(u_l,u_{l-1})du_l=\Psi_{l-1}(\tilde{u}_{l-1})$ and $\int_{\mathsf{U}_{l-1}}\check{\Psi}_l(u_l,u_{l-1})du_{l-1}=\Psi_{l}(u_{l})$.

The MLMC estimator can be constructed in the following way.
\begin{enumerate}
\item{Sample $U_1^1,\dots,U_1^{P_1}$ i.i.d.~from $\Psi_1$, with $P_1\in\mathbb{N}$ given.}
\item{For $l\in\{2,\dots,L\}$ independently of all other random variables, sample \\ $(U_l^1,\tilde{U}_{l-1}^1),\dots,(U_l^{P_l},\tilde{U}_{l-1}^{P_{l}})$ i.i.d.~from $\check{\Psi}_l$, with $P_l\in\mathbb{N}$ given.}
\end{enumerate}
Then, we have the approximation
\begin{equation}\label{eq:psi_ml}
\Psi_L^{ML}(\varphi) := \frac{1}{P_1}\sum_{i=1}^{P_1}\varphi(U_1^i) + \sum_{l=2}^L \frac{1}{P_l}\sum_{i=1}^{P_l}\{\varphi(U_l^i)-\varphi(\tilde{U}_{l-1}^i)\}.
\end{equation}
Now, alternatively, one can use i.i.d.~samples $U_L^1,\dots,U_L^P$ from $\Psi_L$ to approximate $\Psi_L(\varphi)$:
\begin{equation}\label{eq:psi_iid}
\textcolor{black}{\Psi^{IID}_L(\varphi) := \frac{1}{P}\sum_{i=1}^{P}\varphi(U_L^i).}
\end{equation}
In either scenario one would have a standard, variance (assuming it exists) plus square bias decomposition of the MSE; e.g.~for the MLMC estimator:
\begin{equation}
\label{eq:MSE}
\mathbb{E}[(\Psi_L^{ML}(\varphi)-\Psi(\varphi))^2] = \mathbb{V}\textrm{ar}[\Psi_L^{ML}(\varphi)] +
[\Psi_L-\Psi](\varphi)^2,
\end{equation}
with $\mathbb{V}\textrm{ar}$ representing the variance operator. The bias is of course the same for both estimators \eqref{eq:psi_ml} and \eqref{eq:psi_iid} and so, if there is
to be an advantage for using \eqref{eq:psi_ml} under an MSE criterion, it would be via the variance.
The variance of \eqref{eq:psi_ml} is
\begin{equation}
\label{eq:varr}
\mathbb{V}\textrm{ar}[\Psi_L^{ML}(\varphi)] = 
\frac{\mathbb{V}\textrm{ar}[\varphi(U_1^1)]}{P_1}+\sum_{l=2}^L\frac{
\mathbb{V}\textrm{ar}[\varphi(U_l^1)-\varphi(U_{l-1}^1)]
}{P_l},
\end{equation}
whereas for \eqref{eq:psi_iid}
$$
\mathbb{V}\textrm{ar}[\Psi^{IID}_L(\varphi)] = \frac{\mathbb{V}\textrm{ar}[\varphi(U_L^1)]}{P}.
$$
Now, if the couplings $\check{\Psi}_l$ are constructed so that $\mathbb{V}\textrm{ar}[\varphi(U_l^1)-\varphi(U_{l-1}^1)]$ falls sufficiently quickly with $l$ then
it can be possible to achieve an MSE (as in \eqref{eq:MSE}) for the estimator \eqref{eq:psi_ml} which is of the same order as \eqref{eq:psi_iid} except for a cost that is lower.
This is characterized in the following result. Below $\mathrm{Cost}(\check{\Psi}_l)$ is the cost of one sample from $\check{\Psi}_l$.

 \begin{theorem}[Giles \cite{MBG08}]
\label{thm:VMLMC} Suppose that there exists constants
$(\alpha,\beta,\gamma) \in \mathbb{R}_+^3$ with $\alpha \geq
\frac{\min(\beta,\gamma)}{2}$ such that
\begin{itemize}
\item[(i)] $ |\Psi_l(\varphi)  - \Psi(\varphi)| =\mathcal{O}(2^{-\alpha l})$.
\item[(ii)] $\mathbb{V}\textrm{\emph{ar}}[\varphi(U_l^1)-\varphi(U_{l-1}^1)]=\mathcal{O}(2^{-\beta l})$.
\item[(iii)] $\mathrm{Cost}(\check{\Psi}_l)=\mathcal{O}(2^{\gamma l})$.
\end{itemize}
Then for any $\varepsilon < 1$ and $L := \lceil \log(1/\varepsilon) \rceil$,
there exists $(P_{1},\dots,P_L) \in \mathbb{N}^{L}$ such
that
$$
\mathrm{MSE} =\mathbb{E}[(\Psi_L^{ML}(\varphi)-\Psi(\varphi))^2]=\mathcal{O}(\varepsilon^2),
$$
and
\begin{equation}\label{eq:mlmcCost}
\mathrm{Cost(MLMC)} := \sum_{l=0}^L P_l C_l =
\begin{cases} \mathcal{O}(\varepsilon^{-2}), \quad &\mathrm{if} \ \beta
>\gamma,\\ \mathcal{O}(\varepsilon^{-2}( \log \varepsilon)^2), \quad
&\mathrm{if} \ \beta =\gamma,\\
\mathcal{O}(\varepsilon^{-2-\frac{(\gamma-\beta)}{\alpha}}), \quad
&\mathrm{if} \ \beta <\gamma,
\end{cases}
\end{equation}
\textcolor{black}{where $C_l$ is the cost for one sample of $\varphi(U_l^i)-\varphi({U}_{l-1}^i)$}.
\end{theorem}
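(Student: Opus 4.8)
The plan is to work directly from the variance-plus-squared-bias decomposition \eqref{eq:MSE} and to split the target accuracy $\epsilon^2$ between the two terms, controlling the bias through the choice of $L$ and the variance through the choice of the sample sizes $(P_0,\dots,P_L)$. First I would invoke hypothesis (i): since $[\Psi_L-\Psi](\varphi)^2 = \mathcal{O}(2^{-2\alpha L})$, taking $L=\lceil \log(1/\epsilon)\rceil$, interpreted so that $2^{-\alpha L} = \mathcal{O}(\epsilon)$ (the logarithm being to base $2$ and scaled by $1/\alpha$), forces the squared bias to be $\mathcal{O}(\epsilon^2)$; I would reserve a fixed budget, say $\tfrac12\epsilon^2$, for this contribution. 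This reduces the problem to choosing the $P_l$ so that the variance $\mathbb{V}\mathrm{ar}[\Psi_L^{ML}(\varphi)] = \sum_{l=0}^L V_l/P_l$ — writing $V_0 := \mathbb{V}\mathrm{ar}[\varphi(U_0^1)]$ and $V_l := \mathbb{V}\mathrm{ar}[\varphi(U_l^1)-\varphi(U_{l-1}^1)]$ for $l\geq 1$ — is at most $\tfrac12\epsilon^2$ at minimal cost.

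The core step is the constrained optimisation: minimise $\mathrm{Cost} = \sum_{l=0}^L P_l C_l$ subject to $\sum_{l=0}^L V_l/P_l \leq \tfrac12\epsilon^2$, treating the $P_l$ as continuous. A Lagrange-multiplier computation gives the classical allocation $P_l \propto \sqrt{V_l/C_l}$; fixing the proportionality constant so as to saturate the variance constraint yields $P_l = \lceil 2\epsilon^{-2}\sqrt{V_l/C_l}\,\sum_{k=0}^L \sqrt{V_k C_k}\,\rceil$ and hence $\mathrm{Cost} = \mathcal{O}\big(\epsilon^{-2}(\sum_{l=0}^L\sqrt{V_l C_l})^2\big)$. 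Substituting the rates from (ii)--(iii), namely $V_l C_l = \mathcal{O}(2^{(\gamma-\beta)l})$ so that $\sqrt{V_l C_l} = \mathcal{O}(2^{(\gamma-\beta)l/2})$, the behaviour of the geometric sum $\sum_{l=0}^L 2^{(\gamma-\beta)l/2}$ splits into the three stated cases: it is $\mathcal{O}(1)$ when $\beta>\gamma$ (convergent), $\mathcal{O}(L)=\mathcal{O}(\log(1/\epsilon))$ when $\beta=\gamma$, and $\mathcal{O}(2^{(\gamma-\beta)L/2})$ when $\beta<\gamma$ (dominated by the top level). Squaring, and using $2^{-L}=\mathcal{O}(\epsilon^{1/\alpha})$ in the last case, then produces exactly the three bounds in \eqref{eq:mlmcCost}.

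The subtlety I expect to be the main obstacle is the integer rounding of the $P_l$. Each ceiling adds at most one sample at level $l$, so the rounding inflates the cost by at most $\sum_{l=0}^L C_l = \mathcal{O}(2^{\gamma L}) = \mathcal{O}(\epsilon^{-\gamma/\alpha})$, and I must verify that this extra term is dominated by the main cost in every case. This is precisely where the hypothesis $\alpha \geq \min(\beta,\gamma)/2$ enters: when $\beta \geq \gamma$ it gives $\gamma/\alpha \leq 2$, so $\epsilon^{-\gamma/\alpha}$ is absorbed into $\mathcal{O}(\epsilon^{-2})$; when $\beta<\gamma$ it gives $\beta \leq 2\alpha$, equivalently $\gamma/\alpha \leq 2 + (\gamma-\beta)/\alpha$, so the rounding cost is absorbed into $\mathcal{O}(\epsilon^{-2-(\gamma-\beta)/\alpha})$. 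It then remains only to confirm that the chosen $P_l$ are genuine positive integers and that the realised variance is still $\leq \tfrac12\epsilon^2$ after rounding (rounding up only decreases each $V_l/P_l$), which combined with the reserved bias budget closes the MSE estimate at $\mathcal{O}(\epsilon^2)$ and completes the proof.
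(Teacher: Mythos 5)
Your proof is correct, and since the paper never proves this theorem---it is quoted with attribution from Giles \cite{MBG08} and used as a black box---your argument should be compared with the cited source, where it is precisely the canonical one: the variance-plus-squared-bias splitting of the MSE, the Lagrange-multiplier allocation $P_l \propto \sqrt{V_l/C_l}$ yielding $\mathrm{Cost}=\mathcal{O}\big(\epsilon^{-2}\big(\sum_{l=0}^L\sqrt{V_l C_l}\big)^2\big)$, the three-way geometric-sum case analysis, and the absorption of the integer-rounding cost $\mathcal{O}(\epsilon^{-\gamma/\alpha})$ exactly where the hypothesis $\alpha \geq \min(\beta,\gamma)/2$ is needed. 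Your charitable reading of $L=\lceil \log(1/\epsilon)\rceil$ as meaning $2^{-\alpha L}=\mathcal{O}(\epsilon)$ (base-$2$ logarithm scaled by $1/\alpha$) is also the right repair of the statement's sloppiness; taken literally, the squared bias would only be $\mathcal{O}(\epsilon^{2\alpha})$ and the claimed MSE would additionally require $\alpha\geq 1$.
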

{\color{black}
We remark that the cost in \eqref{eq:mlmcCost} can be lower than that of using the estimator \eqref{eq:psi_iid}, depending upon the parameters $(\alpha,\beta,\gamma)$, and simultaneously, the estimator \eqref{eq:mlmcCost}  having an MSE of $\mathcal{O}(\varepsilon^2)$. The case of $\beta > \gamma$ is referred to as the optimal, or canonical, rate of convergence, i.e. $\mathcal{O}(\varepsilon^{-2})$, in other words this is the best
rate that one can obtain.
In the original work of Giles \cite{MBG08} the methodology of MLMC was motivated and applied to diffusion processes
with applications in financial mathematics. In \cite{MBG08}, $\beta \in \mathbb{R}$ relates to the strong rate of convergence, and $\alpha\in\mathbb{R}$ the weak
rate of convergence. In our context, as we do not work with diffusion processes, not only is strong and weak convergence not required, i.i.d.~sampling of couplings is not achievable in our context (to be defined below) and thus an alternative methodology to reducing the variance of the higher level $l$ differences are
required.}

\subsubsection{Multilevel Neural Networks}

We now consider the question of choosing the dimension of $\theta$,
in particular $n_d$ for $d\in\{1,\dots,D-1\}$ (since $n_0$ and $n_D$ are fixed to the input and output dimensions). 
For simplicity, here we assume $D$ is fixed and the NN width is the same for all layers other than the input or the output, i.e.
$n_d=n_{d'}$ for  $d,d' \in \{1,\dots,D-1\}$,
but dependent upon some resolution parameter $l\in\mathbb{N}$.
We can now re-define $n_l=2^l$, noting that $n_0=n$ and $n_D=m$,
so those variables are no longer needed.
We will denote the corresponding vector of parameters by
\textcolor{black}{$$\theta_l := \left((A_{1}^l,b_{1}^l),\dots,(A_{D}^l,b_{D}^l)\right)
\in \Theta_l :=
\{\mathbb{R}^{n_l\times n}\times\mathbb{R}^{n_l}\}\otimes \{ \mathbb{R}^{n_l\times n_{l}}\times\mathbb{R}^{n_l}\}^{\otimes D-2} \otimes\{\mathbb{R}^{m\times n_{l}}\times\mathbb{R}^{m}\}.$$
We place a prior distribution $\overline{\pi}_{l}$ on $\Theta_l$, and note that its properties
are crucial for determining whether the multilevel method works --
in particular it must place vanishing mass on rows and columns $n_l$ as $l \rightarrow \infty$. 
This will be made precise in Section \ref{sec:tnn}.}
The NN's output function $f(x,\theta)$ 
\eqref{eq:DNN} for resolution $l$ is denoted by  $f_l(x,\theta_l)$, the  likelihood by
$p_l(y_{1:N}|\theta_l)$, and the posterior distribution by
\begin{equation}\label{eq:bnn_approx}
\pi_l(\theta_l | y_{1:N}) \propto p_l(y_{1:N}|\theta_l)\overline{\pi}_{l}(\theta_l) \, ,
\end{equation}
where $p_l(y_{1:N}|\theta_l)$ is the likelihood function using $\theta_l$ parameters as in \eqref{eq:reglikelihood},
and \eqref{eq:classlikelihood} for the second example discussed.
We view this posterior (and the corresponding $f_l(x,\theta_l)$) as a finite approximation of the posterior associated to the non-parametric limiting
DNN as $l \rightarrow \infty$, assuming it exists. 


\section{Algorithm and Main Result}\label{sec:algo_math}

In this section we introduce our methodology related to the Bayesian machine learning tasks, which we numerically test in Section \ref{sec:numerics}.
Namely we consider the ML sequential Monte Carlo method, and present it in our given framework. This will lead to the our main mathematical result, which
is the convergence of our ML estimator, provided in terms of a bound on the MSE.

{\color{black}
We begin this section by firstly presenting our algorithm for approximating 
functional expectations, with respect to the posterior $\pi$, such as
$$
\mathbb{E}_{\pi}[f(x,\theta)],
$$
as well as some mathematical results which justify their implementation.
The choice of 
posterior predictive expectation as a quantity of interest
is motivated by the goal of our inference procedure, i.e.
to make predictions related to the output of the neural network $f$.
The form of our results are such that they extend trivially to
objective functions $\varphi \circ f$, for Lipschitz $\varphi$,
for example. 
We note that 
our results extend to a whole class of 
possible test functions 
beyond the neural network itself.
The results will be presented first for the posterior predictive,
and then extended as a corollary.
Noting that 
$\mathbb{E}_{\pi}[f(x,\theta)] = 
\mathbb{E}_{\pi_L}[f_L(x,\theta_L)] + 
(\mathbb{E}_{\pi}[f(x,\theta)]-\mathbb{E}_{\pi_L}[f_L(x,\theta_L)])$,
our main objective is to construct a multilevel Monte Carlo estimator:
\begin{equation}\label{eq:ml_id_desc}
\mathbb{E}_{\pi_L}[f_L(x,\theta_L)] = \sum_{l=2}^L\left\{\mathbb{E}_{\pi_l}[f_l(x,\theta_l)] - \mathbb{E}_{\pi_{l-1}}[f_{l-1}(x,\theta_{l-1})]\right\}  + \mathbb{E}_{\pi_1}[f_1(x,\theta_1)],
\end{equation}
whose variance matches the discretization bias 
$(\mathbb{E}_{\pi}[f(x,\theta)]-\mathbb{E}_{\pi_L}[f_L(x,\theta_L)])$,
by approximating each summand on the R.H.S., as well as
$ \mathbb{E}_{\pi_1}[f_1(x,\theta_1)]$, using a suitable simulation method.
Then we will show that the computational cost for doing so, versus, simply approximating
$\mathbb{E}_{\pi_L}[f_L(x,\theta_L)]$ can be lower, when seeking to achieve a pre-specified mean square error.}
\textcolor{black}{The variance reduction from using MLMC, in the context of \eqref{eq:ml_id_desc}, can be seen through \eqref{eq:varr}, where the variance between the couplings at high-levels 
is reduced. For classic MLMC, levels correspond to the time-discretization of diffusion processes, while here they are related to the width, which we will discuss in Section \ref{sec:tnn}.
It is well-known within neural networks \cite{calib}, that various metrics can be improved on, by considering a larger width.}

\subsection{Algorithm}

The approach we construct follows that in \cite{beskos2017multilevel, beskos2018multilevel} and in order \textcolor{black}{to directly apply some of its results},  we introduce additional notation.
Throughout this exposition $(x,L)\in\mathsf{X}\times\mathbb{N}$ are fixed and given.
{\color{black} Define the state-spaces for $l\in\{2,\dots,L\}$
\begin{eqnarray*}
\tilde{\Theta}_l &:= & \{\mathbb{R}^{(n_l-n_{l-1})\times n}\times\mathbb{R}^{n_l-n_{l-1}}\}\otimes \{ \mathbb{R}^{(n_l-n_{l-1})\times n_{l-1}}
\times\mathbb{R}^{n_l\times(n_l-n_{l-1})}
\times \\
& &
\mathbb{R}^{n_l-n_{l-1}}\}^{\otimes D-2} \otimes\{\mathbb{R}^{m\times (n_{l}-n_{l-1})}\},
\end{eqnarray*}
and we set $\tilde{\Theta}_1=\Theta_1$; the significance of these spaces will be made clear below.
Set $\theta_1=\tilde{\theta}_{1}\in\Theta_1$ and for $l\in\{2,\dots,L\}$
$$
\theta_l = (\tilde{\theta}_1,\dots,\tilde{\theta}_{l}) \in\Theta_l.
$$
The quantities $(\tilde{\theta}_2,\dots,\tilde{\theta}_{l})$ will be used to denote the increase in dimension between
$\theta_{l-1}$ and $\theta_{l}$ say,  which is pivitol to the algorithm that we will present.
We will suppress the data $(x_{1:N},y_{1:N})$ from the notation
and set for $l\in\{1,\dots,L\}$
$$
\pi_l(\theta_{l}) \propto p_l(y_{1:N}|\theta_{l})\overline{\pi}_l(\theta_{l}) =: \kappa_{l}(\theta_{l}).
$$
For any $l\in\{2,\dots,L\}$ and  $\theta_{l-1}\in\Theta_{l-1}$, let $q_l(\cdot|\theta_{l-1})$ be a positive probability density on $\tilde{\Theta}_l$ and $q_1(\theta_1)$ a positive probability density on $\Theta_1$.
Set $G_1(\theta_1) = \kappa_1(\theta_1)/q_1(\theta_1)$
and for $l\in\{2,\dots,L\}$
$$
G_l(\theta_l) = \frac{\kappa_{l}(\theta_l)}{\kappa_{l-1}(\theta_{l-1})q_{l}(\tilde{\theta}_{l}|\theta_{l-1})},
$$
define our weights.
Let $(K_l)_{l\in\{1,\dots,L-1\}}$ be a sequence of $(\pi_l)_{l\in\{1,\dots,L-1\}}-$invariant Markov kernels and define
for $l\in\{1,\dots,L-1\}$
\begin{equation}\label{eq:kernel}
M_l(\theta_{l},d\theta_{l+1}') = K_l(\theta_{l},d\theta_{l}')q_{l+1}(\tilde{\theta}_{l+1}|\theta_{l}')d\tilde{\theta}_{l+1},
\end{equation}
 where $\theta_{l+1}' = (\theta_{l}',\tilde{\theta}_{l+1})$ and $d\tilde{\theta}_{l}$ is the appropriate dimensional Lebesgue measure. {This kernel is invoked in Step 4 of Algorithm \ref{alg:mlsmc_dnn}.}
}

{\color{black}
Set $\eta_1(\theta_1)=q_1(\theta_1)$ and for $l\in\{2,\dots,L\}$
$$
\eta_l(\theta_l) = \pi_{l-1}(\theta_{l-1})q_l(\tilde{\theta}_l|\theta_{l-1}).
$$
The algorithm to be presented will provide a sample-based approximation of these probability densities
and in particular,  associated expectations.  In other words we will have access to $P\in\mathbb{N}$
samples $(\theta_l^1,\dots,\theta_l^P)\in\Theta_l^P$ so that
\begin{equation}
 \eta_l^{P}(\varphi_l) := \frac{1}{P}\sum_{i=1}^{P}\varphi_l(\theta_l^i) \, .
\label{eq:eta_empirical}
\end{equation}
will be an almost surely convergent estimator of $\int_{\Theta_l}\varphi_l(\theta_l)\eta_l(\theta_l)d\theta_l$,
where $\varphi_l:\Theta_l\rightarrow\mathbb{R}$ is $\eta_l-$integrable.  In the subsequent exposition given
$(\theta_l^1,\dots,\theta_l^P)\in\Theta_l^P$  $\eta_l^P$ will denote the so-called $P-$empirical measure.
We stress that the superscript of $\theta_l^i$ is used to denote a simulated sample (the simulation is given in the algorithm to be described) and the subscript relates to the subscript associated to the posterior $\pi_l$.
We are now in a position to define our algorithm which can approximate
expectations w.r.t.~the sequence of posteriors $(\pi_l)_{l\in\{1,\dots,L\}}$  
and this is given in Algorithm \ref{alg:mlsmc_dnn}. 


\begin{algorithm}[h]
\begin{enumerate}
\item{\textbf{Input}: $L\in\mathbb{N}$ the highest resolution and the number of samples at each level $(P_0,\dots,P_{L-1})\in\mathbb{N}^{L}$, with $+\infty>P_1\geq P_2\geq \cdots \geq P_{L}\geq 1$.}
\item{\textbf{Initialize}: {\color{black}For $i\in\{1,\dots,P_1\}$ independently sample $\theta_1^i$ using $q_1(\theta_1)d\theta_1$. Set $l=1$. 
Go to step 3..}}
\item{\textbf{Iterate}: {\color{black}If $l=L$ go to step 4.. Otherwise for $i\in\{1,\dots,P_{l+1}\}$ sample $
\theta_{l+1}^i|\theta_{l}^1,\dots,\theta_{l}^{P_{l}}$ independently using:
$$
\sum_{j=1}^{P_{l}}\frac{G_{l}(\theta_{l}^j)}{\sum_{s=1}^{P_{l}}G_{l}(\theta_{l}^s)}M_l(\theta_{l}^j,d\theta_{l+1}).
$$}
Set $l=l+1$ and go to the start of step 4.}
{\color{black} \item{\textbf{Output}: $(\theta_1^1,\dots,\theta_1^{P_1},\dots,\theta_L^1,\dots,\theta_L^{P_{L}})$, from which \eqref{eq:ml_est} is constructed.}}
\end{enumerate}
\caption{Multilevel Sequential Monte Carlo Sampler for Deep Neural Networks.}
\label{alg:mlsmc_dnn}
\end{algorithm}

{\color{black}
Now, recalling \eqref{eq:ml_id_desc}, our objective is to approximate the difference, for $(x,l)\in\mathsf{X}\times \{2,\dots,L\}$:
$$
\mathbb{E}_{\pi_l}[f_l(x,\theta_l)] - \mathbb{E}_{\pi_{l-1}}[f_{l-1}(x,\theta_{l-1})] =: \pi_l(f_l) - \pi_{l-1}(f_{l-1}) ,
$$
where $\mathbb{E}_{\pi_l}$ denotes expectation w.r.t.~$\pi_l$. Now, we have the simple identity
\begin{eqnarray}
\mathbb{E}_{\pi_l}[f_l(x,\theta_l)] - \mathbb{E}_{\pi_{l-1}}[f_{l-1}(x,\theta_{l-1})] & = &  \mathbb{E}_{\pi_{l-1}\otimes q_l}\left[\frac{\kappa_{l}(\theta_l)Z_{l-1}}{\kappa_{l-1}(\theta_{l-1})q_l(\tilde{\theta}_l|\theta_{l-1})Z_l}f_l(x,\theta_l)-f_{l-1}(x,\theta_{l-1})\right] \nonumber \\
& = &   \mathbb{E}_{\pi_{l-1}\otimes q_l}\left[\frac{Z_{l-1}}{Z_l}G_{l}(\theta_{l})f_l(x,\theta_l)-f_{l-1}(x,\theta_{l-1})\right] ,
\label{eq:ml_id}
\end{eqnarray}
where $\mathbb{E}_{\pi_{l-1}\otimes q_l}$ denotes expectation w.r.t.~$\pi_{l-1}(\theta_{l-1})q_l(\tilde{\theta}_l|\theta_{l-1})=\eta_l$ and for any $l\in\{1,\dots,L\}$ and $Z_l=\int_{\Theta_l}\kappa_{l}(\theta_l)d\theta_l$.
One can approximate the R.H.S.~of \eqref{eq:ml_id}
as
$$
\frac{\eta_{l}^{P_{l}}(G_{l}f_l)}{\eta_{l}^{P_{l}}(G_{l})}-\eta_{l}^{P_{l}}(f_{l-1}).
$$
The justification of this estimator is informally as follows. $\eta_{l}^{P_{l}}(f_{l-1})$ will converge in probability (as $P_{l}\rightarrow\infty$) to $\pi_{l-1}(f_{l-1})=\int_{\Theta_{l-1}}f_{l-1}(x,\theta_{l-1})\pi_{l-1}(\theta_{l-1})d\theta_{l-1}$ (see \cite{delmoral}),
which justifies the term $\eta_{l}^{P_{l}}(f_{l-1})$.
Then $\eta_{l}^{P_{l}}(G_{l})$ will converge to
\begin{eqnarray*}
\int_{\Theta_l}\pi_{l-1}(\theta_{l-1})q_l(\tilde{\theta}_l|\theta_{l-1})\frac{\kappa_{l}(\theta_l)}{\kappa_{l-1}(\theta_{l-1})q_l(\tilde{\theta}_l|\theta_{l-1})}d\theta_l & = &
\frac{1}{Z_{l-1}}\int_{\Theta_l}\kappa_{l}(\theta_l)d\theta_l \\
& = & \frac{Z_l}{Z_{l-1}}.
\end{eqnarray*}
Similarly $\eta_{l}^{P_{l}}(G_{l}f_l)$ converges to
$$
\int_{\Theta_l}\pi_{l-1}(\theta_{l-1})q_l(\tilde{\theta}_l|\theta_{l-1})\frac{\kappa_{l}(\theta_l)}{\kappa_{l-1}(\theta_{l-1})q_l(\tilde{\theta}_l|\theta_{l-1})}f_l(x,\theta_l)d\theta_l.
$$
which yields the appropriate identity on the R.H.S.~of \eqref{eq:ml_id}. As a result of this exposition, one can use the following approximation of $\pi_L(f_L)$:
\begin{equation}\label{eq:ml_est}
\widehat{\pi}_L(f_L) = \sum_{l=2}^L\left\{
\frac{\eta_{l}^{P_{l}}(G_{l}f_l)}{\eta_{l}^{P_{l}}(G_{l})}-\eta_{l}^{P_{l}}(f_{l-1})
\right\} + \frac{\eta_{1}^{P_{1}}(G_{1}f_1)}{\eta_{1}^{P_{1}}(G_{1})}.
\end{equation}
}

\subsection{Mathematical Result}

We consider the convergence of \eqref{eq:ml_est} in the case $m=1$; this latter constraint can easily be removed with only minor changes to the subsequent notations and arguments. The analysis of Algorithm \ref{alg:mlsmc_dnn} has been considered in \cite{beskos2017multilevel,beskos2018multilevel}. However, there are some nuances that
need to be adapted for the context under study. Throughout, we will suppose that for each $l\in\{2,3,\dots\}$ we have
chosen $q_l$ {\color{black} so that for each $\theta_l\in\Theta_l$
\begin{equation}\label{eq:cond_disc}
\overline{\pi}_l(\theta_l) = \overline{\pi}_{l-1}(\theta_{l-1})q_l(\tilde{\theta}_l|\theta_{l-1}).
\end{equation}
This means that
$$
G_{l}(\theta_{l}) = \frac{p_l(y_{1:N}|\theta_l)}{p_{l-1}(y_{1:N}|\theta_{l-1})}.
$$}
This convention is not entirely necessary, but it will facilitate a simplification of the resulting calculations.
{\color{black} We discuss this point after stating our assumptions.}
We will use the following assumptions, which are often used in the analysis of approaches of the type described in Algorithm \ref{alg:mlsmc_dnn}. {\color{black} Below $\mathcal{B}(\Theta_l)$ is the Borel $\sigma-$field associated to $\Theta_{l}$. We also consider,  as is the case for the DNN model and priors,  that the indices associated to $p_l,\overline{\pi}_l,f_l,q_l,K_l$ can be indefinitely extended (i.e.~beyond $L$). 

\begin{hypA}\label{ass:1}
\begin{enumerate}
\item{There exists a $0<\underline{C}<\overline{C}<+\infty$ such that for any $x\in\mathsf{X}$:
\begin{align}
\label{eq:min}
\inf_{l\in\mathbb{N}}\inf_{\theta_{l}\in\Theta_l}\min\{p_l(y_{1:N}|\theta_{l}),\overline{\pi}_l(\theta_{l})\} & \geq  \underline{C} \\
\label{eq:max}
\sup_{l\in\mathbb{N}}\sup_{\theta_{l}\in\Theta_{l}}\max\{p_l(y_{1:N}|\theta_{l}),\overline{\pi}_l(\theta_{l}),f_l(x,\theta_{l})\} & \leq  \overline{C}.
\end{align}
}
\item{There exists a $\rho\in(0,1)$ such that for any $(l,\theta_{l},\theta_{l}',\mathsf{B})\in\mathbb{N}\times
\Theta_l^2\times\mathcal{B}(\Theta_{l+1})$:
$$
\int_{\textcolor{black}{\mathsf{B}}} M_l(\theta_{l},d\theta_{l+1}) \geq \rho \int_{B_l} M_l(\theta_{l}',d\theta_{l+1}).
$$
}
\item{There exists a $C<\infty$ and \textcolor{black}{$\beta >0$} such that for any $(l,x)\in\{2,3,\dots\}\times\mathsf{X}$:
$$
\int_{\Theta_{l}}(f_l(x,\theta_{l})-f_{l-1}(x,\theta_{l-1}))^2\overline{\pi}_l(\theta_l)d\theta_l \leq Cn_l^{-\beta}.
$$
}
\item{There exists a $C<\infty$ 
such that for any $(l,\theta_{l},x)\in\{2,3,\dots\}\times\Theta_{l}\times\mathsf{X}$, 
$$
|p_l(y_{1:N}|\theta_{l})-p_{l-1}(y_{1:N}|\theta_{l-1})| \leq C|f_l(x,\theta_{l-1})-f_{l-1}(x,\theta_{l-1})|.
$$
}
\item{There exists a $r\geq 3$, $C<\infty$, possibly depending on $r$, such that for $\beta>0$ as in 3.~and any $(l,\theta,x)\in\{2,3,\dots\}\times\Theta_{l-1}\times\mathsf{X}$:
$$
\left(\int_{\Theta_{l}}|f_l(x,\theta_{l})-f_{l-1}(x,\theta_{l-1})|^r M_{l-1}(\theta,d\theta_{l})\right)^{1/r} \leq Cn_l^{-\beta/2}.
$$
}
\end{enumerate}
\end{hypA}
}
{\textcolor{black}{Admittedly, the minimum bound in \eqref{eq:min}} is not satisfied for most applications but does significantly simplify the proof. 
{\color{black} This assumption can be relaxed by leveraging the heavy machinery of multiplicative drift conditions as in \cite{kontoyiannis2005large}.
This machinery was used to relax the assumptions for the original MLSMC algorithm in 
\cite{del2017multilevel}. 
This is possible to do, but it is highly technical and would detract from the current presentation, so it is deferred to future work.} 
{\color{black}To deal with the general case of $G_l$ (i.e.~without the condition \eqref{eq:cond_disc})
one could specify $q_l$ so that (A\ref{ass:1}) 4.~can be modified to: There exists a $C<\infty$ 
such that for any $(l,\theta_{l},x)\in\{2,3,\dots\}\times\Theta_{l}\times\mathsf{X}$,
$$
|p_l(y_{1:N}|\theta_{l})\overline{\pi}_l(\theta_l)-p_{l-1}(y_{1:N}|\theta_{l-1})
\overline{\pi}_{l-1}(\theta_{l-1})q_l(\tilde{\theta}_l|\theta_{l-1})
| \leq C|f_l(x,\theta_{l-1})-f_{l-1}(x,\theta_{l-1})|.
$$
Such a specification may require considerable work, but essentially the point is that one must be able to sufficiently control $|G_l(\theta_l)-1|$. This is what the combination of \eqref{eq:cond_disc} and (A\ref{ass:1}) 4.~allow us to do with the current proof.}
We then have the following result, whose proof is in Supplementary material \ref{app:proofs}.

\begin{proposition}\label{prop:main_res}
Assume (A\ref{ass:1}). Then there exists a $C<\infty$ and $\zeta\in(0,1)$ such that for any $L\in\{2,3,\dots\}$
$$
\mathbb{E}[(\widehat{\pi}_L(f_L)-\pi_L(f_L))^2] \leq C\left(\frac{1}{P_1} + \sum_{l=2}^{L} \frac{1}{P_{l}n_l^{\beta}}+ \sum_{l=2}^{L-1}\sum_{q=l+1}^L \frac{1}{(n_ln_q)^{\beta/2}}
\left\{\frac{\zeta^{q-1}}{P_{l}} + \frac{1}{P_{l}^{1/2}P_{q}}\right\}
\right).
$$
\end{proposition}

{\color{black} Connecting to the discussion of Section \ref{sec:mlbnn}  if $\beta$ is large enough (we have $\beta=3$ later,
which is large enough) then as $\zeta=1$ one can choose $P_1,\dots,P_L$ to achieve a MSE of $\mathcal{O}(\varepsilon^{2})$ for the optimal cost of $\mathcal{O}(\varepsilon^{-2})$.
}



\section{Trace class priors}
\label{sec:tnn}

In this section we briefly discuss our trace class priors, which we aim to analyze
and motivate for numerical experiments later within the manuscript. We will begin with
a formal definition, before providing a result related to convergence of the predictive
model $f$. Given the rate obtained, we will then verify this rate through the demonstration
of a simple numerical experiment.

The neural network priors we consider are the trace class neural networks,
first proposed by Sell et al. \cite{sell2020dimension}. These priors were introduced
to mimic Gaussian priors $\overline{\pi}_0 \sim \mathcal{N}(0,\mathcal{C})$, for function-space inverse problems,
where a common way to simulate Gaussian random fields is to use the Karhunen-Lo\`{e}ve expansion
{\color{black}
\begin{equation}
\label{eq:kle}
 f = \sum_{j \in \mathbb{Z}^+} \sqrt{\lambda_j} \iota_j \Phi_j, \qquad \iota_j \sim \mathcal{N}(0,1),
\end{equation}
where $(\lambda_j,\Phi_j)_{j \in \mathbb{Z}^+}$ are the associated eigenbasis of the covariance \textcolor{black}{operator} $\mathcal{C}$,
and $\{\iota_j\}_{j \in \mathbb{Z}^+}$} is Gaussian white noise. For a detail description of the \textcolor{black}{derivation of \eqref{eq:kle}}
and its application to stochastic numerical problems\textcolor{black}{,} we refer the reader to \cite{LPS14}.
An issue that can arise with using \eqref{eq:kle} is that using such priors do not scale well with
high-dimensional problems. 
This acts as the initial motivation for trace-class neural network (TNN) priors.
The work of Sell et al. \cite{sell2020dimension} provided a justification for such priors,  
where the variances are 
summarized in a trace-class diagonal covariance operator $\mathcal{C}$ as 
$l \rightarrow \infty$.
Such a result is presented as [Theorem 1., \cite{sell2020dimension}].
The TNN prior for $\theta_l := \left((A_{1}^l,b_{1}^l),\dots,(A_{D}^l,b_{D}^l)\right)$ is defined as follows
\begin{equation}\label{eq:tnn}
\textcolor{black}{A^l_{ij,d} \sim \mathcal{N}(0, |ij|^{-\alpha}),
\quad b_{i,d}^{l} \sim \mathcal{N}(0, i^{-\alpha})} \, ,
\end{equation}
where at level $l$ a coupling can be constructed by letting
$A^l_{ij,d}=A^{l-1}_{ij,d}$ and $b_{i,d}^{l}=b_{i,d}^{l-1}$
for $i,j\in\{1,\dots,n_{l-1}\}$ and $d\in\{2,D-1\}$;
a similar assignment between layers $l$ and $l-1$ is to be adopted for
$(A_{1}^l,b_{1}^l)$ and $(A_{D}^l,b_{D}^l)$.
The notation $A_i,b_i$, for $i=1,\dots,D$, and
$\theta$, will be used to denote the limits
$\lim_{l\rightarrow \infty} A_i^l, b_i^l, \theta^l$.

{\color{black} The limiting TNN prior 
uses \eqref{eq:DNN} in place of 
\eqref{eq:kle}, with parameters defined in \eqref{eq:tnn}.} 
The approximation at level $l$, $f_l$,
simply replaces all limiting objects with the level $l$ approximations
(with width $n_l=2^l$ for all hidden layers).
The tuning parameter $\alpha$ controls how much information one believes concentrates on the first nodes.
In the case of $\alpha>1$, we refer to the prior as trace-class, which results in the term trace-class neural network prior.

We must establish strong convergence of TNN priors (Assumption 3)
if we are to successfully implement the MLSMC method.
This is provided by the following proposition.

\begin{proposition}
\label{prop:tnnconv}
Assume that for all $z\in \bbR$,
$\sigma(z) \leq |z|$.
Then for $\alpha > 1/2$, the priors defined as in \eqref{eq:tnn}
are trace-class. Furthermore, let $x \in \sX$,
and consider the TNN $f_l(x,\theta_l)$ truncated at $n_l=2^l$ terms, for $l \in \mathbb{N}$,
with the limit denoted $f(x,\theta)=\lim_{l \rightarrow \infty}f_l(x,\theta_l)$, as described above.
Then there is a $C(x)<+\infty$ such that
\begin{equation}\label{eq:tnnconv}
\bbE\left[| f_l(x, \theta_l)  - f(x,\theta) |^2\right] \leq C 2^{-(2\alpha-1) l} \, .
\end{equation}
\end{proposition}
\begin{proof}

First assume that $D=2$, i.e. there is a single hidden layer, and $m=1$.
Let $x\in \sX$ be fixed, but allow the possibility that $n \rightarrow \infty$,
provided that $\sum_j |x_j|^2 < \infty$ almost surely.

Let $A_{k,1}^l$ denote the $k^{\rm th}$ row of $A_1^l$,
allowing the possibility that $l \rightarrow \infty$
(denoted simply $A_1$, as described above),
and observe that
\begin{equation}\label{eq:firstlevel}
\bbE\left[A_{k,1}^l x + b_{k,1}^l\right] = 0 \, , \qquad  
\bbE\left[(A_{k,1}^l x + b_{k,1}^l )^2\right] =
k^{-\alpha}\left( \sum_{j=1}^n j^{-\alpha} x_j^2 + 1 \right)
\leq k^{-\alpha}( |x|^2 + 1 ) \, ,
\end{equation}
where $|x|^2 := \sum_{j=1}^n x_j^2$.

Let $A_{k,2}^l$ denote the $k^{\rm th}$ entry of the vector
$A_2^l \in \bbR^{n_l}$, again allowing $l \rightarrow \infty$
(and denote the limit by $A_2$).
Using the shorthand notation
$\xi_{k,1}^l = A_{k,1}^l x + b_{k,1}^l$
(and $\xi_{k,1} = A_{k,1} x + b_{k,1}$),
we are concerned with
\begin{eqnarray}\label{eq:traceclass}
\bbE\left[f^2(x,\theta)\right] &=& \sum_{k=1}^\infty \bbE\left[A_{k,2}\right]^2 \bbE\left[\sigma(\xi_{k,1})^2\right] + \bbE\left[b_2\right]^2
\\ \nonumber
&\leq& \sum_{k=1}^\infty \bbE\left[A_{k,2}\right]^2 \bbE\left[\xi_{k,1}\right]^2 + 1 \\ \nonumber
&\leq & C \sum_{k=1}^\infty k^{-2\alpha} + 1\, .
\end{eqnarray}
This shows that the output has finite second moment for $\alpha>1/2$.
Furthermore, a simple extension of the above calculation shows that
the rate of convergence is given by
\begin{equation}\label{eq:trivial}
\bbE\left[|f_l(x,\theta_l)-f(x,\theta)|^2\right] \leq C \sum_{k\geq 2^l} k^{-2\alpha}
\leq C 2^{-(2\alpha-1)l} \, .
\end{equation}

Now, regarding multiple levels, we can proceed by induction.
Assume we have \eqref{eq:traceclass} at level $d$,
with $\xi^l_d \in \bbR^{n_l}$ denoting the output at level $d$
(e.g. $\xi_{i,2}$ replaces $f$ in \eqref{eq:traceclass},
for $i=1,\dots$).
As above, let
$A_{k,d+1}^l$ denote the $k^{\rm th}$ row of the matrix $A_{d+1}^l$.
Then, by iterating expectations,
we have a $C<+\infty$ such that
$$
\bbE\left[(A_{k,d+1}^l \xi^l_d + b_{k,d+1}^l )^2\right] \leq   C k^{-\alpha} \, ,
$$
as in \eqref{eq:firstlevel}.
This brings us back to \eqref{eq:traceclass} at level $d+1$,
and we are done for $D>2$.

{\color{black} The extension to finite $m>1$ is trivial by repeating the arguments above.
For example, the result \eqref{eq:trivial} holds for each of the $m$ outputs.}
\end{proof}

{The result above establishes Assumption (A\ref{ass:1}.3) for $\beta=2\alpha-1$.
It also provides (A\ref{ass:1}.5) for $r=2$ when
combined with (A\ref{ass:1}.1), (A\ref{ass:1}.2) and the definition \eqref{eq:kernel}.}

{\color{black} 

\begin{remark}[Strong Convergence]
Strong convergence is 
required to 
use the MLMC \\ method, and is one attractive property of TNN.
Standard NNGP derived as the infinite-width limit of i.i.d. weights with variance
proportional to the inverse width do not have this property.
\end{remark}

\begin{remark}[Smoothness]
\textcolor{black}{
Notice that the variance of the parameters of the TNN decays along both rows and columns
(hence the mass decays faster closer to the diagonal).
Since the support decays as the width grows, the decay rate precisely controls the smoothness, and as a consequence
the level of the neuron also determines the smoothness level it captures:
low levels capture smooth features and high levels capture progressively more rough features.
Note that this constrained functional form
does not adversely affect the reconstruction, provided that the smoothness constraint $\alpha$
is not chosen too large to capture the underlying data-generating function.
We simply have a Gaussian prior with more fine-scale diagonal variance structure than a 
typical isotropic ``weight-decay'' prior $N(0,\sigma^2 I)$, 
so it provides a more structured ``width-decay" regularization.
Incidentally, this construction also guarantees stability of the weights with 
depth, so there is no constraint preventing infinite depth networks.}
\end{remark}

\begin{corollary}
\label{cor:tnnconv}
Let $\overline{A}^l, \overline{b}^l, \overline{\theta}_l$ 
denote the embedding of $\theta_l$
into $\Theta$, by setting $\overline{A}^l, \overline{b}^l$
to be 0 for rows higher than $n_l$. Then there is a $C>0$ s.t.
$$
\bbE\left[\| (\overline{A}^l,\overline{b}^l)_d - (A,b)_d \|^2\right] \leq C 2^{-(\alpha-1)l}\, , 
$$
for all layers $d\in\{1,\dots,D\}$, 
where $\|\cdot\|$ denotes the induced operator norm.
\end{corollary}

\begin{remark}
The corollary above provides an important extension from
the posterior predictive, covered in the basic propositions \ref{prop:tnnconv}
and \ref{prop:main_res} to the important and challenging 
problem of inference over the {\em parameter posterior}.
\end{remark}

}

\begin{remark}[Mutation kernel]
\textcolor{black}{
For the implementation of Algorithm \ref{alg:mlsmc_dnn}, 
the kernel $M_l$ 
in \eqref{eq:kernel} is defined as follows:
\begin{eqnarray}
K_l &=& Q_l^{(m)} \, ,\\
Q_l(\theta_l,\cdot) &:= & N(\sqrt{1-\delta^2}\theta^i_l , \delta^2 I_{}) \, , \\
q_{l+1}(\tilde{\theta}_{l+1}|\theta_l) &:=& \overline{\pi}_{l+1}(\tilde{\theta}_{l+1}) \, .
\end{eqnarray}
$Q_l$ is known as a preconditioned Crank-Nicholson (pCN) kernel \cite{cotter},
and the stepsize $\delta \in [0,1]$ 
will often be set to $\delta=0.6$. 
So the mutation of the level $l$ parameters involves $m$ steps of a basic pCN kernel.
Notice that the prior is diagonal, so that it is easy to achieve \eqref{eq:cond_disc}.
In practical terms, this means we mutate the level $l$ parameters 
$(A^{l}, b^{l})$ according to pCN, 
and then we sample the new rows and columns of $(A^{l+1}, b^{l+1})$ from the prior.
The coupling is immediate because the first $n_l$ rows and columns are identical between 
the levels $l$ and $l+1$ for a given increment. 
The level $l+1$ parameters will then be mutated again for the next increment.}
\end{remark}

\subsection{Numerical Results}

To verify the rate which was obtained in  Proposition \ref{prop:tnnconv}, we provide a simple numerical
example which analyzes \eqref{eq:tnnconv}, in the context of our trace class priors. For our experiment
we consider a setup of $\gamma=2$ and $\alpha=2$, implying that our decay rate is $\beta=2\alpha-1=3$. We test this on both a 2 layer and
3 layer NN, and with a ReLU activation function $\sigma(z) = \max\{z,0\}$, and $\sigma(z) = \tanh(z)$ activation function.
Our experiment is presented in Figures \ref{fig:sum} - \ref{fig:sum_2l}.

We observe that in both figures, as a result of using TNN priors, our decay rate matches that of \eqref{eq:tnnconv}, where
we obtain the canonical rate of convergence.
As a side example, to show that this is not attained with other choices, we compare
this other choices of priors. This can be seen from Figure \ref{fig:sum_2l}, where we notice that the rates are not as expected, implying
that they are sub-canonical.
Therefore this provides a motivation in using trace class NN priors, coupled with MLMC,
 which we exploit in the succeeding subsection.

\begin{figure}[h!]
\centering
\includegraphics[width=0.40\textwidth]{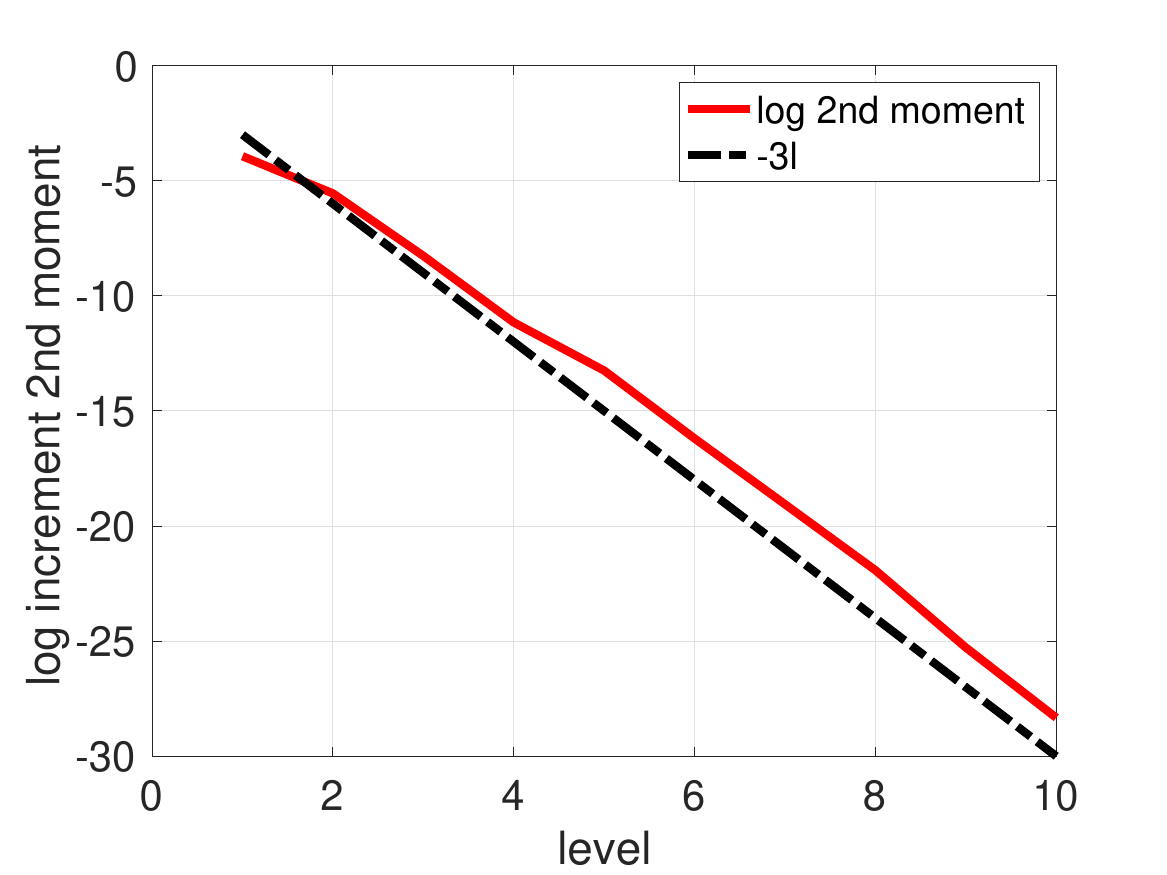}
\includegraphics[width=0.40\textwidth]{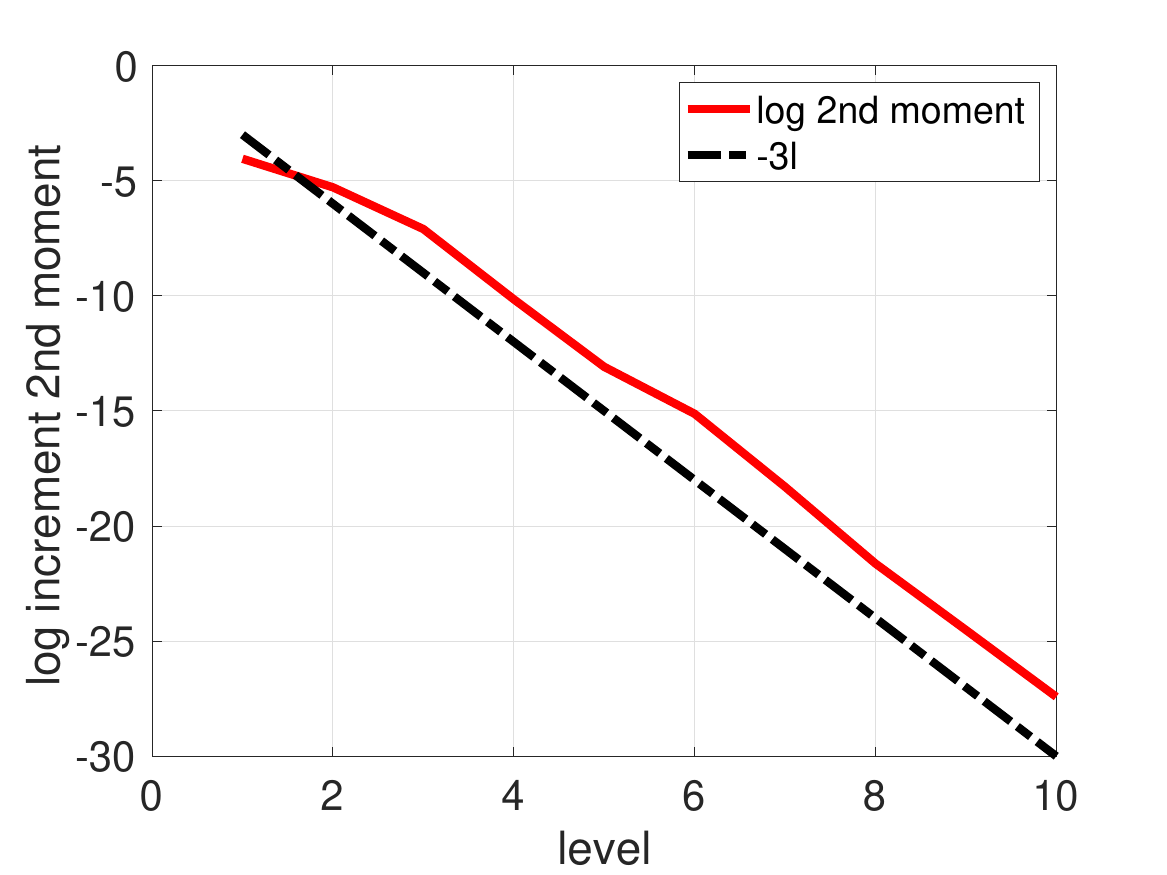}
    \caption{Increment 2nd moment vs. levels.
    The decay is $\cO(2^{-3l})$.
    Therefore the variance decays faster than $1/$cost, which is the canonical regime.
    Left: activation function of $\textrm{ReLU}(z) =\max\{0,z\}$. Right: activation function of $\sigma(z) =\tanh(z)$.}
    \label{fig:sum}
\end{figure}

\begin{figure}[h!]
\centering
\includegraphics[width=0.40\textwidth]{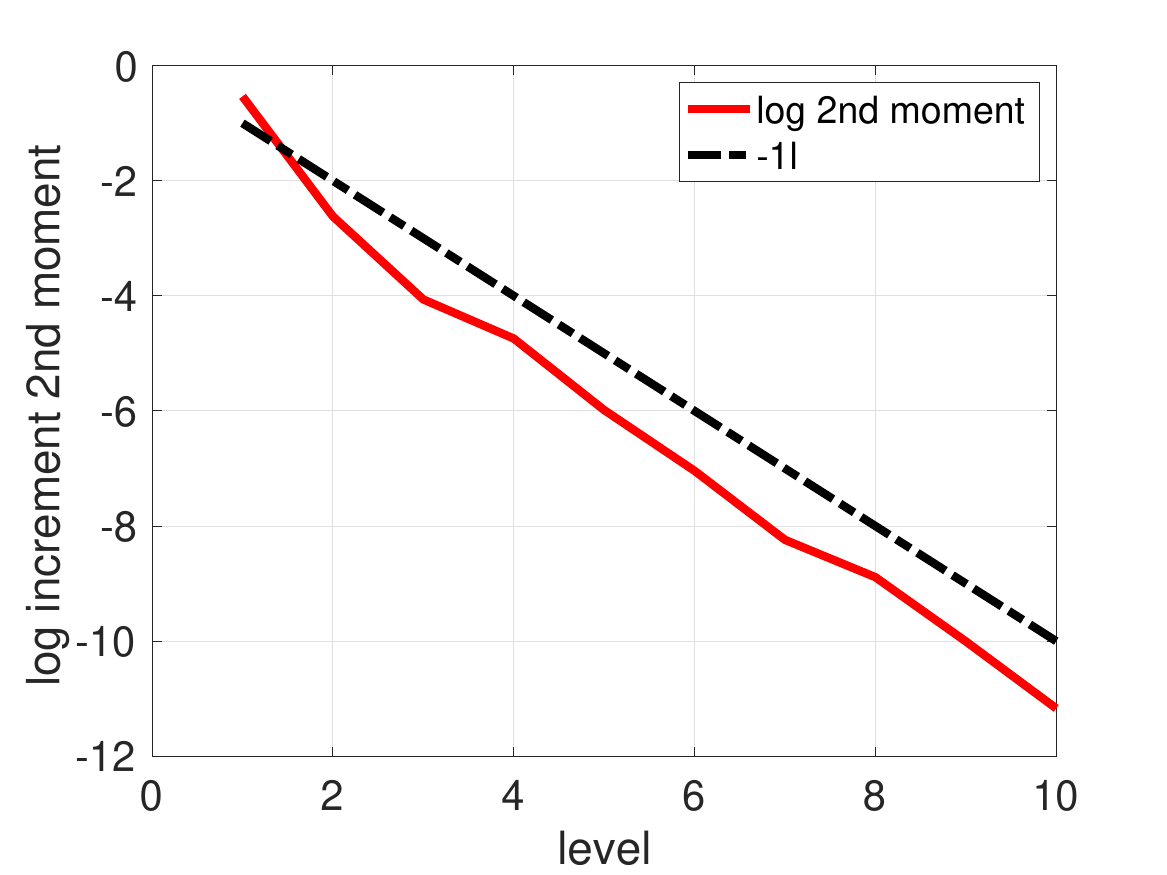}
\includegraphics[width=0.40\textwidth]{williams_3-eps-converted-to.pdf}
    \caption{Increment 2nd moment vs. levels.
    The decay is $\cO(2^{-l})$.
    Therefore the variance decays slower than $1/$cost, which is a sub-canonical regime.    
    Left: activation function of $\textrm{ReLU}(z) =\max\{0,z\}$. Right: activation function of $\sigma(z) =\tanh(z)$.}
    \label{fig:sum_2l}
\end{figure}

\section{Numerical Experiments}
\label{sec:numerics}

\textcolor{black}{For this section we provide our main numerical experiments,
which is to use our TNN priors on various Bayesian tasks
related to machine learning.
Specifically we will aim to use these priors with the proposed methodology of Algorithm \ref{alg:mlsmc_dnn},
comparing it to single-level Monte Carlo SMC sampler which use the same priors. We will test this on a range of machine learning tasks, which include regression,
classification (spiral and MNIST data set) and reinforcement learning, where we hope to show the benefit of using MLMC, by attaining the canonical rate of convergence.}
\textcolor{black}{We provide a sensitivity analysis of our methodology w.r.t. modifications of parameter choices (such as within the TNN and also the MLMC framework)
for the MNIST problem, illustrating the robustness of the approach. 
Finally, we present a much more challenging practical example of sentiment classification for 
the IMDb dataset of 50,000 movie reviews, using a simplified version of the method presented.
The results are compared to the state-of-the-art in Bayesian deep learning for uncertainty quantification (UQ). 
This illustrates the practical value of the method,
as well as its flexibility and generality.}
The Code can be downloaded from \href{https://github.com/NKC123/MLTNN}{\texttt{{https://github.com/NKC123/MLTNN}}}.

\subsection{Regression Problem}

Our first  numerical experiment will be based on a Bayesian regression problem
given in \eqref{eq:modelreg}.
Our objective is to identify the posterior distribution on $\theta$
specified in \eqref{eq:bnn_posterior}
with the likelihood specified in \eqref{eq:reglikelihood}
$$
\pi(\theta | y_{1:N}) \propto p(y_{1:N}|\theta) \overline{\pi}(\theta),  \qquad p(y_{1:N}|\theta) = \prod_{i=1}^N \phi_m(y_i;f(x_i,\theta),\Sigma_i) \, ,
$$
where 
$\overline{\pi}(\theta)$ is the TNN prior \eqref{eq:tnn}, discussed in Section \ref{sec:tnn}.
This can subsequently be used for prediction at a new test data point $x^*$
$$
\bbE\left[f(x^*,\theta)|y_{1:N}\right] = \int_\Theta f(x^*,\theta) \pi(\theta | y_{1:N}) d\theta \, ,
$$
as well as for the computation of other posterior predictive quantities such as
the variance
$$
\bbV\textrm{ar}\left[f(x^*,\theta)|y_{1:N}\right] = \int_\Theta \left(f(x^*,\theta)-\bbE\left[f(x^*)|y_{1:N}\right]\right)^2 \pi(\theta |y_{1:N}) d\theta.
$$
For our observational noise we take $\Sigma_i^2 =0.01^2 I$.
We use a TNN prior with a tanh activation function, i.e. $\sigma(z) = \tanh(z)$. As we have no explicit form for our ``ground-truth",
we run a high-level simulation to attain a posterior, which we take as our reference solution. For this we take $n_l=2^l$
where $l=7$.
For this model we specify $N=200$, and consider  $\sX=\bbR^{10}$ so the input NN width is $n_0=10$, and the data points are generated as normal distributed random variables, 
i.e. $x_i \sim \mathcal{N}(2,0.5)$.
Then for each $x_i$, the following model  produces $y_i$.
To compute the MSE we consider 100 replications,
and then take the MSE.
We will compare and apply an SMC sampler and multilevel counterpart, i.e. a MLSMC sampler.
The setup of the MLSMC sampler is presented in Algorithm \ref{alg:mlsmc_dnn}. In this first experiment
we will apply both these methodologies and observe the error-to-cost
rate $\xi$ such that cost $\propto$ MSE$^{-\xi}$. \textcolor{black}{Recall the cost can be computed using equation \eqref{eq:mlmcCost}}.
We will compare this for different
values of $\beta=2\alpha-1$.
Also recall that the cost (per one sample) is of the order $\cO(2^{\gamma l})$, with $\gamma=2$,
so as a rule of thumb one should expect the attain a canonical rate
of convergence when $\alpha>1.5$ and $\beta>\gamma$.

Below in Figure \ref{fig:reg_results1} we present  results of applying both SMC sampler and a MLSMC sampler.
As mentioned, further details on the MLSMC sampler can be found in \cite{beskos2018multilevel}. We conduct our numerical experiment with levels
$ L \in \{3,4,\ldots,7\}$. The first numerical results are presented in Figure \ref{fig:reg_results1} which compares
 values of $\alpha=\{1.7, 1.9, 2, 3\}$. We plot both the SMC sampler and MLSMC sampler where the prior for each method is our
 TNN prior.
 Furthermore we also plot the credible sets around the MSE values, given by the thin blue and red curves.
 
 As we can observe from the results, The MLMC methodology shows computational benefits, where for a fixed MSE the cost associated with attaining
 that order of MSE is smaller than the SMC sampler. We can see that the difference in cost for the lowest MSE
 is approximately a factor of 10. 
 This indicates that the error-to-cost rates are different, and to verify our theoretical findings
 we plot the canonical rate in black, which matches that of our proposed methodology. 
\begin{figure}[h!]
\centering
\includegraphics[width=0.45\textwidth]{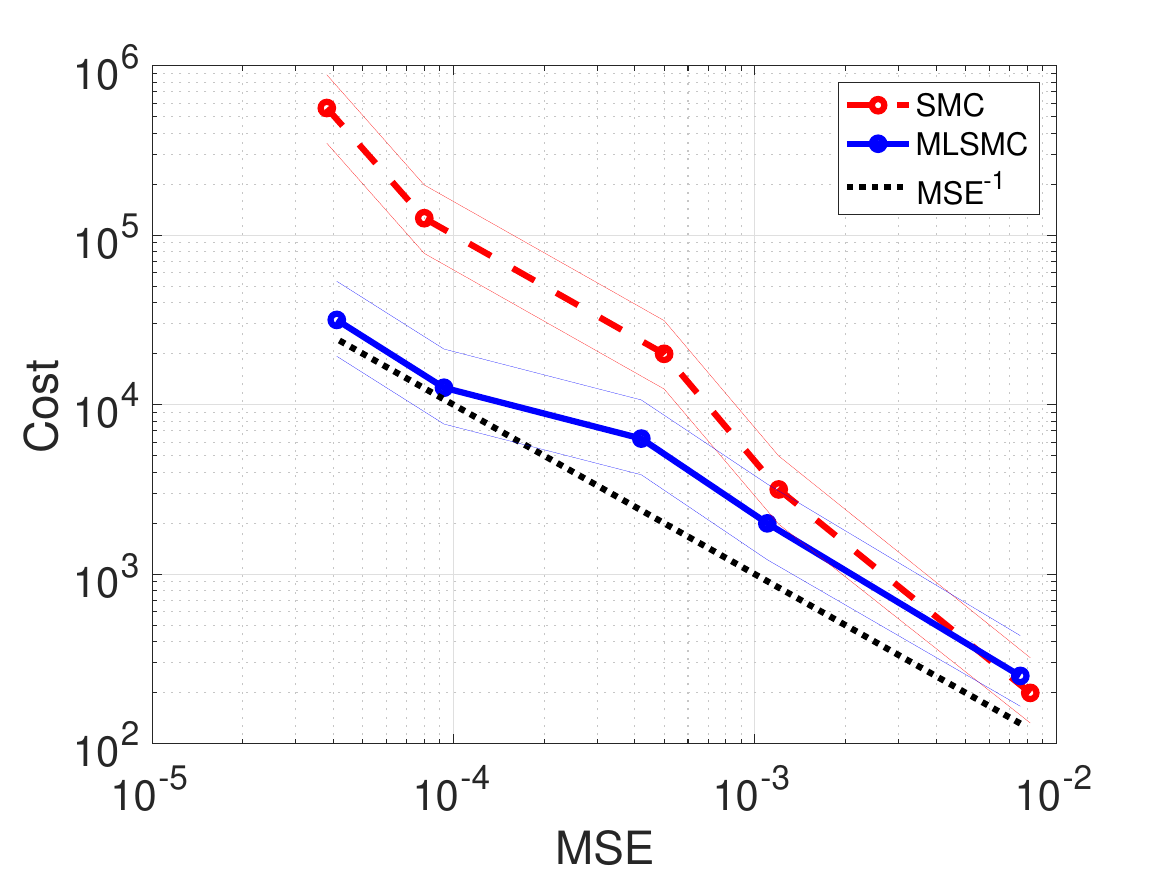}
\includegraphics[width=0.45\textwidth]{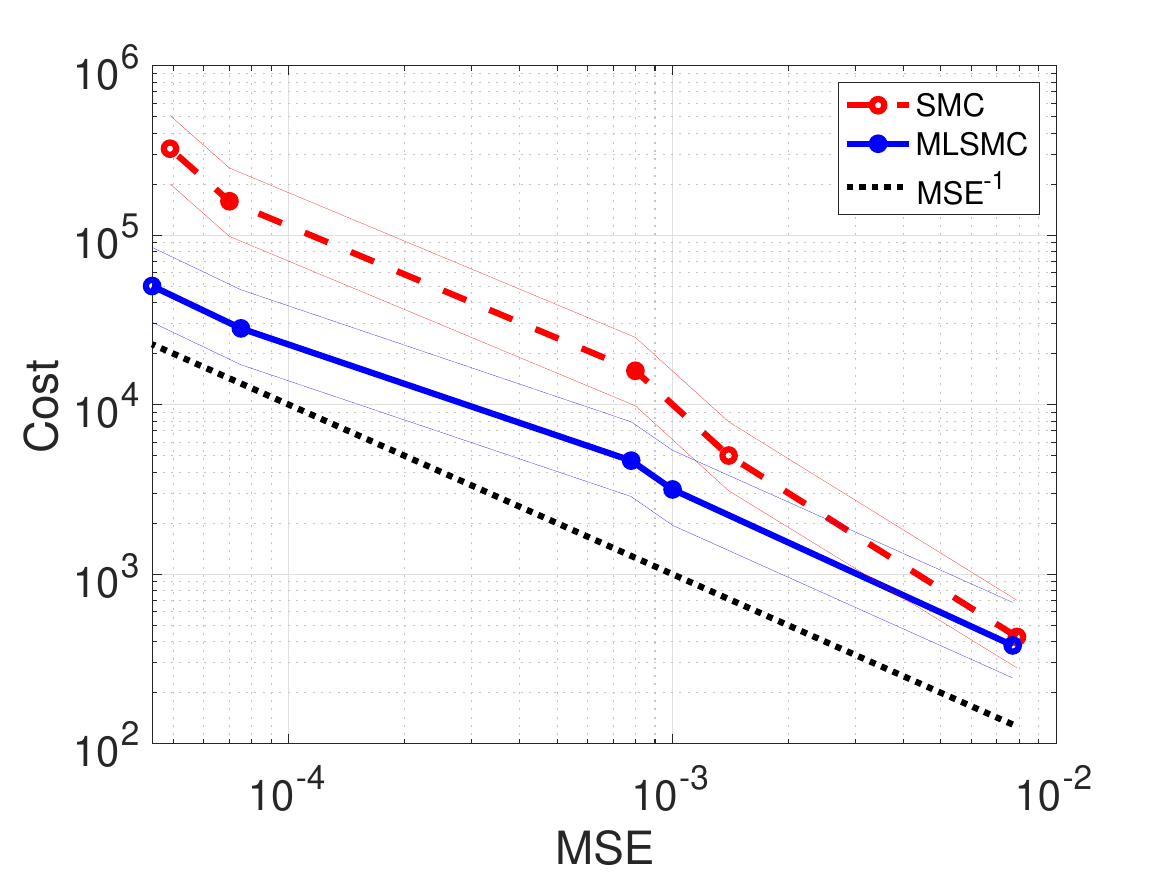}
\includegraphics[width=0.45\textwidth]{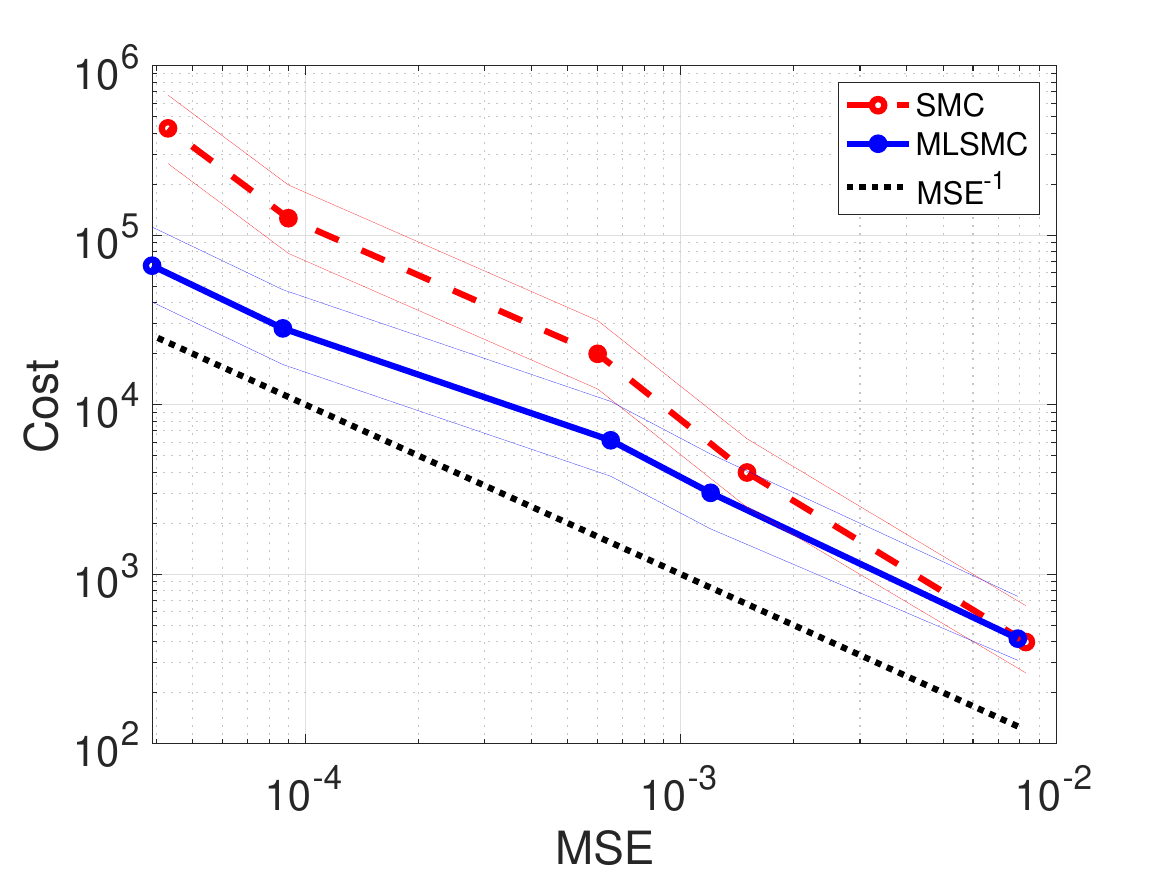}
\includegraphics[width=0.45\textwidth]{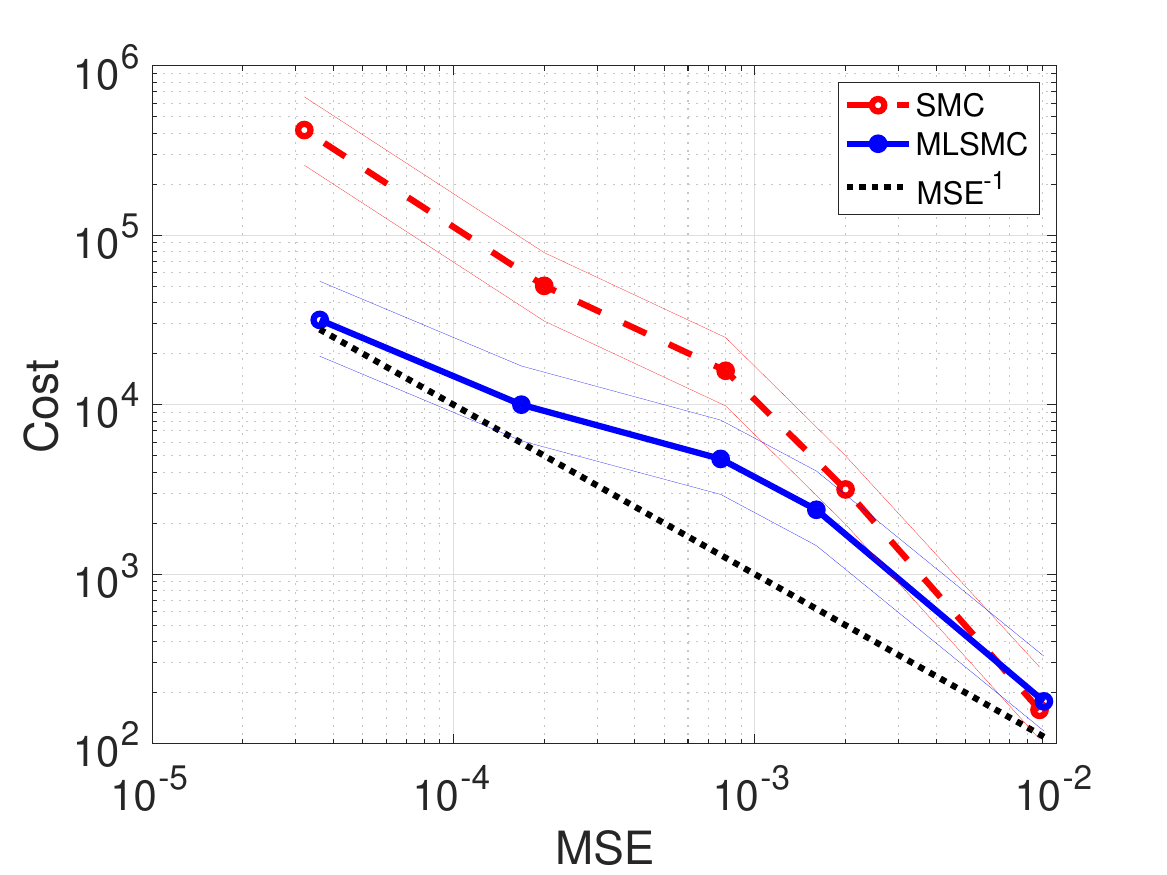}
 \caption{Regression problem: error vs cost plots for SMC and MLSMC using  TNN priors.
 Top left: $\alpha=3$. Top right: $\alpha = 2.0$. Bottom left: $\alpha=1.9$. Bottom right: $\alpha=1.7$.
 Credible sets are provided in the thin blue and red curves.}
    \label{fig:reg_results1}
\end{figure}

\subsubsection{Sub-canonical rates}
The results
 thus far are for values of $\alpha$ which attain the canonical rate. Now let us consider alternative choices of $\alpha$, which should result in
 a sub-canonical rate.
We now modify the parameter values of
$\alpha \leq 1.5$.
In this case, $\beta \leq \gamma = 2$,
where the cost of a single simulation of $f_l$ is $\cO(2^\gamma l)$,
so sub-canonical convergence is expected.
We will consider two choices of
$\alpha \in \{1.1,1.4\}$, where we keep the experiment and the parameter choices the same.
Our results are presented in Figure \ref{fig:reg_results2}. 
As expected, the complexity rate is closer to the single level case,
but an improvement factor of around 3 is still observed at the resolutions considered.

\begin{figure}[h!]
\centering
\includegraphics[width=0.45\textwidth]{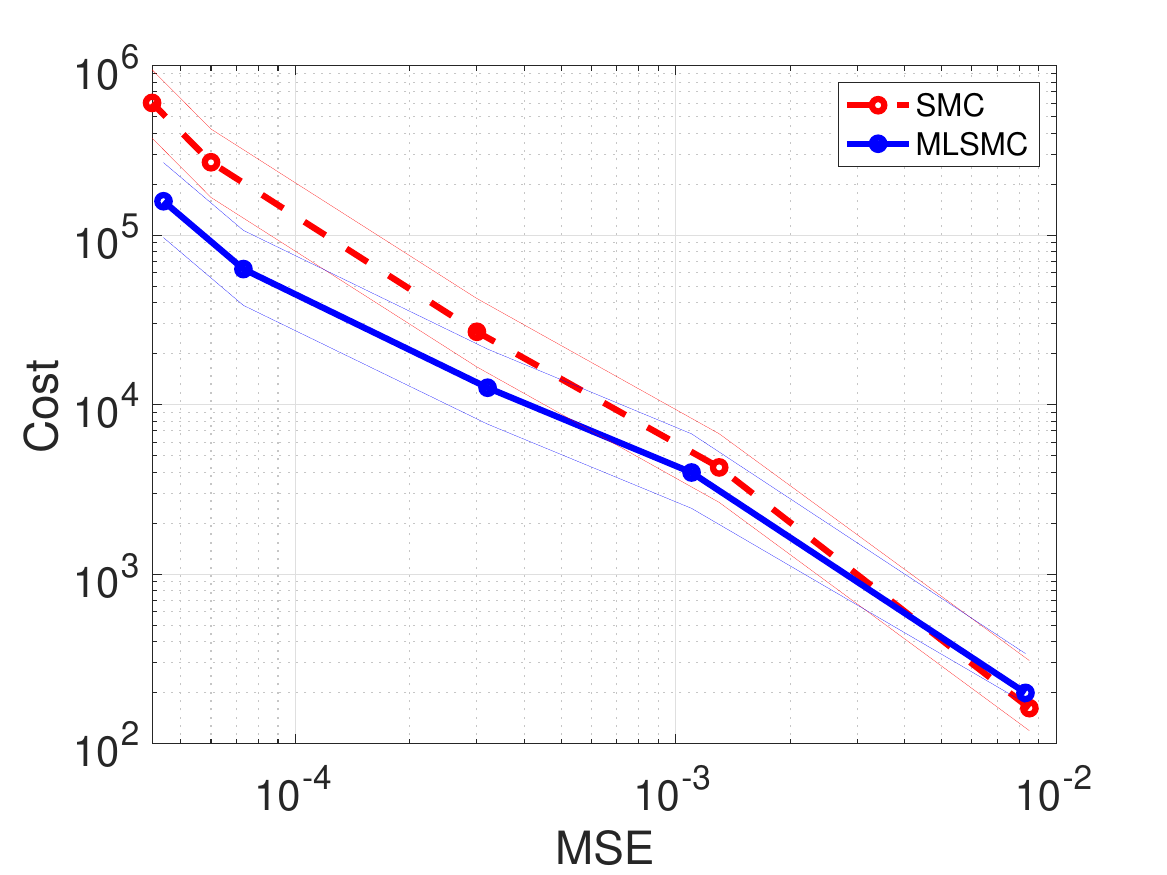}
\includegraphics[width=0.45\textwidth]{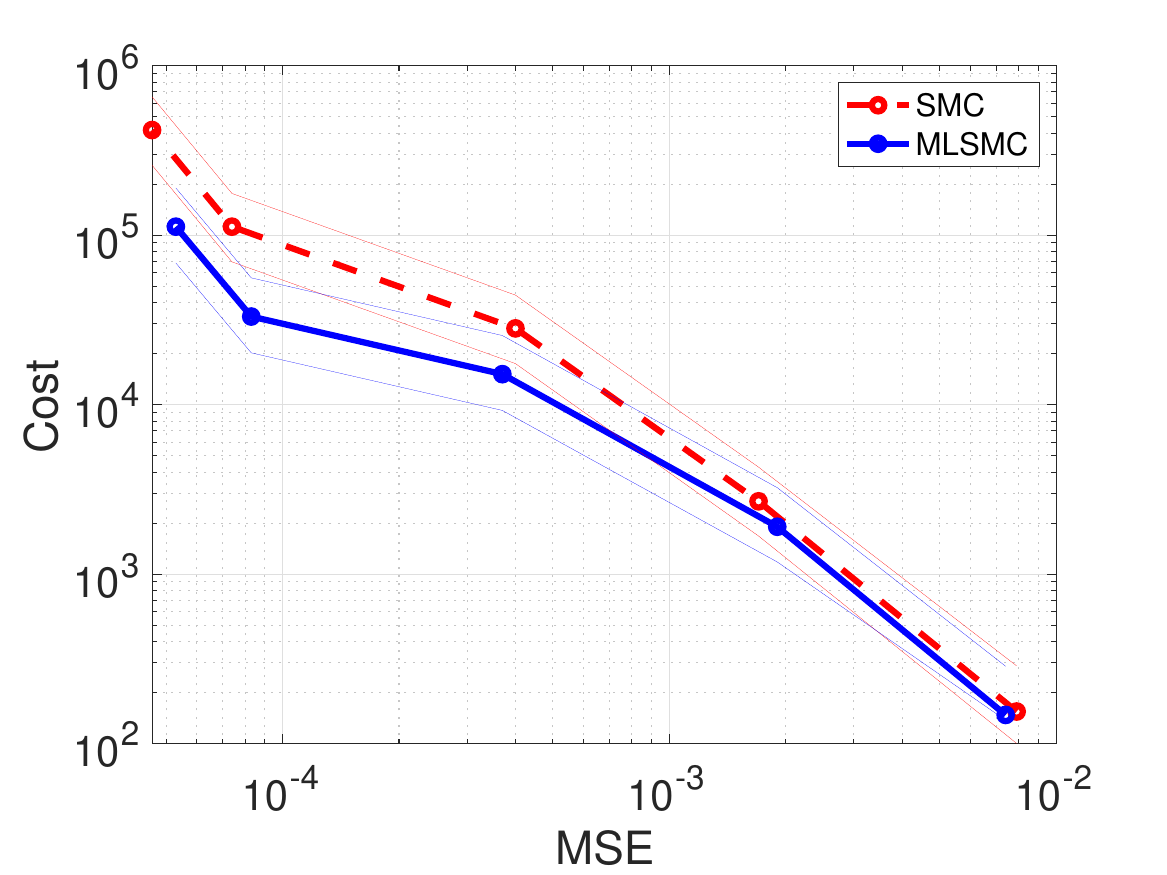}
 \caption{Regression problem: error vs cost plots for SMC and MLSMC using TNN priors.
 Left: $\alpha=1.4$.  Right: $\alpha = 1.1$.  Credible sets are provided by thin blue and red curves.}
    \label{fig:reg_results2}
\end{figure}

The results from Figure \ref{fig:reg_results2} indicate that if we consider TNN priors of lower regularity, then we can expect
to achieve sub-canonical rates, in the complexity related to the MSE-to-cost ratio. This promotes the question, of whether
canonical rates, in our setup and framework are possible for non-smooth random fields. This, and related questions, will be considered for future work.

\subsection{
Spiral Classification Problem}

Following the definition of the classification model
\eqref{eq:class}, just as we considered the likelihood \eqref{eq:reglikelihood}
for the regression problem, we consider the likelihood \eqref{eq:classlikelihood}
for the classification problem. The posterior is again given by \eqref{eq:bnn_posterior}.
Again predictions at an input $x^*$ not part of the data are delivered by the posterior predictive distribution.
For example, the marginal posterior class probability
$\bbP(k|x^*,y_{1:N})$ is given by
$$
\bbP(k|y_{1:N}) = \int_\Theta h_k(x^*,\theta) \pi(\theta | y_{1:N}) d\theta \, ,
$$
and the variance associated to a prediction of class $k$
for input $x^*$ is given by
$$
\bbV\textrm{ar}\left[k|y_{1:N}\right] = \int_\Theta h_k(x^*,\theta)^2 \pi(\theta | y_{1:N}) d\theta - \bbP(k|y_{1:N})^2 \, .
$$
For our classification problem we are interested in classifying two data sets, where the data
is based on a 2D spiral.
In other words, the data for class $k=1$
is generated through the following equations,
for $i=1,\dots,N=500$
\begin{align*}
x_{i1} &= a \upsilon_i^p \cos (2t_i^p\pi) + \epsilon_{i1}, \\
x_{i2} &= a \upsilon_i^p \sin (2t_i^p\pi) +  \epsilon_{i2},
\end{align*}
 where
 $\upsilon_i, t_i \sim \mathcal{U}[0,1]$ uniform, $\epsilon_i \sim \mathcal{N}(0,0.1^2)$
 and the parameter choices are
 $a=16$ and $p=0.05$.
 The data associated to class $k=2$ is generated similarly, except with a shift of $\pi$
 in the arguments of the trigonometric functions.
 This data can be seen in  Figure \ref{fig:class},  
 where our two classes correspond to the colors, i.e. data labelled as
 Class $k=1$ is blue and data labelled as Class $k=2$ is yellow.
The setup for the classification problem is similar to the regression problem.
Again we conduct experiments for different choices of $\alpha$,
i.e. $\alpha=\{1.7, 1.9, 2, 3\}$ and levels $L \in \{3,4,\ldots,7\}$.
Our parameter choices are the same as the regression problem. For our reference solution we again use $n_7=2^7$ with a $tanh$ activation function, for the prior and take a high-resolution solution.
We present firstly in Figure \ref{fig:class_results1} the results related
to the canonical rates obtained for the MSE-to-cost rates. As we observe, we see a clear distinction in the difference in costs as
the MSE is decreased, with again roughly a magnitude of $\sim \mathcal{O}(10^1)$. This again
highlights the computationally efficiency of our proposed methodology, with the combination of our TNN prior.
For our final experiments for the classification problem, we consider alternative values of $\alpha$, i.e., $\alpha \in \{1.1,1.4\}$,
which are presented in Figure \ref{fig:class_results2}. Again, similar to the regression problem, we observe that sub-canonical
rates are obtained for $\alpha<1.5$, where the difference in cost for the lower value of MSEs is not as significant.

\begin{figure}[h!]
\centering
\includegraphics[width=0.45\textwidth]{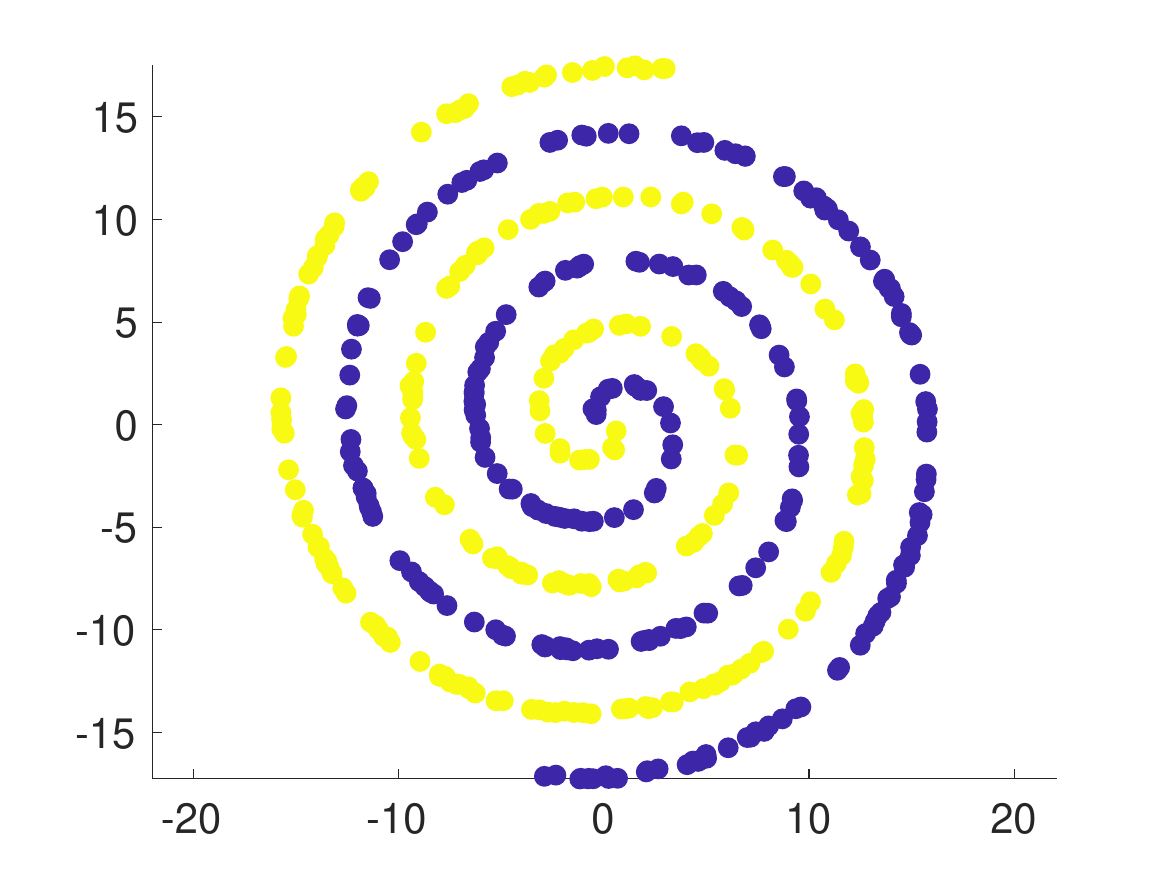}
 \caption{Classification problem: Our data is generated as a 2D spiral with two classes, Class $k=1$ being in blue and Class $k=2$ in yellow.}
 \label{fig:class}
\end{figure}

\begin{figure}[h!]
\centering
\includegraphics[width=0.45\textwidth]{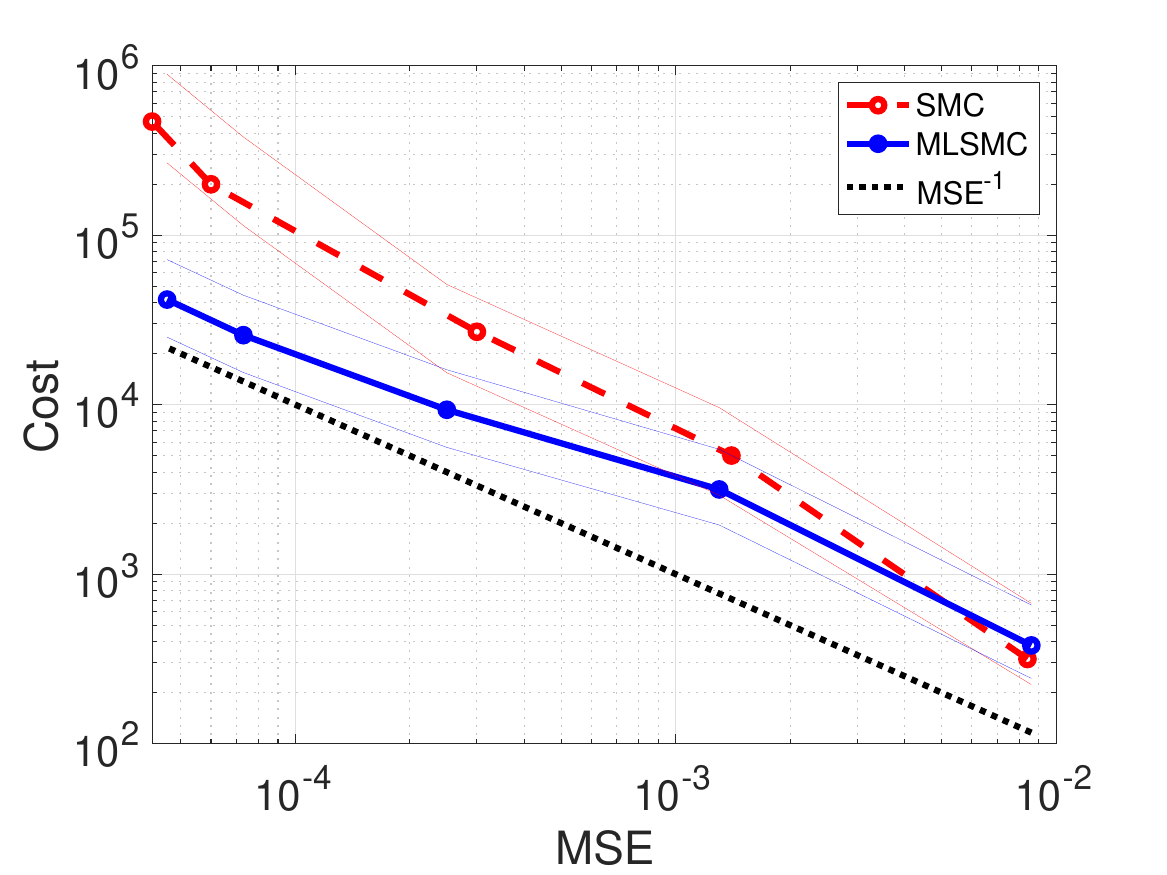}
\includegraphics[width=0.45\textwidth]{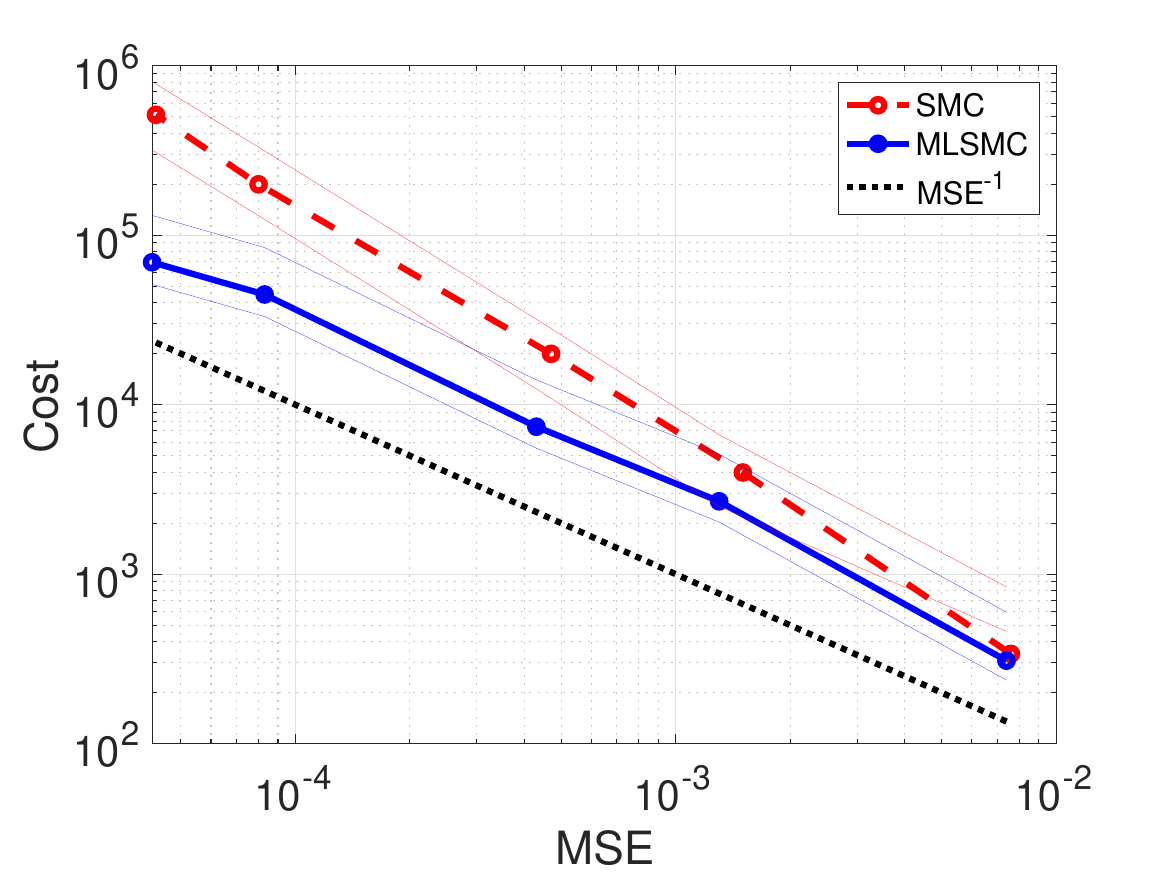}
\includegraphics[width=0.45\textwidth]{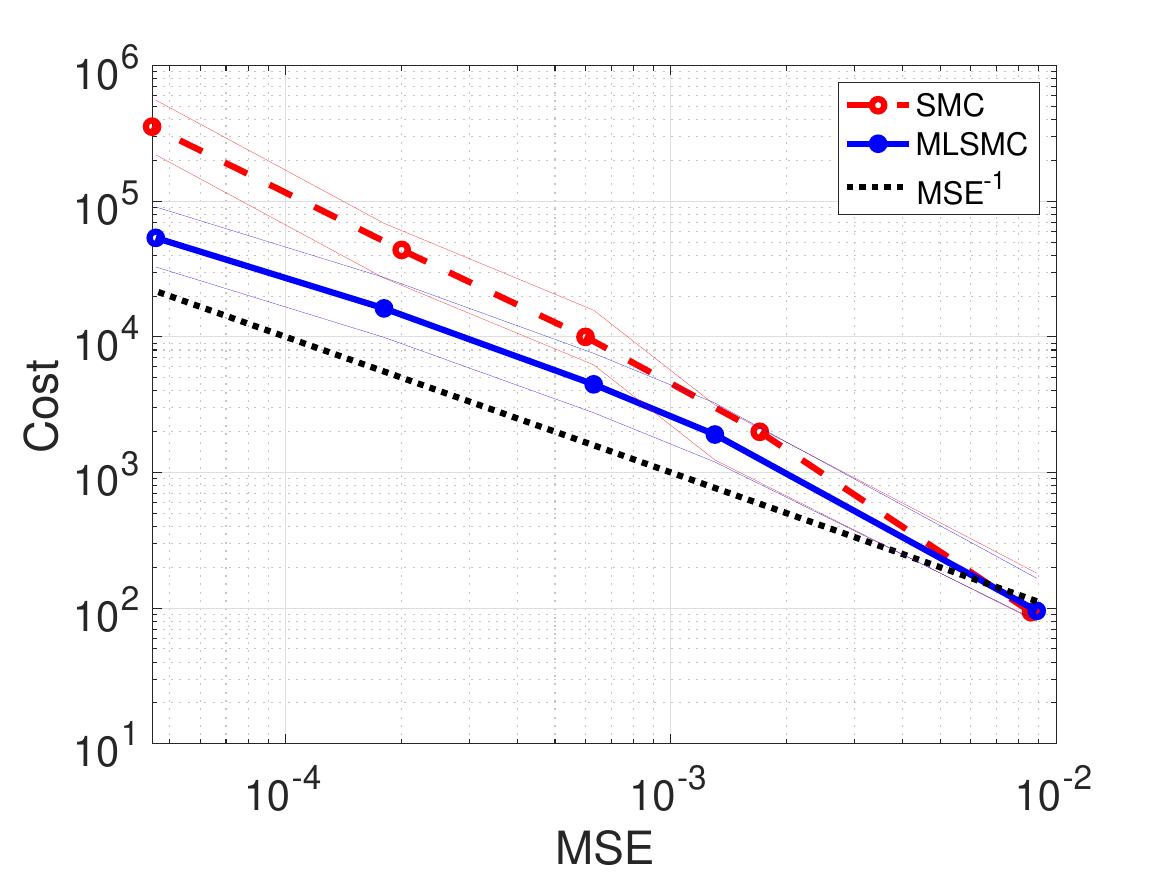}
\includegraphics[width=0.45\textwidth]{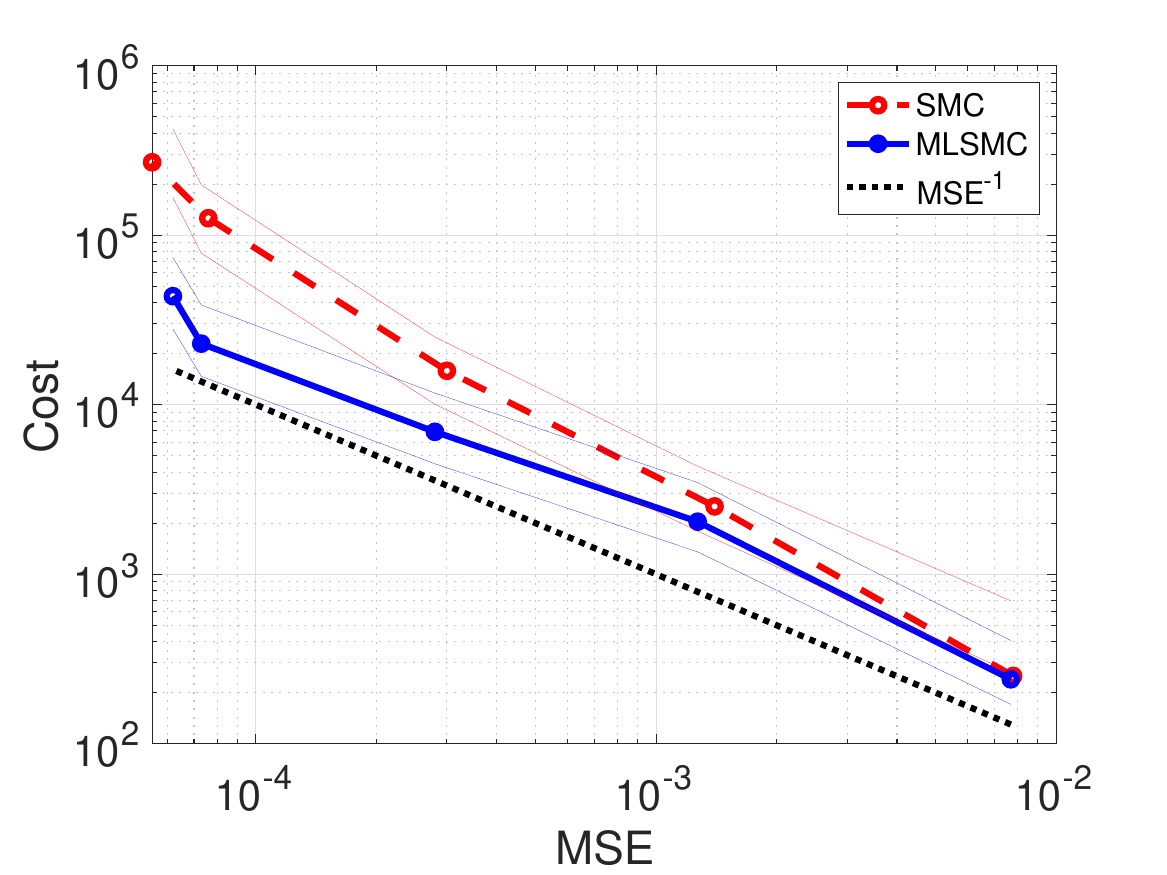}
 \caption{Classification problem: error vs cost plots for SMC and MLSMC, using TNN priors.
 Top left: $\alpha=3$. Top right: $\alpha = 2.0$. Bottom left: $\alpha=1.9$. Bottom right: $\alpha=1.7$.
  Credible sets are provided in the thin blue and red curves.}
    \label{fig:class_results1}
\end{figure}

\begin{figure}[h!]
\centering
\includegraphics[width=0.45\textwidth]{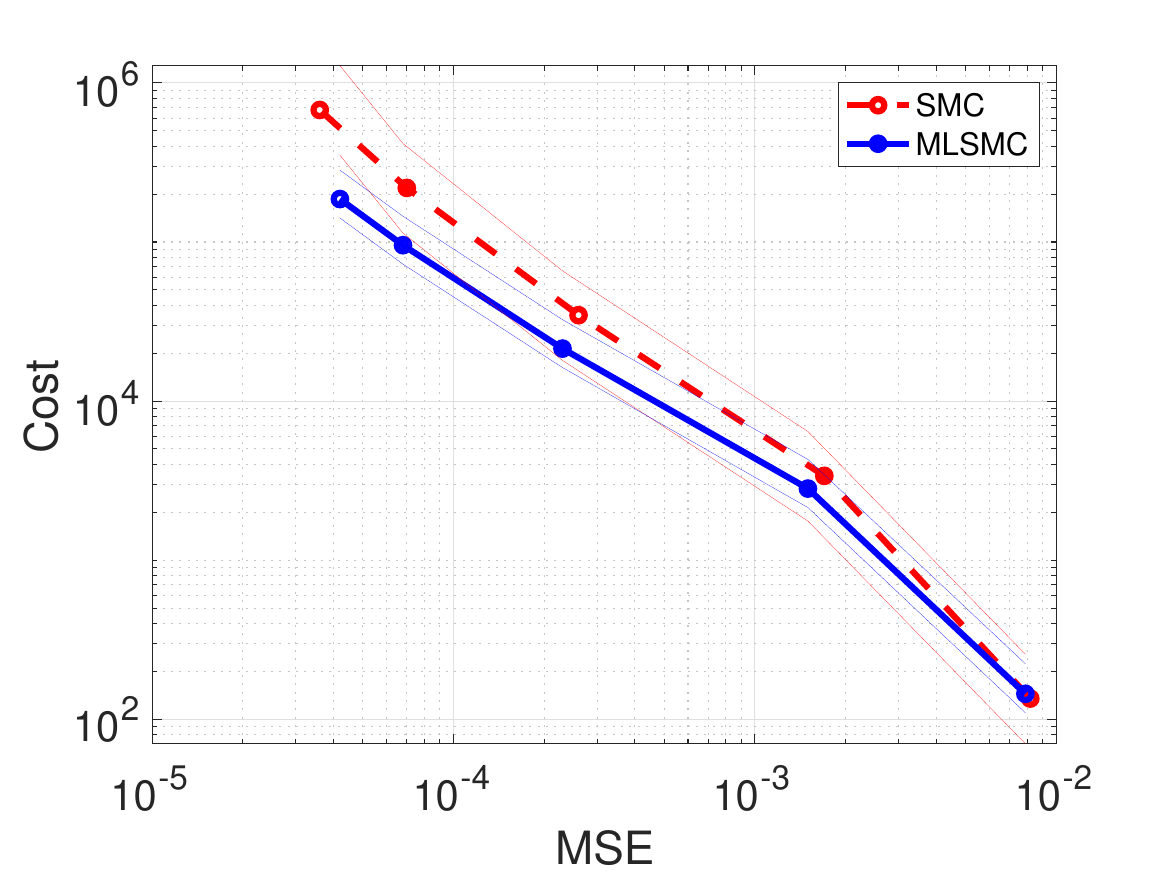}
\includegraphics[width=0.45\textwidth]{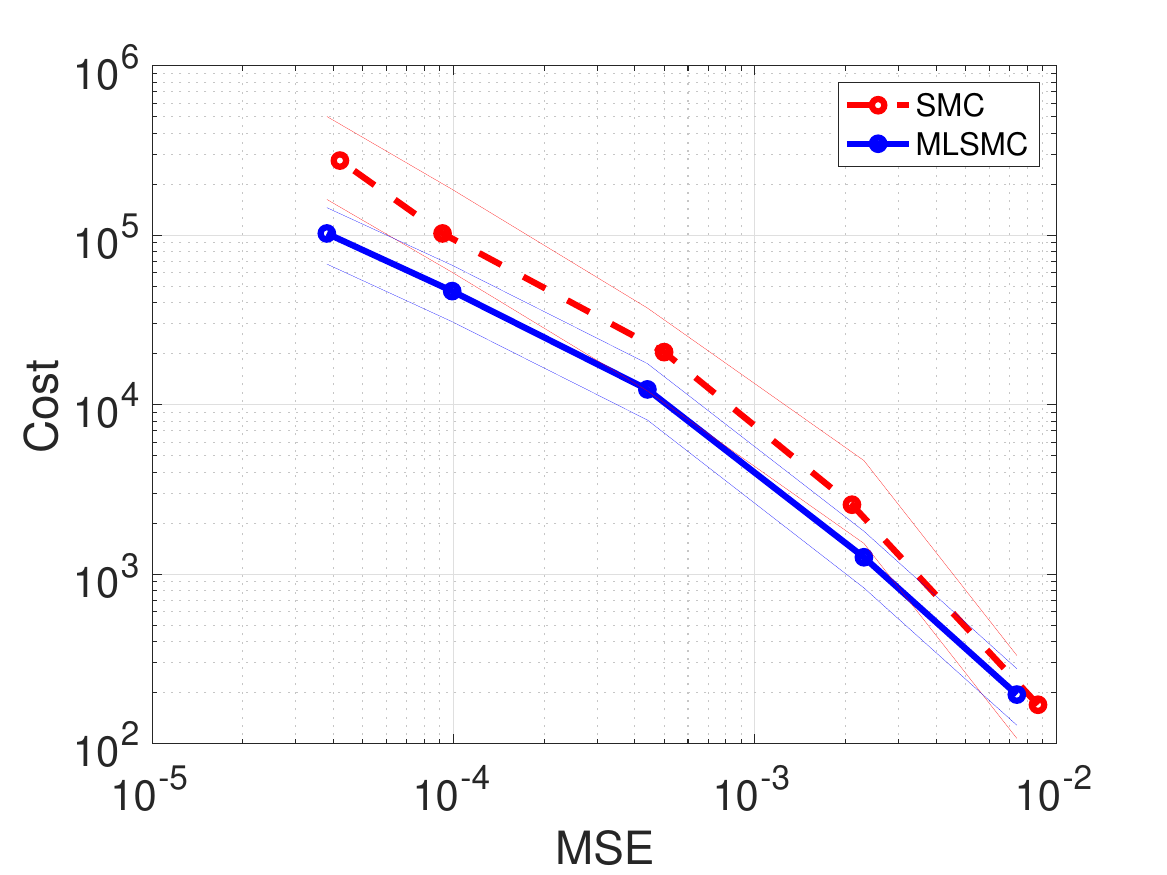}
 \caption{Classification problem: error vs cost plots for SMC and MLSMC, using TNN priors.
Left: $\alpha=1.4$. Right: $\alpha=1.1$.  Credible sets are provided in the thin blue and red curves.}
    \label{fig:class_results2}
\end{figure}

\subsection{
Binary MNIST Classification}
\label{sec:mnist}
\textcolor{black}{We now turn our attention to a more realistic classification example, which is based on the well-known MNIST dataset. This dataset is relatively high-dimensional dataset which contains handwritten digits from 0 to 9. Our problem that we consider now is a binary MNIST classification problem, where we are interested in two classes. One class corresponds to a collection of zeros and the other is related to the class of ones. This can be seen in Figure 
\ref{fig:mnist}. Our methodology and experimental setup remains the same, where we refer the reader to the previous subsection on specific choices on our MLSMC algorithm. The main difference we have now is related to dimensionality and number of samples, where this is choosen now as $d=784$ and $m=400$. To overcome this challenging dimensionality issues, we apply principle component analysis (PCA) which reduces the dimension to $d=100$.}
\\
\textcolor{black}{Our numerical experiments are provided in Figure \ref{fig:mnist1} - \ref{fig:mnist2}. From analyzing Figure \ref{fig:mnist1} we notice again that there is a clear difference in the MSE-to-cost plots for both methods. Specifically, by exploiting a TNN prior within our MLSMC methodology we attain the canonical rate of convergence. This is consistent amongst all subplots for values of $\alpha=\{1.7, 1.9, 2, 3\}$. Again, if we do not consider the TNN prior, what we observe is an improvement compared to the SMC sampler methodology, however we do not attain the canonical rate. This is observed in Figure \ref{fig:mnist2}, where we have $\alpha=\{1.1, 1.4\}$.}
\begin{figure}[h!]
\centering
\includegraphics[width=0.5\textwidth]{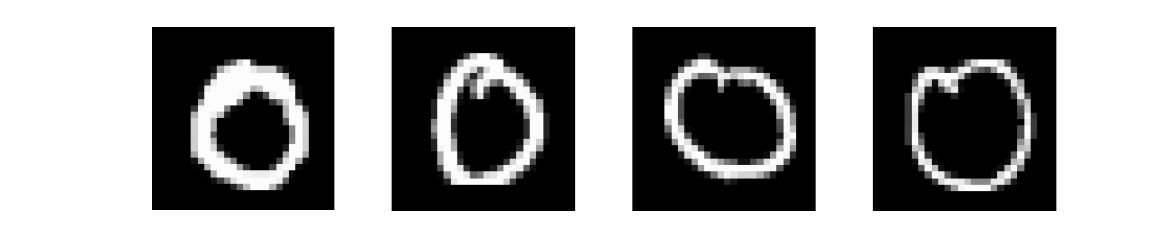} \\
 \  \includegraphics[width=0.5\textwidth]{mnist0-eps-converted-to.pdf}
\caption{Random selection of 0's and 1's from the MNIST dataset.}
\label{fig:mnist}
\end{figure}

\begin{figure}[h!]
\centering
\includegraphics[width=0.45\textwidth]{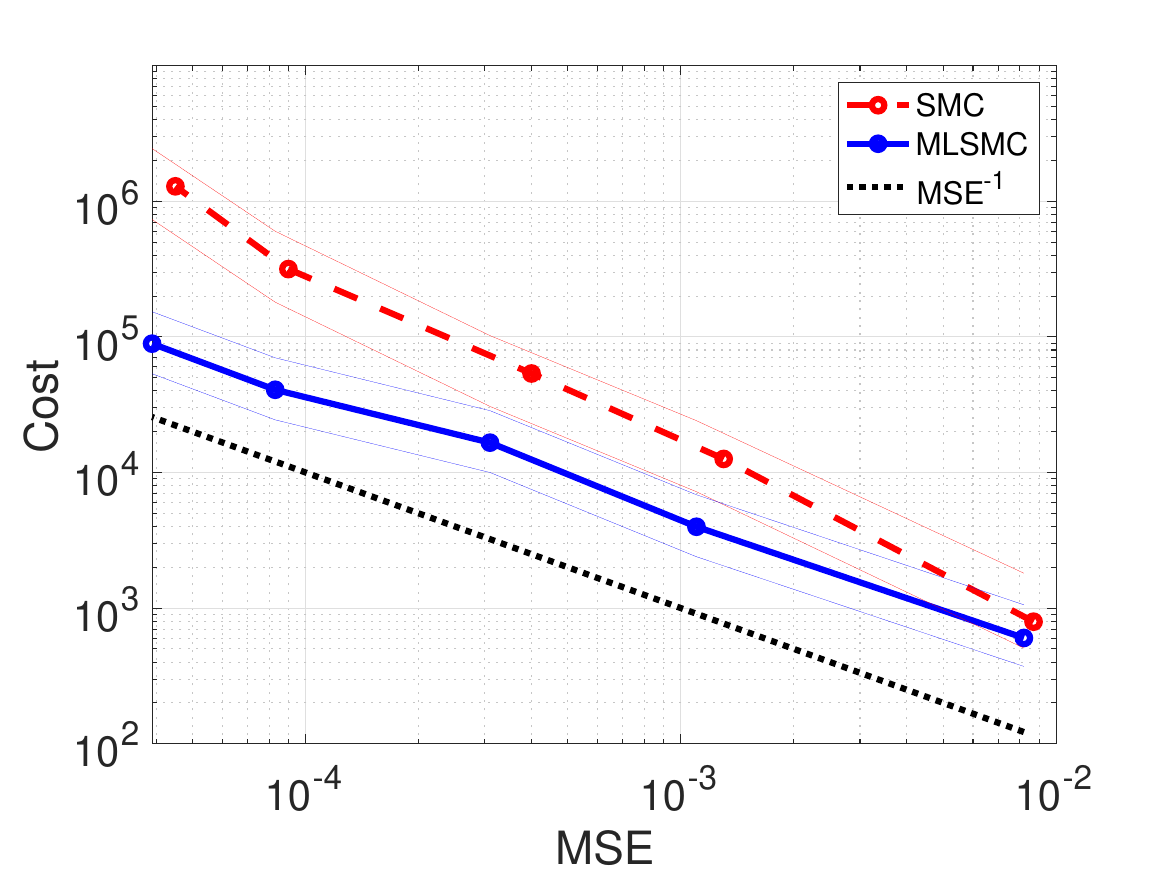}
\includegraphics[width=0.45\textwidth]{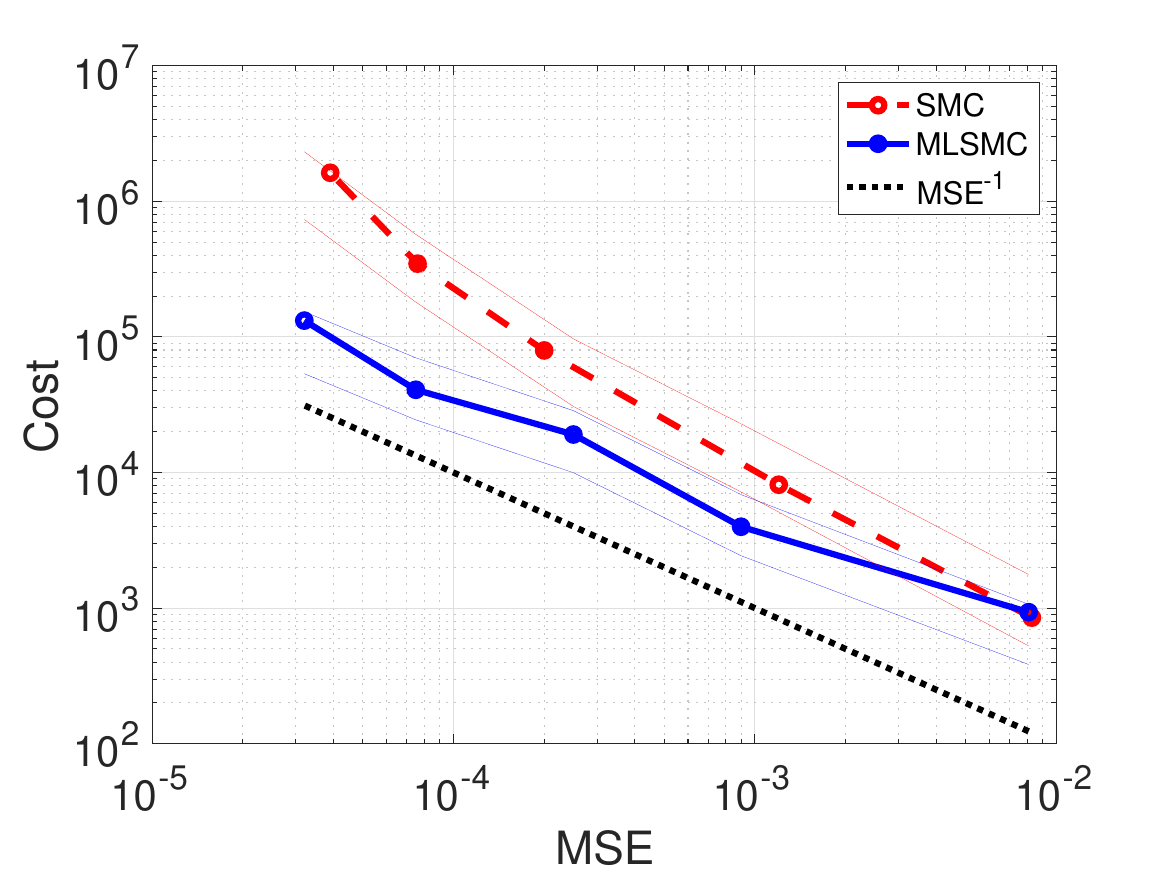}
\includegraphics[width=0.45\textwidth]{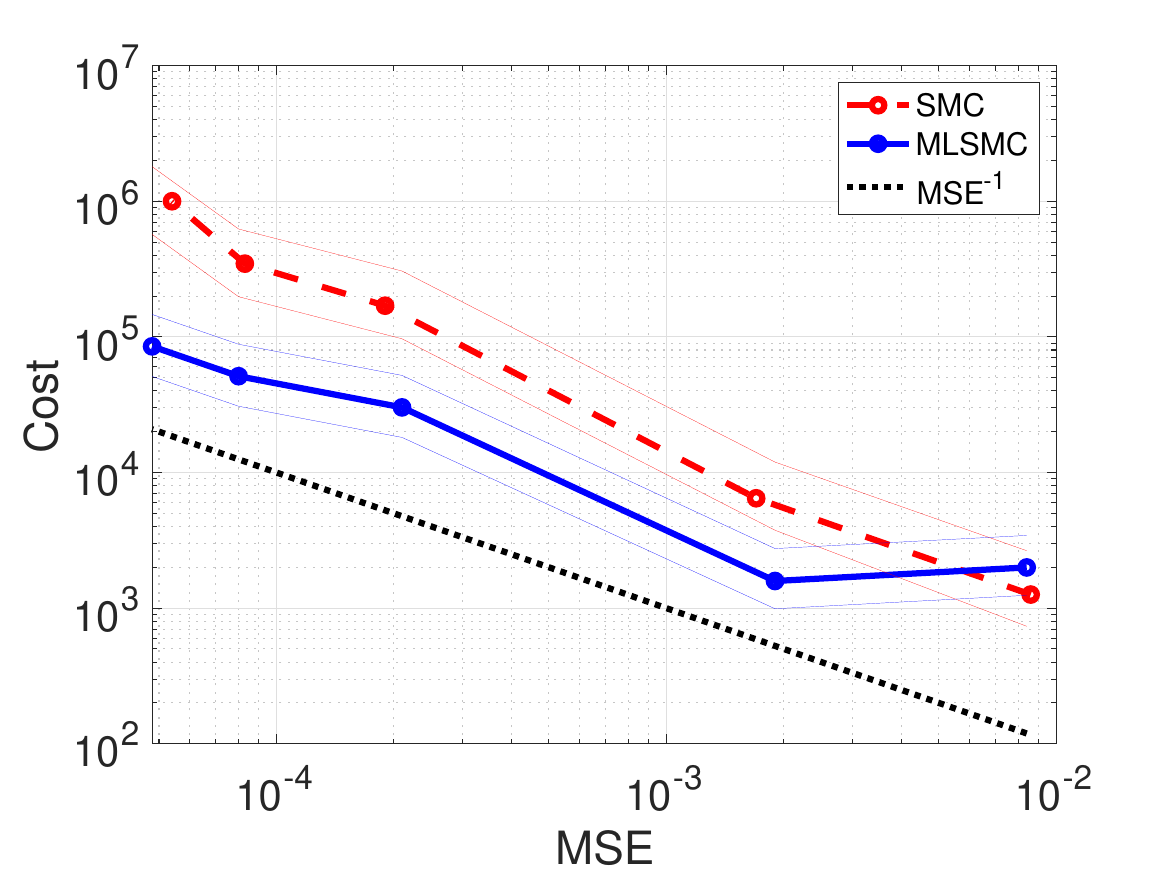}
\includegraphics[width=0.45\textwidth]{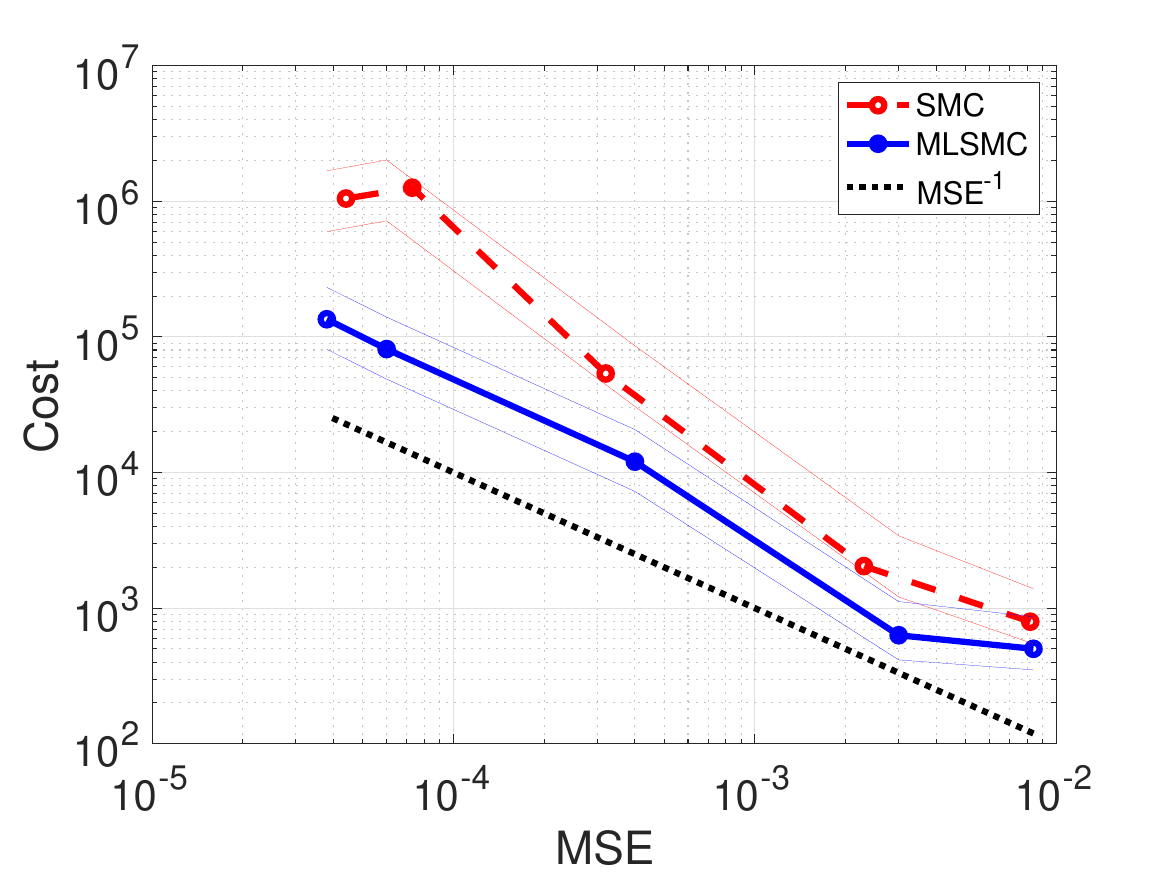}
 \caption{MNIST Classification problem: error vs cost plots for SMC and MLSMC, using TNN priors.
 Top left: $\alpha=3$. Top right: $\alpha = 2.0$. Bottom left: $\alpha=1.9$. Bottom right: $\alpha=1.7$.
  Credible sets are provided in the thin blue and red curves.}
  \label{fig:mnist1}
\end{figure}

\begin{figure}[h!]
\centering
\includegraphics[width=0.45\textwidth]{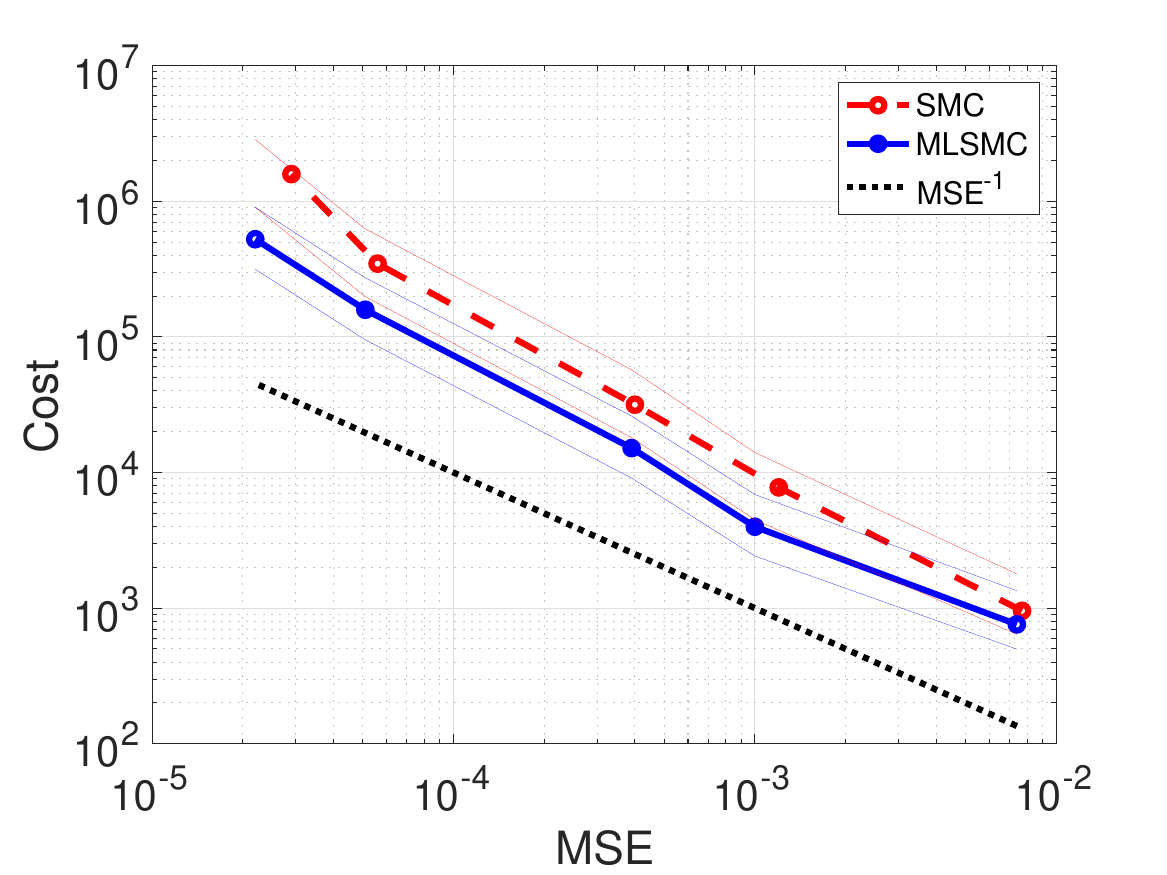}
\includegraphics[width=0.45\textwidth]{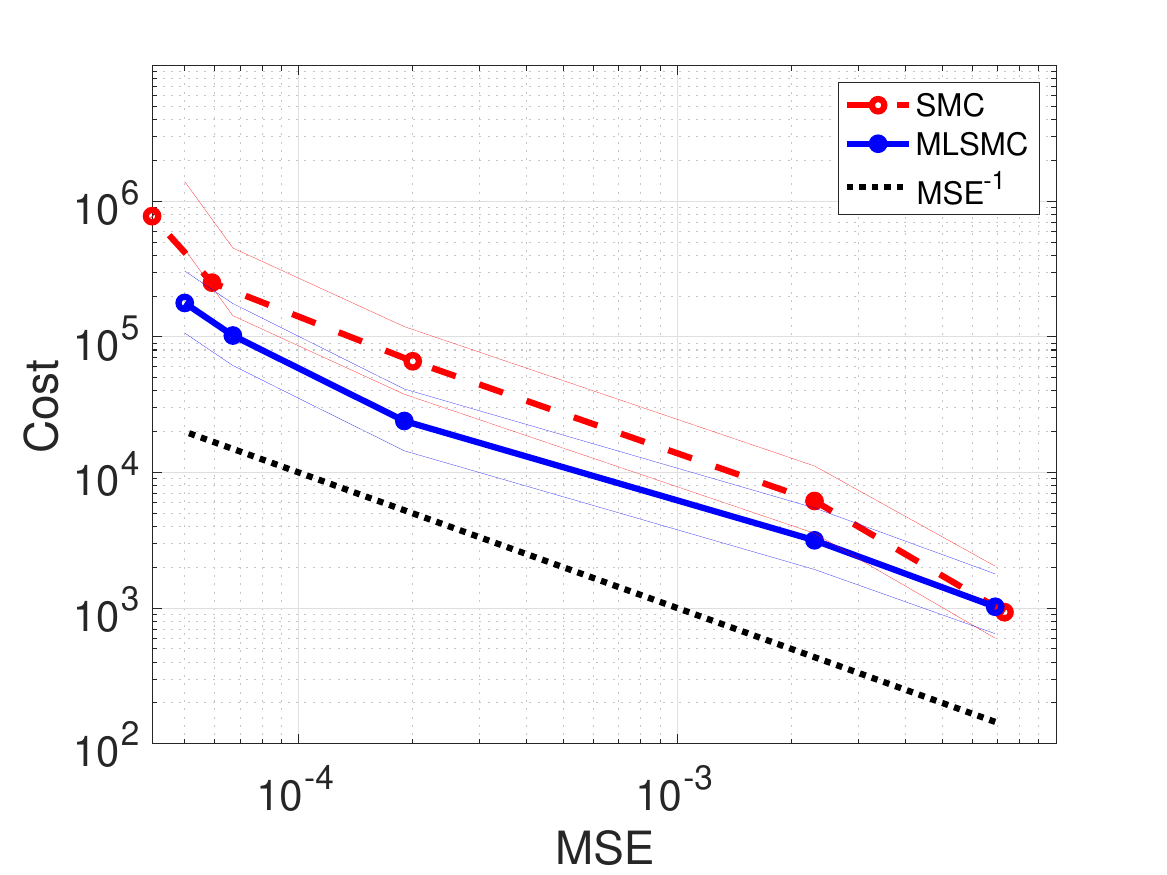}
 \caption{MNIST Classification problem: error vs cost plots for SMC and MLSMC, using TNN priors.
Left: $\alpha=1.4$. Right: $\alpha=1.1$.  Credible sets are provided in the thin blue and red curves.}
  \label{fig:mnist2}
\end{figure}

\subsection{Reinforcement Learning}
\subsubsection{Setup}
Our final numerical example will consist of an inverse reinforcement learning problem.
Unlike the previous two examples,
reinforcement learning is not concerned with pattern formations, but instead decision making. Specifically it is
used to solve stochastic optimal control problems. Before we continue with the specific example, we first
recall some common notation and provide our setup for Bayesian reinforcement learning, which is largely motivated and taken from \cite{sell2020dimension}.

A Markov decision process is defined by a controlled Markov chain $\{X_t\}_{t \in \mathbb{N}}$, referred to as the state process, the control process
$\{A_t\}_{t \in \mathbb{N}}$ and the an optimality criterion. The state process takes values in a bounded set $\mathcal{X} \subset \mathbb{R}^d$.
The control process is $\mathcal{A}$-valued with $\mathcal{A} = \{1,\ldots,M\}$. Therefore the state process propagates by
$$
X_{t+1}|(X_{1:t}=x_{1:t},A_{1:t}=a_{1:t}) \sim p(\cdot|x_t,a_t) \, ,
$$
where for any state-action pair $(x_t,a_t), p(\cdot|x_t,a_t)$ is a probability density. Now let $r:\mathcal{X} \rightarrow \mathbb{R}$ be the
reward function, then the accumulated reward given in a policy and initial state $X_1=x_1$ is
$$
C_{\mu}(x_1) = \mathbb{E}_{\mu}\Big[\sum^{\infty}_{t=1}\beta^tr(X_t)|X_1=x_1\Big] \, ,
$$
where $\beta \in (0,1)$ is a discount factor and $\mu:\mathcal{X} \rightarrow \mathbb{R}$ is the policy mapping.
A policy $\mu^*$
is optimal if $C_{\mu}^{*}(x_1) > C_{\mu}(x_1)$, for all $(\mu,x_1) \in \mathcal{X} \times \mathbb{R}$, and can be found by solving
Bellman's fixed point equation \cite{DPB85}.
Using this fixed point solution $v:\mathcal{X} \rightarrow \mathbb{R}$, we can derive the optimal policy by
\begin{equation}
\label{eq:optimal}
\mu^*(x) = \argmax_{a \in \mathcal{A}} \Big[  \int_{\mathcal{X}} p(x'|x,a)v(x') dx' \Big] \, ,
\end{equation}
that is, the optimal action at any state is the one that maximizes the expected value function at the next state.
{It is assumed that the state evolution is deterministic, i.e.,
there is a map {$\mathcal{T}: \cX \times \cA \rightarrow \cX$} such that
$p(x'|x,a)=\delta_{\mathcal{T}(x,a)}(x')$.}
Noise is added to model imperfect action selections
so that at each time step
the chosen action is a random variable given by
\begin{equation}
\label{eq:action}
A_t(x_t) = \argmax_{a \in \mathcal{A}}\Big[ v(\mathcal{T}(x_t,a))
+ \epsilon_t(a)  \Big], \qquad \epsilon_t \sim \mathcal{N}(0,\sigma^2I) \, .
\end{equation}
Now what remains is to define our likelihood function associated to our example.
Our data will consist of a collection of noise corrupted state-action pairs 
$\{x_t,a_t\}_{t=1}^T$,
 and the aim is to infer the value function \eqref{eq:optimal} that leads to the actions $a_t$ for the current state $x_t$. Using the noisy action selection process \eqref{eq:action}
 we have the likelihood function defined as
 \begin{equation}
 \label{eq:like}
 \mathcal{L}(a_{1:T}|x_{1:T},v,\sigma) =
 \prod^T_{t=1}p(a_t|x_t,v,\sigma) = \prod^T_{t=1}p(a_t|v_t,\sigma) \, ,
 \end{equation}
where $v_t\in\bbR^M$ is defined by $v_{t,k}=v(T(x_t,a=k))$.
Following from \eqref{eq:action},
the factors in \eqref{eq:like} have a closed form expression
$$
p(a_t | v_t, \sigma) = \frac1{\sigma} \int_{\bbR} \phi\left(\frac{t-v_{t,a_t}}{\sigma}\right)
\prod_{\stackrel{i=1}{i\neq a_t}}^M \Phi\left(\frac{t-v_{t,i}}{\sigma}\right) dt \, ,
$$
where $\phi$ and $\Phi$ denote the standard normal PDF and CDF, respectively.
The derivation 
can be found in \cite{sell2020dimension}.


\begin{figure}[h!]
\centering
\includegraphics[width=0.35\textwidth]{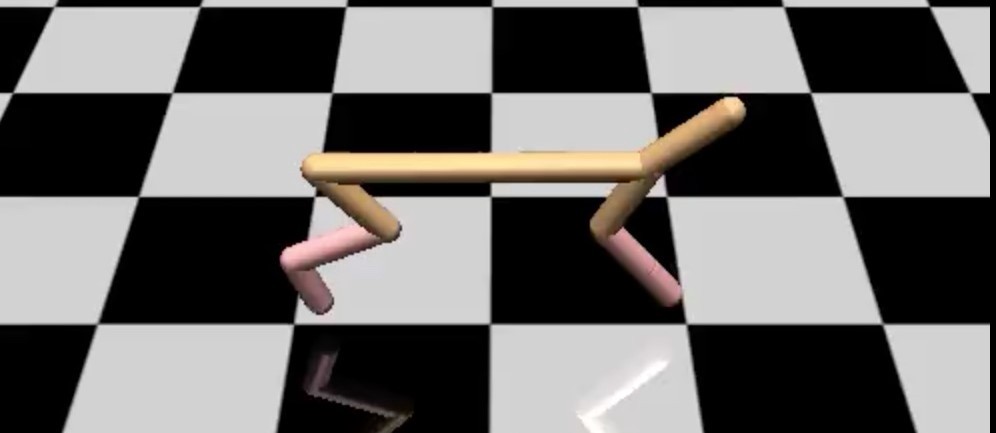}
 \caption{Plot of HalfCheetah, which has states $x_t \in \mathbb{R}^{17}$.
 Its goal is to run as quickly as possible, while not moving
 its body parts more than necessary.}
    \label{fig:half_cheetah}
\end{figure}

\subsubsection{Experiment}

For our  Bayesian reinforcement learning example,
we consider the HalfCheetah example \cite{TEM12},
where the goal 
is for the HalfCheetah
to run as fast as possible without moving its body more than necessary. 
The state space 
is $\cX=\mathbb{R}^{17}$,
and the action space is $\cA=\{1,\dots,8\}$.
We also take $T=100$ for the observations, where the data generation is taken similarly to \cite{sell2020dimension}, and $\sigma^2=0.01^2$.
As before we aim to show the benefit of MLMC for the SMC sampler,
where we choose values $\alpha=\{1.7, 1.9, 2, 3\}$ with levels $L \in \{3,4,\ldots,7\}$.
For our truth we again take a high-resolution solution to our problem, similarly to what was
done for the 2D spiral experiment.
These values are again chosen such that we can attain the canonical rate of convergence.
Our results are presented in Figure \ref{fig:reinf_results}
which compare both methodologies with TNN priors, for the different values of $\alpha$. Similarly to our previous results we notice a a bigger difference
is cost for lower values of the MSE, which indicate both error-to-cost rates are different. As done for the previous experiments,
 we plot the canonical rate in black to verify our methodology attains the rate.

Finally our last experiment is to verify that one cannot attain the canonical rate
if we assume that $\alpha<1.5$. In Figure \ref{fig:reinf_results2} we present similar experiments
to Figure \ref{fig:reinf_results}, but with modified values of $\alpha \in \{1.1,1.4\}$.
We can observe that the results are similar to the previously attained,
in the sense that the complexity grows at a faster rate than it does in
the canonical case, but still slower than for the single level approach.
Again this verifies the theory.

\begin{figure}[h!]
\centering
\includegraphics[width=0.45\textwidth]{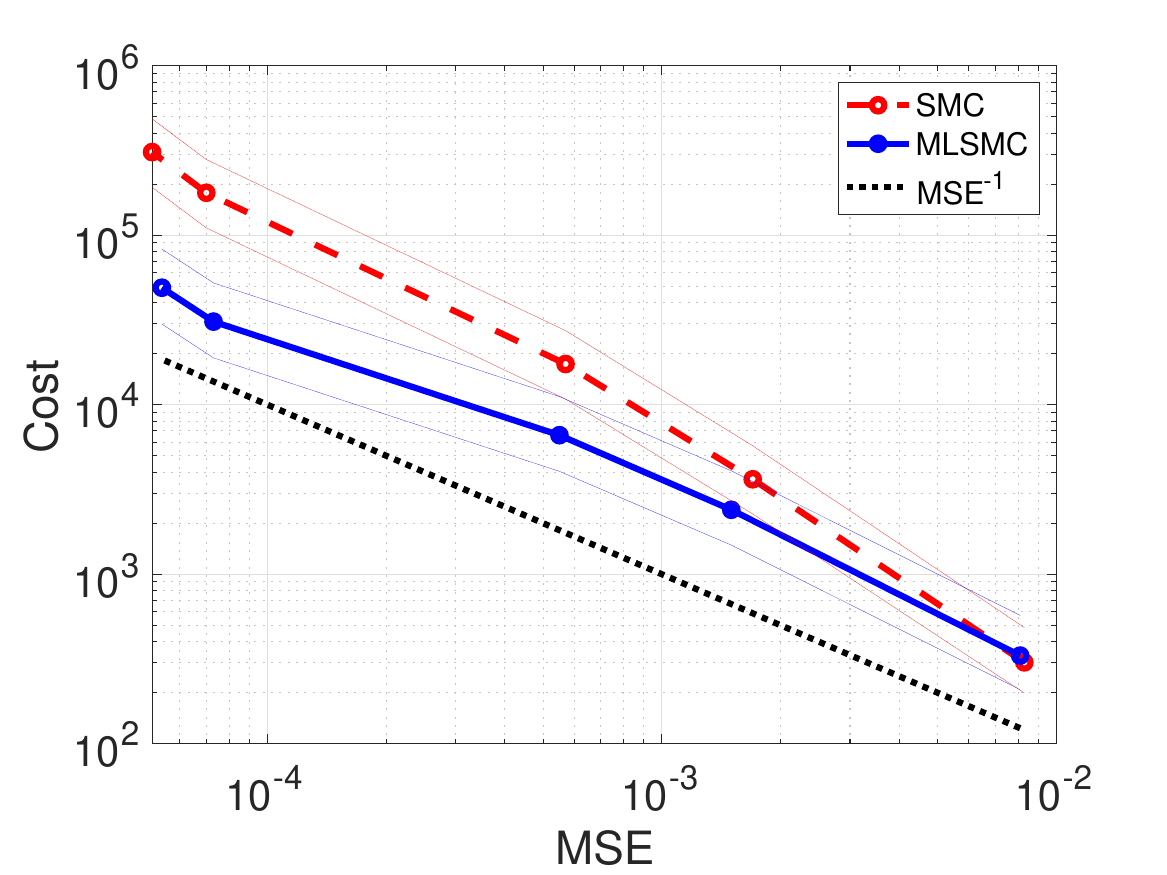}
\includegraphics[width=0.45\textwidth]{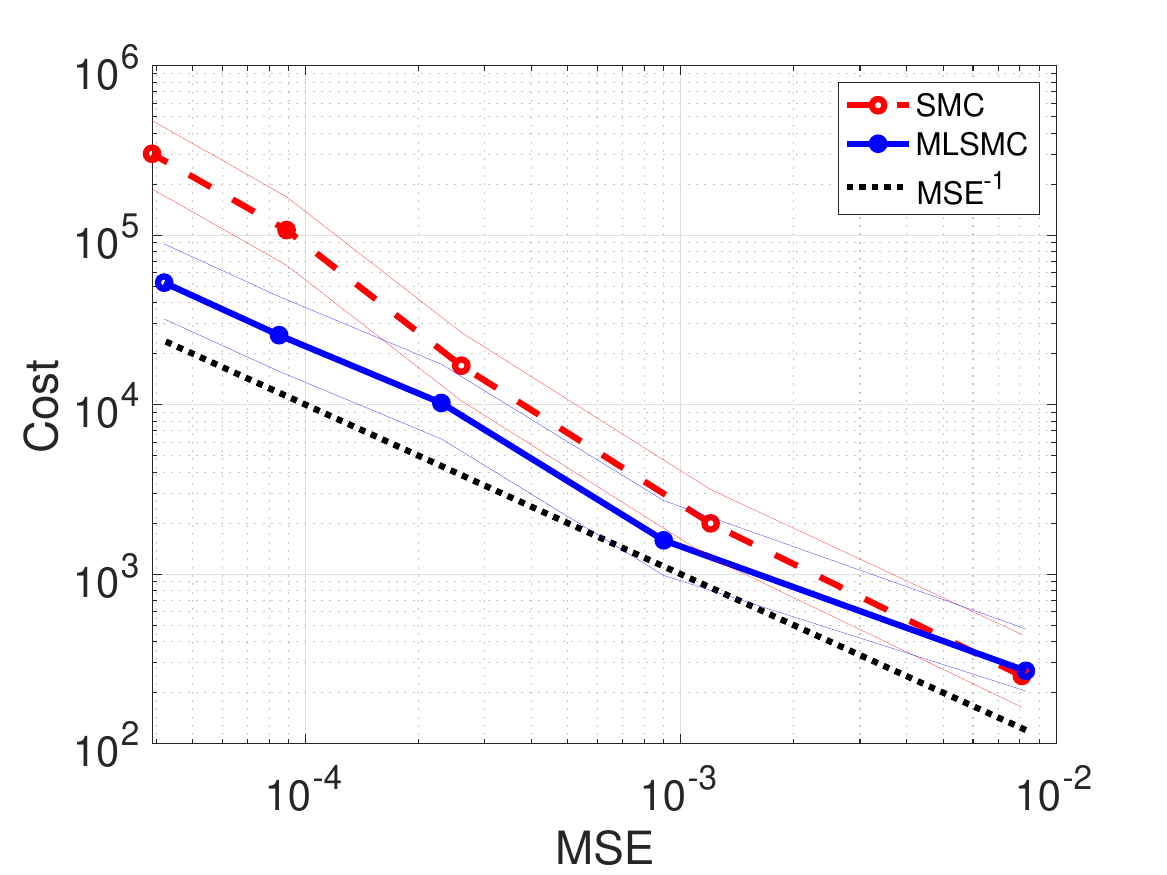}
\includegraphics[width=0.45\textwidth]{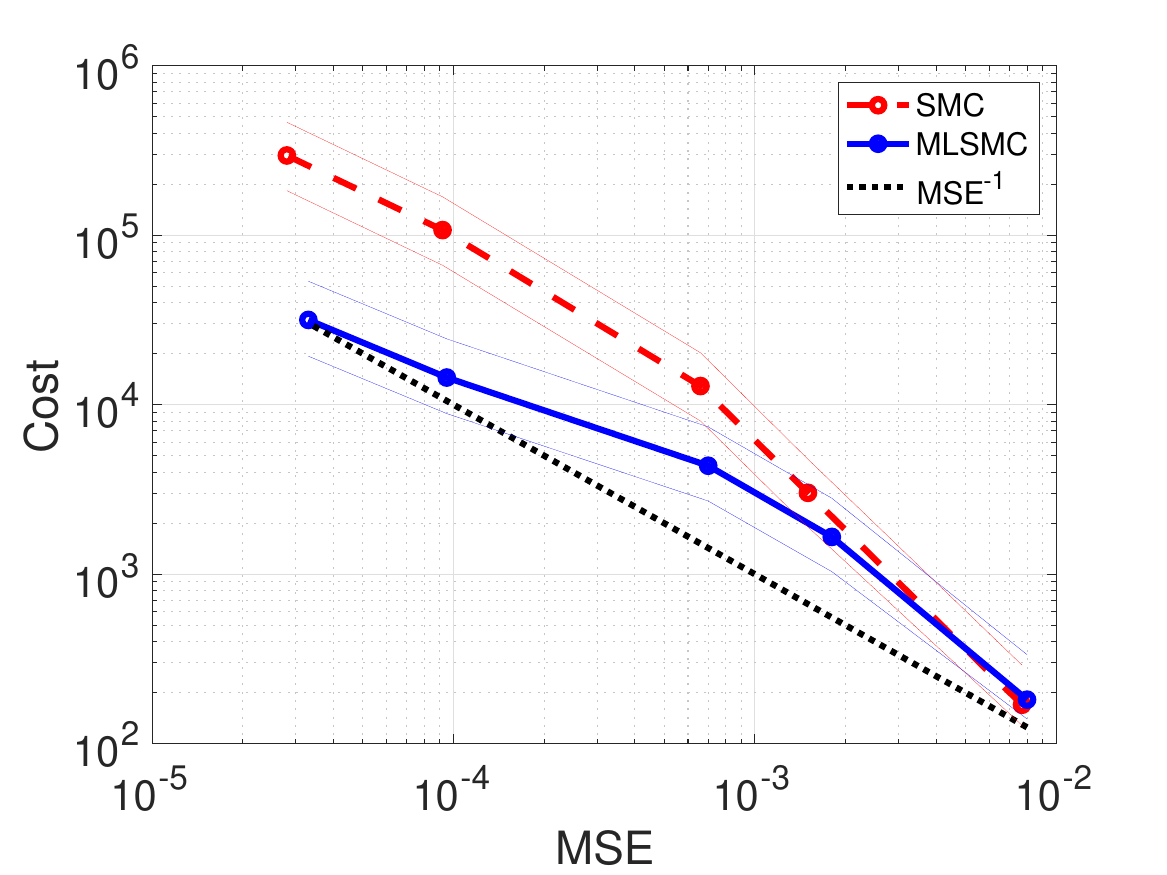}
\includegraphics[width=0.45\textwidth]{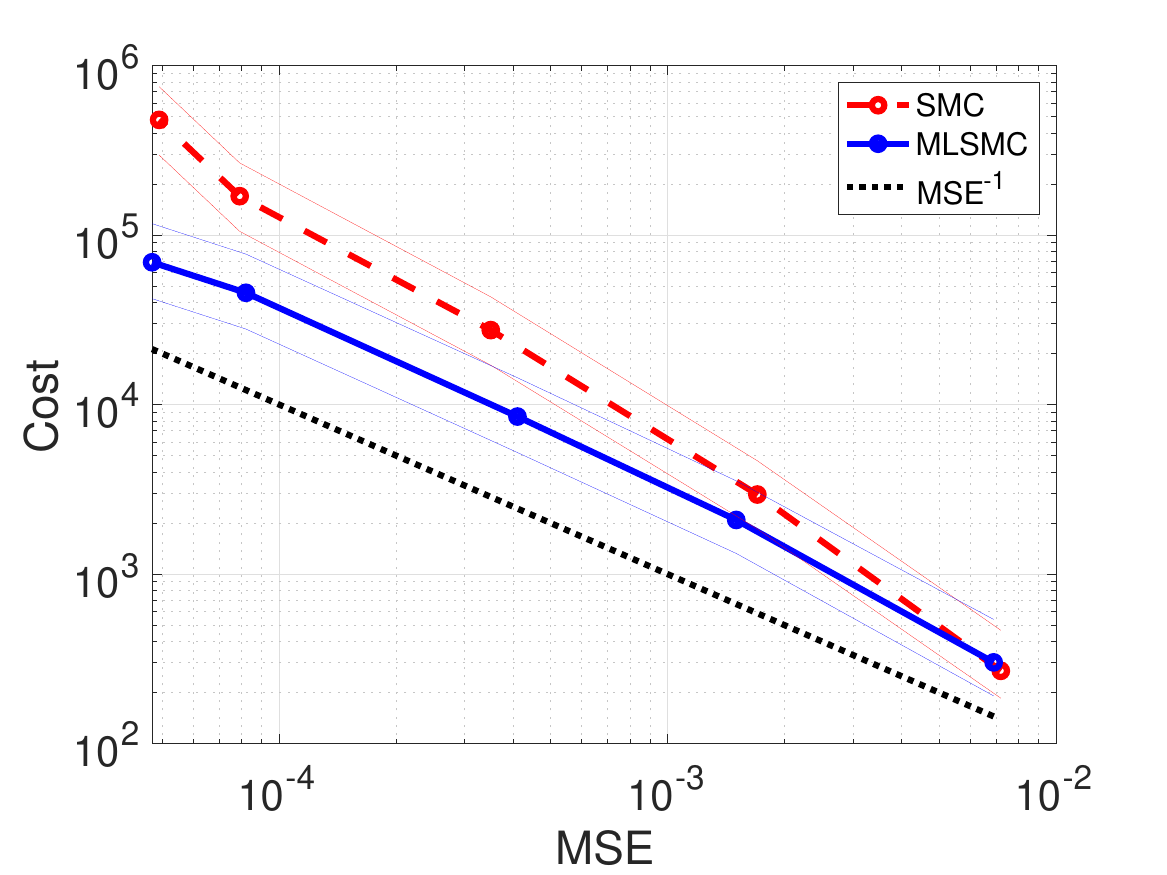}
 \caption{Reinforcement learning: error vs cost plots for SMC and MLSMC, using TNN priors.
 Top left: $\alpha=3$. Top right: $\alpha = 2.0$. Bottom left: $\alpha=1.9$. Bottom right: $\alpha=1.7$.
  Credible sets are provided in the thin blue and red curves.}
    \label{fig:reinf_results}
\end{figure}

\begin{figure}[h!]
\centering
\includegraphics[width=0.45\textwidth]{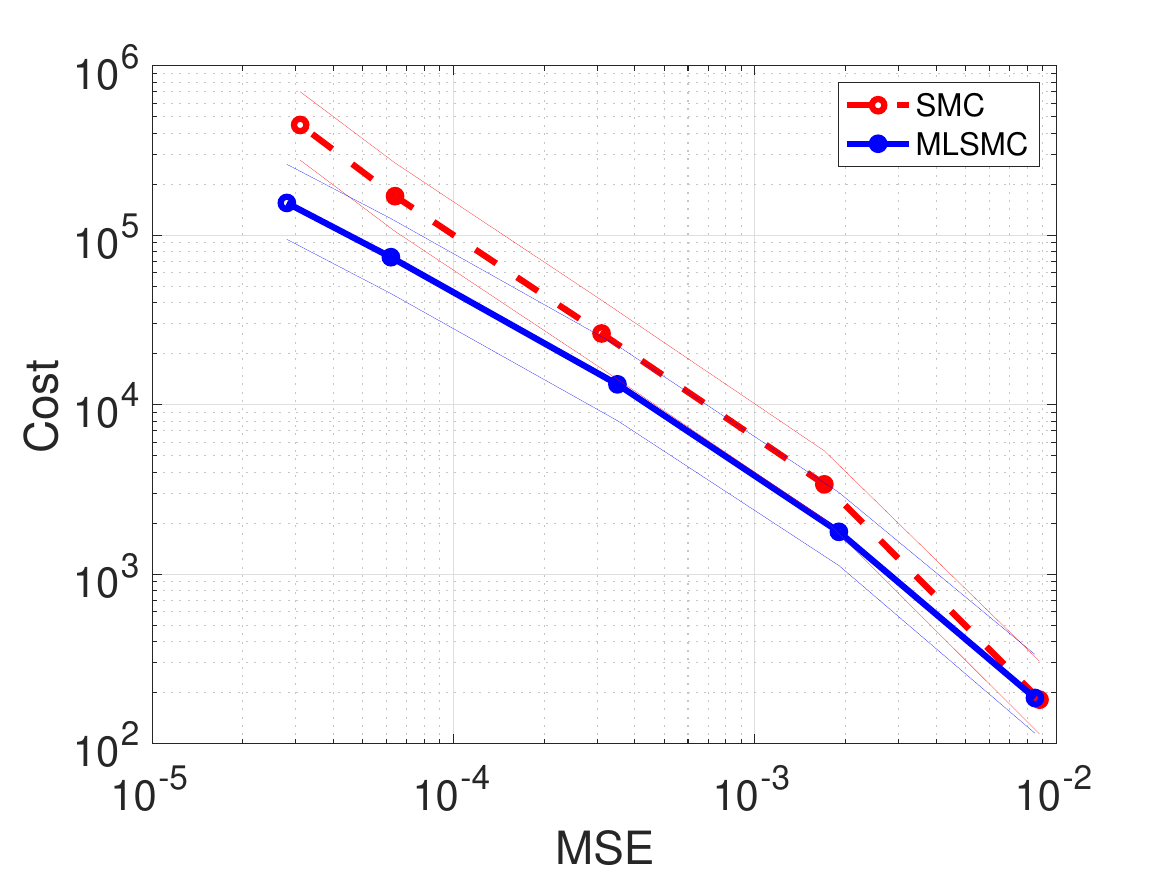}
\includegraphics[width=0.45\textwidth]{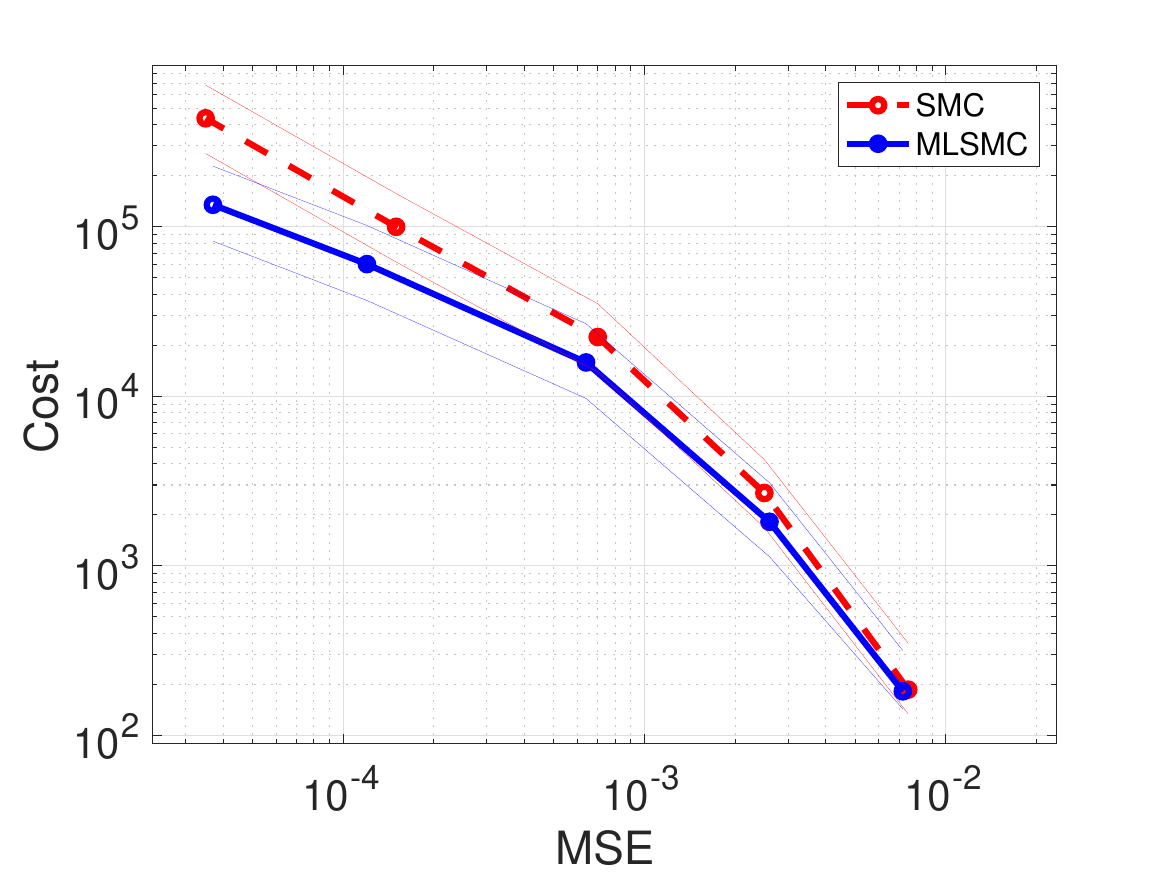}
 \caption{Reinforcement learning: error vs cost plots for SMC and MLSMC, using TNN priors.
Left: $\alpha=1.4$. Right: $\alpha=1.1$.  Credible sets are provided in the thin blue and red curves.}
    \label{fig:reinf_results2}
\end{figure}

\subsection{\textcolor{black}{Sensitivity Analysis}}

\textcolor{black}{In this subsection, we would like to conclude with a simple sensitivity analysis on various parameter choices.
Our analysis will be based on how the computational complexity, related to the MSE-to-Error rate is effected, 
as this is our main metric we are concerned with (as related to the MLMC literature).}
\textcolor{black}{
We modify a number of parameter choices such as the activation function, where we include both the ReLU and Tanh activation function $\sigma$,
as discussed in Section \ref{sec:tnn}. We also modify the dimension of our problem, and the parameter $\alpha$ which we have already tested,
but we compare again (assuming it remains TNN).  All these experiments are conducted on the MNIST
dataset example. Our final parameter we modify is the maximum value of $l$, which we denote as $L$. We use a base case to compare the effect
of modifying such parameters. Our results are presented in Table \ref{table:1}.}

\begin{table}[h!]
\centering 
\textcolor{black}{
\begin{tabular}{c c c c c c} 
\hline\hline 
Case & $\alpha$ & $d$ (dimension) & $\sigma$ & $L$  & MSE-Error Rate \\ [0.5ex] 
\hline 
1 & 4 & 100 & ReLU & 7  & $\mathcal{O}(\epsilon^{-2.04})$  \\ 
2 & 2 & 100 & ReLU & 7  &  $\mathcal{O}(\epsilon^{-2.02})$ \\ 
3 & 4 & 400 & ReLU & 7  &  $\mathcal{O}(\epsilon^{-2.07})$ \\ 
4 & 4 & 100 & tanh & 7  &   $\mathcal{O}(\epsilon^{-2.04})$\\ 
5 & 4 & 100 & ReLU & 3  &  $\mathcal{O}(\epsilon^{-2.34})$ \\ 
\hline \\ 
\end{tabular}}
\caption{\textcolor{black}{Sensitivity analysis for the complexity rates, based on various parameter choices.}} 
\label{table:1} 
\end{table}
\textcolor{black}{As we can observe from Table \ref{table:1}, we are comparing various parameter choices to the base setting we used for the MNIST dataset, which is Case 1. 
The canonical rate that we should obtain is $\mathcal{O}(\epsilon^{-2})$, which is what we see when modifying most of the parameters. This demonstrates the robustness of our methodology. 
From the different cases, it is clear that Case 5 is the worst, which differs through the choice of $L=3$. This is expected, as for small choices of $L$ the rate is expected to be worse. }

\begin{remark}[Parameter Tuning]
\textcolor{black}{It is important to note that first MCMC must be properly tuned within SMC. 
This can be done adaptively, and SMC is particularly convenient in this sense, because at each MCMC stage we have an ensemble of particles with approximately the correct distribution which are used to estimate moments for accelerating the MCMC. 
Then one needs to choose number of intermediate MCMC steps, and possibly between-level tempering schedule, which can both be done adaptively–the former by using an autocorrelation metric and the latter by targeting an effective sample size of say $N_l/2$. 
Finally, there is the question of how the MLMC works with different rates of convergence of the TNN, i.e. different $\alpha$.
 This is well understood and explored further in the examples.}
\end{remark}

\subsection{\textcolor{black}{Practical Applications}}

An obvious question is whether any of this is useful for practical applications.
To answer that question, we have executed two experiments, presented in the following subsections.

{\color{black}
\subsubsection{IMDb big dataset}

The first experiment is sentiment (binary) 
classification for the IMDb dataset of $50,000$
movie reviews\footnote{https://huggingface.co/datasets/stanfordnlp/imdb} \cite{maas2011}. 
The dataset is split into $N=25,000$ data for each of training and testing.
The first and last training examples, 
labelled with negative and positive sentiment respectively,
are given in Figure \ref{fig:imdb}.
Then we preprocess the reviews by embedding them into Euclidean space using SBERT
embeddings \cite{reimers-2019-sentence-bert},  
based on the models 
\texttt{all-mpnet-base-v2}
\footnote{https://huggingface.co/sentence-transformers/all-mpnet-base-v2}
 \cite{song2020mpnet}
(768 dimension) and 
\texttt{all-MiniLM-L6-v2}\footnote{https://huggingface.co/sentence-transformers/all-MiniLM-L6-v2} 
\cite{song2020mpnet}
(384 dimension).
In other words, frozen weights from \texttt{all-mpnet-base-v2}
(or \texttt{all-MiniLM-L6-v2}) model are used until the $768$ (or $384$) 
dimensional  \texttt{[CLS]} output, 
after which we have appended a Bayesian logistic regression model.
Since we only have a single layer, the parameter prior is isotropic 
$\overline{\pi}_1 = N(0,\sigma^2 I_{768})$, with $\sigma^2=0.1$.
%

%
%
%

\begin{figure}[h!]
  \centering
  \begin{subfigure}[b]{0.47\textwidth}
    \centering
    \includegraphics[width=\linewidth]{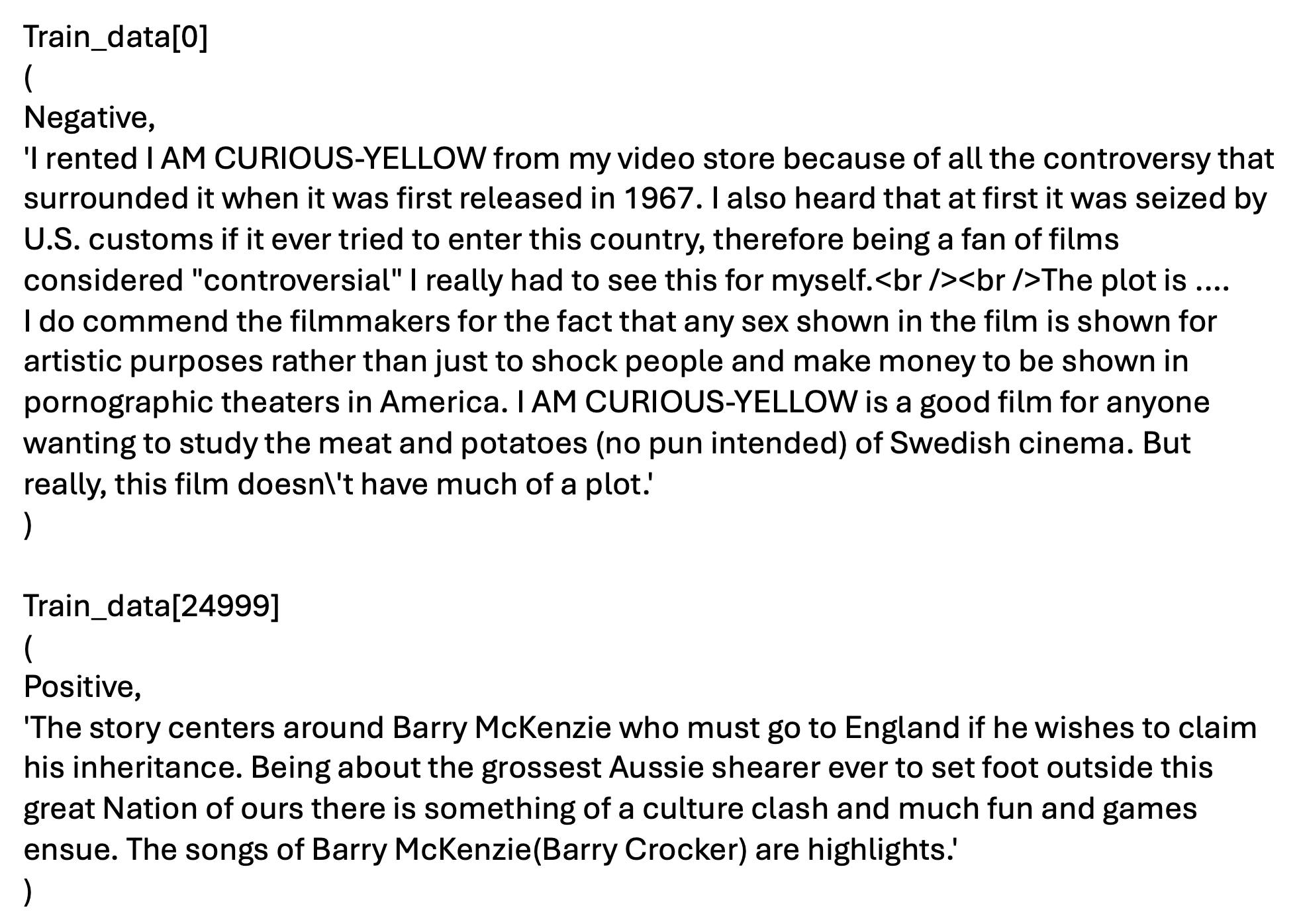}
    \caption{Example negative and positive reviews from the IMDb training dataset. The negative review is 288 words, and the middle is omitted (...).}
    \label{fig:imdb}
  \end{subfigure}
  \hfill
  \begin{subfigure}[b]{0.47\textwidth}
    \centering
    \includegraphics[width=\linewidth]{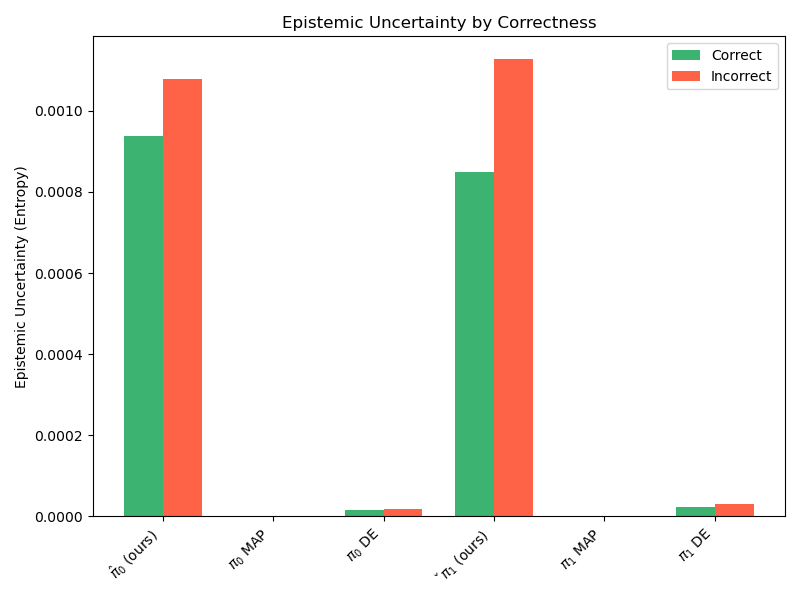}
    \caption{Average epistemic entropy on test data, split by correct and incorrect predictions.}
    \label{fig:entropy}
  \end{subfigure}
  \caption{(a) IMDb movie review dataset examples 
  and (b) entropy analysis.}
  \label{fig:combined}
\end{figure}

We refer to the coarse and fine posteriors as $\pi_0$ and $\pi_1$, respectively.
One expects them to correlate, so we conduct a canonical correlation analysis, as follows.
Let $X_1 \in \bbR^{N \times 768}$ and $X_0 \in \bbR^{N \times 384}$
denote the training embeddings. Let $M=X_1 X_0^{\sf T}$, and 
denote its SVD as 
$$
U \Sigma V^{\sf T} = M  \, .
$$
Projecting $X_1U_{:,:384}$ 
onto the 384 non-zero singular values captures  96\% of the variance,
so it is reasonable to correlate the two models by rotating into a common coordinate frame:
$X_1 \leftarrow X_1 U$ and $X_0 \leftarrow X_0 V$.

We use SMC with $P_0=30$ with kernel defined by 
20 steps of HMC (adaptive step-size and minimum trajectory length $\tau=0.03$) \cite{Buchholz2021Adaptive}
and adaptive tempering (keeping ESS$\geq P_0/2$) to build the estimator of 
$\eta_0=q_1$. 
The resulting posterior predictive probability estimator of a positive sentiment classification, 
\begin{equation}\label{eq:imdb-est0}
\hat{p}_0(x,\cD) = \widehat{\pi}_0(f_0(x,\cdot)) = 
 \eta_0^{P_0}[f_0(x,\cdot)] = \frac1{P_0}\sum_{i=1}^{P_0} f_0(x,\theta_0^i) \, ,
\end{equation}
approximates the true posterior predictive probability estimator at level $0$:
$$
\int_{\theta_0 \in \Theta_0} p_0(y=1 | x, \theta_0) \overline{\pi}_0( d\theta_0 | \cD) =
\int_{\theta_0 \in \Theta_0} f_0(x, \theta_0) \pi_0(d\theta_0 ) 
= \eta_0[f_0(x,\cdot)] (=\eta_1[f_0(x,\cdot)]) \, .
$$
Defining the usual hard threshold classier by 
\[
\hat{y}(x) = \begin{cases}
1, & \text{if } \hat{p}_0(x,\cD) > 0.5\, , \\
0, & \text{otherwise} \, .
\end{cases} \, 
\]
we achieve an accuracy of $80\%$ on the test data, where
$$
{\sf Accuracy}=\frac1{|\cD_{\sf test}|}\sum_{(x,y) \in \cD_{\sf test}} {\bf 1}_{\{\hat{y}(x)=y\}} \, .
$$
Clearly the cost to run the algorithm targeting $\pi_1$ is (at least) double, given that it has double the number of parameters.
We build the 2-level MLSMC estimator in two ways. Let
$\theta_{1,:384} = \theta_0^i$, 
and augment to $\theta_1^i=(\theta_0^i,\tilde{\theta}_1^i)$ 
with $\overline{\pi}_1(\tilde{\theta}_1) =  q_1(|\theta_0)$,
i.e. 
\begin{equation}\label{eq:imdb-extend}
\theta^i_{1,384:} = \tilde{\theta}_1^i \sim N(0,\sigma^2 I_{384}) \, .
\end{equation}
Let $P_1=P_0$, and as in \eqref{eq:cond_disc} define
$$
G_1(\theta_1) = \frac{p_1(y_{1:N} | \theta_1)}{p_0(y_{1:N} | \theta_0)} \, .
$$
The 2 estimators, each of which are almost free to 
build from $\widehat{\pi}_0$, are as follows.
\begin{itemize}
\item[(i)] The one-level simple self-normalized importance sampling estimator \eqref{eq:ml_est}:
\begin{equation}\label{eq:one-level}
\widehat{\pi}_1(f_1) = \frac{\eta_1^{P_1}(G_1 f_1)}{\eta_1^{P_1}(G_1)} \, .
\end{equation}
\item[(ii)] Let $\overline \theta_0 \sim K_0(\theta_0,\cdot)$ denote another mutation step with all $P_1$
particles, define 
$\overline{\eta}^{P_1}_1$ as the corresponding empirical measure 
extended to $\overline{\theta}_1$ by sampling from the prior as in \eqref{eq:imdb-extend}, 
and build the two-level version of \eqref{eq:ml_est} as
$$
\widehat{\overline{\pi}}_1(f_1) = \eta_0^{P_0}(f_0) + 
\underbrace{\left(\frac{\overline{\eta}_1^{P_1}(G_1 f_1)}{\overline{\eta}_1^{P_1}(G_1)} - \overline{\eta}_1^{P_1}(f_0)\right )}_{\sf increment}\, .
$$
This may seem slightly contrived, but notice that it is quite natural when $\eta_0$ itself comes from an SMC sampler,
since this is what happens in one stage of the algorithm:
\begin{itemize}
\item build an estimator of level $l$ from mutated samples at level $l-1$, 
\item construct the level $l$ increment using these samples.
\end{itemize}
If we attempted to build such a two-level estimator 
without an additional mutation then we would arrive with 
the one-level estimator \eqref{eq:one-level} again. 
\end{itemize}

The results on test data are presented in Table \ref{tab:imdb}.
Accuracy is defined above. 
The negative log-likelihood (NLL) of a posterior predictive estimator $\hat{p}$ is the following
$$
{\sf NLL} = -\sum_{(x,y) \in \cD_{\sf test}} y \log(\hat{p}(x)) + (1-y) \log(1-\hat{p}(x)) \, .
$$
Note that {\sf NLL}$/|\cD_{\sf test}|$, the test empirical risk, 
is an estimator of the {\em generalization error} (expected loss over the true data distribution).
The area under the receiver operating characteristic curve (AUC-ROC), 
which plots the true positive rate against the false positive rate over thresholds in $[0,1]$,
corresponds to the probability that 
a randomly chosen positive example is assigned a higher predictive score than a randomly chosen negative example.
\begin{table}[h]
\begingroup
\color{black}
\centering
\begin{tabular}{lccc}
\hline
\textbf{Estimator} & \textbf{Accuracy} & \textbf{NLL} & \textbf{AUC-ROC}  \\
\hline
$\widehat{\pi}_0$ (coarse estimator) & 0.7974 & 11100 & 0.8810  \\
$\widehat{\pi}_1$ (importance sampling) & 0.8660 & 8426 & 0.9391   \\
$\widehat{\overline{\pi}}_1$ ($\widehat{\pi}_0$+mutated coarse increment) & 0.8699 & 8178 & 0.9417  \\
\hline
\end{tabular}
\caption{Performance of different estimators.}
\label{tab:imdb}
\endgroup
\end{table}

\begin{remark}[Extensibility]
The setting of the example above does 
not precisely correspond to the construction in the paper, in the sense that it only
involves 2 levels and does not leverage the TNN.
This 
illustrates the generality of the approach.
A given application may not map immediately to the version of the approach 
presented in an obvious way.
Indeed, for the application above we originally tried a single hidden layer of growing width,
but realized that the improvements were invisible to the frequentist performance `metrics'
presented in Table \ref{tab:imdb}: they were essentially equivalent with width 0, i.e. 
Bayesian logistic regression. So we had to be more creative.
\end{remark}

\noindent {\bf Uncertainty quantification. }
At convergence, 
Bayesian methods deliver Bayes optimal estimators of quantities as shown in Table \ref{tab:imdb},
however 
the greatest benefit of Bayesian methods may lie in the ability to quantify uncertainty (UQ).
To that end, we propose to quantify the 
epistemic entropy, defined as follows 
\cite{hullermeier2021aleatoric,depeweg2018decomposition,shaker2020aleatoric,krause2025probabilistic}
\begin{eqnarray}
H_{\sf tot}(x) &=& - \hat{p}(x)\log (\hat{p}(x)) - (1-\hat{p}(x)) \log( 1- \hat{p}(x) ) \, , \\
H_{\sf al}(x) &=& \widehat{\pi} \left [ -f(x,\cdot )\log( f(x,\cdot) ) - (1- f(x,\cdot) ) \log( 1- f(x,\cdot) ) \right ] \, , \\
H_{\sf ep}(x) &=& H_{\sf tot}(x) - H_{\sf al}(x) \, .
\end{eqnarray}
Epistemic entropy quantifies uncertainty which can be reduced with more data,  
and is {\em only} captured by Bayesian methods.
It can be viewed as the mutual information between parameter and predictive posterior random variables for input $x$,
and as such is $0$ by definition for point estimators, which yield deterministic predictive estimators.
The average epistemic entropy over the test data, split by correct and incorrect predictions, 
are presented in Figure \ref{fig:entropy}. 
It is clear that this quantity is predictive of misclassifications,
and it is also clear that the epistemic entropy goes down for correct and up for incorrect predictions as the quality of the estimator improves,
which supports the use of high-quality Bayesian estimation for applications where reliability is especially important.
The results show clearly that our {\em consistent Bayesian} method significantly outperforms
the maximum a posteriori (MAP) estimator, as well as a deep ensemble \cite{lakshminarayanan2017simple} (DE) of 30 MAP estimators. 
It is noted that deep ensembles are considered to be the state of the art in scalable Bayesian deep learning, and have 
been shown to outperform other scalable approaches
\cite{lakshminarayanan2017simple,wilson2022evaluating,gustafsson2020evaluating}. 

For the present example the levels are too far apart to provide a significant 
effective sample size, so we introduce another SMC sampler to bring in the new dimensions
one batch at a time.
With only 20 additional steps, we achieve a full effective sample size for 
the approximation of $\pi_1$, which we denote $\check{\pi}_1$
(note that the cost is still lower than targeting $\pi_1$ directly).
This example further illustrates that the methodology in its most elegant form may need 
to be massaged for practical applications, but remains a valuable ideal target.

\subsubsection{Return of MNIST, digits $0,\dots,7$}

\begin{table}[h!]
\begingroup
\color{black}
\centering
\begin{tabular}{lccccc}
\hline
\textbf{Estimator} & \textbf{Accuracy} & \textbf{AUC-ROC}  & \textbf{$H_{\sf ep}$ (ID)} & \textbf{$H_{\sf ep}$ (OOD)}\  \\
\hline
MAP &  0.9823 & 0.8763 & 0 & 0 \\
DE &  0.9845 & 0.9195  &  0.073 &  0.3350  \\
SMC & 0.9842 & 0.9317 & 0.124 &  1.7831  \\
MLSMC & 0.9837 & 0.9406 & 0.144 & 1.8292  \\
\hline
\end{tabular}
\caption{Performance of different estimators for MNIST dataset.}
\label{tab:mnist}
\endgroup
\end{table}
We conclude this section with an example where we filter {\em the whole MNIST dataset} 
to digits $0,\dots,7$. 
We compare MAP, DE, SMC, and 7-level MLSMC (setting is the same as in Section \ref{sec:mnist}), 
on a range of classical `metrics' which include: 
accuracy, AUC-ROC, and average $H_{\sf ep}$, as described in the previous subsection. 
For the latter, we also consider 
a sample of out-of-domain (OOD) 8,9 digits.
Table \ref{tab:mnist} provides the comparison.
Accuracy all methods are similar. 
However for the AUC-ROC and the entropy, we start to notice an improvement which is what we expect. This is similar
to the results for the IMDB dataset.



}

\section{Conclusion}\label{sec:conc}
The development of machine learning methodologies is of high relevance now, due to the availability of data and modern advanced algorithms.
In this work we considered the application of multilevel Monte Carlo (MLMC) for various Bayesian machine learning problems, 
leveraging trace-class neural network (TNN) priors. These priors had been previously used on a range of Bayesian inference tasks, where now we combined
this with the methodology of multilevel sequential Monte Carlo (MLSMC) samplers, which is based on MLMC. In particular we motivated the use of such
priors in a multilevel setting, where we were able to firstly prove that one does attain the canonical Monte Carlo rate, 
unlike others priors based on neural network methodologies, but also providing
a bound on mean square error (MSE), using the methodology described above reducing the MSE-to-cost rate. 
Numerical experiments were conducted on a range of common machine learning
problems, which includes regression, classification and reinforcement learning, where we were able to verify the reduction in computational cost to achieve a particular order of MSE.

For future considerations of work, one natural direction is to see if one can extend this work to multi-index Monte Carlo \cite{HNT16}, which has shown to gain efficiency over
MLMC methods. This requires more, especially related to choosing the optimal set, but can be viewed as a natural extension. Another direction is to consider
other applications beyond this work, such as clustering. In a Bayesian context, such popular methods would be the likes of Bayesian hierarchical clustering \cite{HG05,JHS05} which
is related to mixture models. One could also exploit more advanced Monte Carlo proposal, based on gradient information, which could enhance the performance,
such as related to the reinforcement learning example. Such examples could include, Metropolis-Hastings adjusted algorithm, or Hamiltonian Monte Carlo. This examples, and others,
will be conducted for future work. As eluded to in the numerical experiments, one could consider alternative ways in which TNN priors, for a wider range of $\alpha \in \mathbb{R}$,
could be analyzed where one attains a canonical rate of convergence. 
It would be of interest to derive a full complexity analysis, for the application of MLMC to deep neural
approximations of data-driven models \cite{GHR21,LMM21}.

\textcolor{black}{
Deep learning is often applied to unstructured data with very high input dimension, which can be extremely computationally intensive and go 
beyond the cases considered here. 
As an example, the work \cite{post} leverages 100s of \textcolor{black}{tensor processing units 
(Google's proprietary alternative to graphics processing unit)} to apply MCMC to CIFAR-10 and IMDb datasets. 
Our approach is more efficient, but we do not have access to such resources.
We do however present a simplified version of our method for the latter dataset,
{\em which we run on a laptop}, and we achieve a comparable accuracy.
Our approach is also amenable to parallelization \cite{liang}, which can help to leverage supercomputers even more effectively.
It may also be useful to modify the depth of the network at the same time as the width, and find an optimal balance there. 
For modern CNNs and LLMs \cite{newbishop} the TNN architecture will need to be modified to impose decay more cleverly. 
Candidates for CNN include kernel width (up-to input dimension) and number of channels.
Candidates for LLMs include (i) embedding dimension, (ii) number of heads in the multi-head attention layers,
and (iii) width of the hidden layer of the feedforward layers.
Beyond that, data can be split into levels hierarchically \cite{unbiased_giro}.
Following from this, an exciting final direction, related to the points above, 
is to apply the method to more advanced practical applications, since the value of a method lies in its ultimate impact. 
The purpose of this work 
is only to introduce 
a new methodology which is achievable through the combination of different mathematical entities.}

\section*{Acknowledgement}
AJ was supported by CUHK-SZ UDF01003537. NKC is also supported by a City University
of Hong Kong Start-up Grant, project number 7200809. SSS holds the Tibra Foundation professorial chair and gratefully acknowledges research
funding as follows: This material is based upon work supported by the Air Force Office of Scientific
Research under award number FA2386-23-1-4100.

\appendix

\section{Proofs for Proposition \ref{prop:main_res}}\label{app:proofs}

The proof of Proposition \ref{prop:main_res} essentially follows that of \cite[Theorem 3.1]{beskos2017multilevel}, except there are some modifications required.
We mainly provide details of these additional calculations, but we remark that to fully understand the proof, one must have read  \cite{beskos2017multilevel}.
{\color{black}We recall that $\eta_1 = q_1$ and that for any $l\in\{2,\dots,L\}$ one can establish that
$$
\eta_l(d\theta_l) = \frac{\int_{\Theta_1\times\cdots\times\Theta_{l-1}} \{\prod_{p=1}^{l-1}G_p(\theta_p)\}q_1(\theta_1)\prod_{p=1}^{l-1}M_p(u_{p-1},du_p)d\theta_1 }{\int_{\Theta_1\times\cdots\times\Theta_{l}} \{\prod_{p=1}^{l-1}G_p(\theta_p)\}q_1(\theta_1)\prod_{p=1}^{l-1}M_p(u_{p-1},du_p)d\theta_1}.
$$}
{\color{black}One can show that for any $l\in\{1,\dots,L\}$ and any measurable $\varphi_{l}:\textcolor{black}{\Theta}_l\rightarrow\mathbb{R}$ that is $\pi_{l}-$integrable that
$$
\pi_{l}(\varphi_{l}) = \int_{\Theta_{l+1}}\varphi(\theta_{l})\eta_{l+1}(d\theta_{l+1})
$$
for $l\in\{1,\dots,L-1\}$. Note that we need only approximate expectations associated to $\pi_l$, $l\in\{1,\dots,L-1\}$
using our multilevel identity.}
From herein $C$ is a finite constant whose value may change on each appearance, but, does not depend upon $l$.
We will also make use of the $C_p$ inequality. For two real-valued random variables
$X$ and $Y$ defined on the same probability space, with expectation operator $\mathbb{E}$, suppose that for some fixed $p\in(0,\infty)$, $\mathbb{E}[|X|^p]$
and $\mathbb{E}[|Y|^p]$ are finite, then the $C_p-$inequality is
\begin{equation}
\label{eq:cp}
\mathbb{E}[|X+Y|^p] \leq C_p\Big(\mathbb{E}[|X|^p]+ \mathbb{E}[|Y|^p]\Big),
\end{equation}
where $C_p=1$, if $p\in(0,1)$ and $C_p=2^{p-1}$ for $p\in[1,\infty)$.

\begin{proof}[Proof of Proposition \ref{prop:main_res}]
For any $l\in\{2,3,\dots,L\}$, we have the decomposition{\color{black}
$$
\frac{\eta_{l}^{P_{l}}(G_{l}f_l)}{\eta_{l}^{P_{l}}(G_{l})}-\eta_{l}^{P_{l}}(f_{l-1}) - \left\{
\frac{\eta_{l}(G_{l}f_l)}{\eta_{l}(G_{l})}-\eta_{l}(f_{l-1})
\right\}= \sum_{j=1}^3 T_j^{P_{l}},
$$
where
\begin{eqnarray*}
T_1^{P_{l}} & = & -\frac{\eta_{l}^{P_{l}}(G_{l}f_l)}{\eta_{l}^{P_{l}}(G_{l})}[\eta_{l}^{P_{l}}-\eta_{l}]\left(\frac{Z_l}{Z_{l-1}}G_{l}-1\right),\\
T_2^{P_{l}} & = & [\eta_{l}^{P_{l}}-\eta_{l}]\left(f_l\left\{\frac{Z_l}{Z_{l-1}}G_{l}-1\right\}\right),\\
T_3^{P_{l}} & = & [\eta_{l}^{P_{l}}-\eta_{l}](f_l-f_{l-1}).
\end{eqnarray*}}
As a result, by using the $C_2-$inequality, from \eqref{eq:cp} we can consider{\color{black}
\begin{equation}\label{eq:main_eq}
\mathbb{E}[(\widehat{\pi}_L(f_L)-\pi_L(f_L))^2] \leq 
C\Big(
\mathbb{E}\Big[\Big(\frac{\eta_{1}^{P_{1}}(G_{1}f_1)}{\eta_{1}^{P_{1}}(G_{1})}-\frac{\eta_{1}(G_{1}f_1)}{\eta_{1}(G_{1})}\Big)^2\Big]
+ \sum_{j=1}^3 \mathbb{E}[(\sum_{l=2}^LT_j^{P_{l}})^2]
\Big).
\end{equation}}
 For the first-term on the R.H.S.~of \eqref{eq:main_eq} by standard results (see e.g.~\cite[Lemma A.3.]{beskos2017multilevel})
we have {\color{black}
$$
\mathbb{E}\Big[\Big(\frac{\eta_{1}^{P_{1}}(G_{1}f_1)}{\eta_{1}^{P_{1}}(G_{1})}-\frac{\eta_{1}(G_{1}f_1)}{\eta_{1}(G_{1})}\Big)^2\Big] \leq \frac{C}{P_1}.
$$}
For the summands on the R.H.S.~of \eqref{eq:main_eq} we can apply Remark \ref{rem:main_rem} to conclude the proof.
\end{proof}

To give our technical results, we require some notations. To connect with the appendix in \cite{beskos2017multilevel}, we use the same subscript
conventions (i.e.~$n,p$ instead of $l$). Let $p\in\mathbb{N}$ then {\color{black}$Q_{p}(\theta,d\theta') = G_p(\theta)M_{p}(\theta,d\theta')$ and for $1\leq p \leq n\in\mathbb{N}$ we
set for $\theta_p\in\Theta_p$ and $\varphi:\Theta_n\rightarrow\mathbb{R}$ bounded and measurable (write the collection of such functions as $\mathsf{B}_b(\Theta_n)$)
$$
Q_{p,n}(\varphi)(\theta_p) := \int_{\Theta_{p+1}\times\cdots\times\Theta_n} \varphi(\theta_n) \prod_{q=p}^{n-1} Q_q(\theta_{q},d\theta_{q+1}),
$$
with $Q_{n,n}$ the identity operator.
Then for $(\theta_p,\varphi_n)\in\Theta_p\times\mathsf{B}_b(\Theta_n)$ set
$$
D_{p,n}(\varphi_n)(\theta_p) = \frac{Q_{p,n}(\varphi_n-\eta_n(\varphi_n))(\theta_p)}{\eta_p(Q_{p,n}(1))},
$$
with $D_{n,n}(\varphi_n)(\theta_n) =\varphi_n(\theta_n)-\eta_n(\varphi_n)$. Then we make the definitions, with $\mu$ a probability measure on $\Theta_{p-1}$,
$(\varphi_p,\varphi_n)\in\mathsf{B}_b(\Theta_p)\times \mathsf{B}_b(\Theta_n)$, $\Phi_1(\eta_{0}^{P_{0}})(\varphi_1)=\int_{\Theta_1}\varphi_1(\theta_1)q_1(\theta_1)d\theta_1$ 
\begin{eqnarray*}
\Phi_p(\mu)(\varphi) & = & \frac{\mu(G_{p-1}M_{p-1}(\varphi_{p}))}{\mu(G_{p-1})}, \quad p\geq 2\\
V_p^{P_p}(\varphi_p) & = & \sqrt{P_p}[\eta_p^{P_p}-\Phi_p(\eta_{p-1}^{P_{p-1}})](\varphi_p),\\
R_{p+1}^{P_p}(D_{p,n}(\varphi_n)) & = & \frac{\eta_p^{P_p}(D_{p,n}(\varphi_n))}{\eta_p^{P_p}(G_p)}[\eta_p-\eta_{p}^{P_p}](G_p).
\end{eqnarray*}
The following decomposition is then well-known (see \cite{beskos2017multilevel}) for $\varphi_n\in\mathsf{B}_b(\Theta_n)$
$$
[\eta_n^{P_n} - \eta_n](\varphi_n) = \sum_{p=1}^n \frac{V_p^{P_p}(D_{p,n}(\varphi_p))}{\sqrt{P_p}} + \sum_{p=1}^{n-1} R_{p+1}^{P_p}(D_{p,n}(\varphi_n)).
$$}
Given the terms in \eqref{eq:main_eq} it is then necessary to deal with the decomposition above. This has largely been done in the proof of Theorem 3.1.~in \cite{beskos2017multilevel}. However, in order to use the proof there, one must provide an appropriate adaptation of \cite[Lemma A.1.~(i)-(iii)]{{beskos2017multilevel}}
and this is the subject of the following result. Below, for a scalar random variable $Z$, we use the notation $\|Z\|_r=\mathbb{E}[|Z|^r]^{1/r}$.

\begin{lemma}\label{lem:tech_lem}
Assume (A\ref{ass:1}). Then there exists a $C<\infty$, possibly depending upon $r$ in (A\ref{ass:1}) 5.,  and $\zeta\in(0,1)$ such that for any $1\leq p \leq n$, \\ $\varphi_n\in\left\{f_{n}-f_{n-1}, \tfrac{Z_{n}}{Z_{n-1}}G_n-1,
f_{n}\left(\tfrac{Z_{n}}{Z_{n-1}}G_n-1\right)\right\}$ and $\beta$ as in (A\ref{ass:1}) 3.){\color{black}
\begin{enumerate}
\item{$\sup_{\theta_p\in\Theta_p}|D_{p,n}(\varphi_n)(\theta_p)|\leq C\zeta^{n-p}n_n^{-\beta/2}$, ($1\leq p <n$).}
\item{$\|V_p(D_{p,n}(\varphi_n)\|_r\leq C\zeta^{n-p}n_n^{-\beta/2}$, ($r$ as in (A\ref{ass:1}) 5.).}
\item{$\|R_{p+1}^{P_p}(D_{p,n}(\varphi_n)\|_r\leq C\zeta^{n-p}n_n^{-\beta/2}$, ($0\leq p <n$, $r$ as in (A\ref{ass:1}) 5.).}
\end{enumerate}}
\end{lemma}

\begin{proof}
Throughout the proof, we only consider the case $\varphi_n = \tfrac{Z_{n}}{Z_{n-1}}G_n-1$. The other cases can be dealt with in a similar manner.
We start with 1.~and noting that{\color{black}
 \begin{equation}\label{eq:prf3}
\frac{Z_{n}}{Z_{n-1}}G_n-1 = \frac{Z_{n}}{Z_{n-1}}(G_n-1) + \frac{Z_{n}}{{Z_{n-1}}} - 1.
\end{equation}
Therefore
$$
D_{p,n}(\varphi_n)(\theta_p) = \frac{Z_{n}}{Z_{n-1}}D_{p,n}(G_n-1)(\theta_p) \leq C|D_{p,n}(G_n-1)(\theta_p)|,
$$
so we need only work with $D_{p,n}(G_n-1)(\theta_p)$. Now, we note that
\begin{eqnarray*}
D_{p,n}(G_n-1)(\theta_p) & = &  \frac{\eta_p(Q_{p,n-1}(1))}{\eta_p(Q_{p,n}(1))}
\Bigg\{D_{p,n-1}(Q_{n-1}(G_n-1))(\theta_p) - \\ & & 
\frac{\eta_{n-1}(G_{n-1}M_{n-1}(G_n-1))}{\eta_{n-1}(G_{n-1})}D_{p,n-1}(G_{n-1})(\theta_p)\Bigg\}.
\end{eqnarray*}
Now by using (A\ref{ass:1}) 1.~\& 2.~we have
\begin{equation}\label{eq:prf1}
\frac{\eta_p(Q_{p,n-1}(1))}{\eta_p(Q_{p,n}(1))}
\leq C.
\end{equation}
 In addition
 $$
 M_{n-1}(G_n-1)(\theta_{n-1}) = \int_{\Theta_n} \left(\frac{p_{n}(y_{1:N}|\theta_n')}{p_{n}(y_{1:N}|\theta_{n-1}')} - 1\right)M_{n-1}(\theta_{n-1},d\theta_n').
 $$
Now by using (A\ref{ass:1}) 1.~followed by 
 (A\ref{ass:1}) 4.~and then  (A\ref{ass:1}) 5.
 \begin{equation}\label{eq:prf2}
 |M_{n-1}(G_n-1)(\theta_{n-1})| \leq C \int_{\Theta_n} |f_{n}(x,\theta_n')-f_{n-1}(x,\theta_{n-1}')|M_{n-1}(\theta_{n-1},d\theta_n') \leq Cn_n^{-\beta/2}.
 \end{equation}
 Therefore, we have 
\begin{eqnarray*}
D_{p,n}(G_n-1)(\theta_p) & = & \frac{\eta_p(Q_{p,n-1}(1))}{\eta_p(Q_{p,n}(1))}
\Bigg\{
D_{p,n-1}\left(\frac{Q_{n-1}(G_n-1)}{\|Q_{n-1}(G_n-1)\|_{\infty}}\right)(\theta_p)
\|Q_{n-1}(G_n-1)\|_{\infty}- \\ & &
\frac{\eta_{n-1}(G_{n-1}M_{n-1}(G_n-1))}{\eta_{n-1}(G_{n-1})}D_{p,n-1}(G_{n-1})(\theta_p)
\Bigg\}
\end{eqnarray*}
where for any $\varphi_p\in\mathsf{B}_b(\Theta_p)$, $\|\varphi_p\|_{\infty}=\sup_{\theta_p\in\Theta_p}|\varphi_p(\theta_p)|$. Application of \eqref{eq:prf1} and
\cite[Lemma A.1.~(i)]{beskos2017multilevel} yields
$$
|D_{p,n}(G_n-1)(\theta_p)| \leq C\zeta^{n-p}\left(\|Q_n\{G_n-1\}\|_{\infty}+\|G_{n-1}\|_{\infty}\right).
$$
Then using (A\ref{ass:1}) 1.~and \eqref{eq:prf2} yields
$$
\sup_{\theta_p\in\Theta_p}|D_{p,n}(\varphi_n)(\theta_p)|\leq C\zeta^{n-p}n_n^{-\beta/2}.
$$
}
For the proof of 2.~the case {\color{black}$1\leq p <n$} follows immediately from 1.~and the proof in \cite[Lemma A.1.~(ii)]{beskos2017multilevel}. So we need only consider $n=p$, which
reads
$$
\sqrt{P_n}\mathbb{E}[|[\eta_n^{P_n}-\Phi_n(\eta_{n-1}^{P_{n-1}})](\varphi_n)|^r]^{1/r},
$$
and then using \eqref{eq:prf3} we need only consider
$$
\sqrt{P_n}\mathbb{E}[|[\eta_n^{P_n}-\Phi_n(\eta_{n-1}^{P_{n-1}})](G_n-1)|^r]^{1/r}.
$$
Using the conditional Marcinkiewicz-Zygmund inequality gives the upper-bound
$$
\|V_n(D_{n,n}(\varphi_n)\|_r\leq C\mathbb{E}[|G_n(\theta_n^1)-1|^r]^{1/r}.
$$
Taking conditional expectations w.r.t.~{\color{black}$M_{n-1}$} and using \eqref{eq:prf2} yields
$$
\|V_n(D_{n,n}(\varphi_n)\|_r\leq Cn_1^{-\beta/2},
$$
which is the desired result.

For the proof of 3.~this follows immediately from 1.~and the proof in \cite[Lemma A.1.~(iii)]{beskos2017multilevel}.
\end{proof}

\begin{remark}\label{rem:main_rem}
Given the results in Lemma \ref{lem:tech_lem}, one can follow the proofs of \cite[Theorem 3.1.]{beskos2017multilevel} to deduce that for any $j\in\{1,2,3\}${\color{black}
$$
\mathbb{E}[(\sum_{l=2}^LT_j^{P_{l}})^2] \leq C\left(\sum_{l=2}^L \frac{1}{P_{l}n_l^{\beta}}+ \sum_{2\leq l<q\leq L} \frac{1}{(n_ln_q)^{\beta/2}}
\left\{\frac{\zeta^{q-1}}{P_{l}} + \frac{1}{P_{l}^{1/2}P_{q}}\right\}\right),
$$}
with the notations as in the statement of Proposition \ref{prop:main_res}. 
\end{remark}

\bibliographystyle{siamplain}

\end{document}